\newtheorem{definition}{Definition}
\newtheorem{assumption}{Assumption}
\newtheorem{problem}{Problem}
\title{On Efficient Adjustment for Micro Causal Effects in Summary Causal Graphs}
\author[1]{Isabela~Belciug}
\author[1]{Simon~Ferreira}
\author[1]{Charles~K.~Assaad}
\affil[1]{Sorbonne Université, INSERM, Institut Pierre Louis d’Epidémiologie et de Santé Publique, F75012, Paris, France}
\date{}
\begin{document}
\maketitle

\begin{abstract}
Observational studies in fields such as epidemiology often rely on covariate adjustment to estimate causal effects. Classical graphical criteria, like the back-door criterion and the generalized adjustment criterion, are powerful tools for identifying valid adjustment sets in directed acyclic graphs (DAGs). However, these criteria are not directly applicable to summary causal graphs (SCGs), which are abstractions of DAGs commonly used in dynamic systems. In SCGs, each node typically represents an entire time series and may involve cycles, making classical criteria  inapplicable for identifying causal effects.
Recent work established complete conditions for determining whether the micro causal effect of a treatment or an exposure $X_{t-\gamma}$ on an outcome $Y_t$ is identifiable via covariate adjustment in SCGs, under the assumption of no hidden confounding. However, these identifiability conditions have two main limitations. First, they are complex, relying on cumbersome definitions and requiring the enumeration of multiple paths in the SCG, which can be computationally expensive. Second, when these conditions are satisfied, they only provide two valid adjustment sets, limiting flexibility in practical applications.
In this paper, we propose an equivalent but simpler formulation of those identifiability conditions and introduce a new criterion that identifies a broader class of valid adjustment sets in SCGs. Additionally, we characterize the quasi-optimal adjustment set among these, i.e., the one that minimizes the asymptotic variance of the causal effect estimator. Our contributions offer both theoretical advancement and practical tools for more flexible and efficient causal inference in abstracted causal graphs.
\end{abstract}

\section{Introduction}
Understanding causal relationships is a central goal in many scientific disciplines, such as epidemiology~\citep{Hernan_2023,Glemain_2025}, economics~\citep{Hunermund_2023}, climate science~\cite{Runge_2019}, and industry~\citep{Assaad_2023}. Causal inference from observational data, in particular, poses fundamental challenges due to the presence of confounding bias. 
A widely used framework for representing and reasoning about causal relationships is based on Structural Causal Models (SCM)~\citep{Pearl_2000}. An SCM is a set of structural equations that define how each observed variable is generated from its causes, thereby specifying a causal data-generating process. 
Usually, the SCM is supposed to be unknown, but the Directed Acyclic Graph (DAG) it induces\textemdash representing the qualitative causal structure among variables\textemdash is known.
In the induced DAG, nodes represent variables and directed edges denote direct causal influences. These graphs provide a powerful language for articulating causal assumptions and determining whether a given causal effect is identifiable from observational data. 
There exist several tools for identifying (i.e., removing confounding bias without adding new biases) a causal effect of interest in DAGs among which the do-calculus\textemdash a set of three rules that, when applied iteratively,\textemdash form a sound and complete tool for causal effect identification in non-parametric settings. However, while powerful in theory, the do-calculus~\citep{Pearl_2000} can be highly complex to apply in practice today. Even when identification is possible, the resulting formulas (e.g., the front-door formula~\citep{Pearl_2000} ~\citep{Piccininni_2023}) may not be compatible with standard estimators such as ordinary least squares regression or propensity score methods. As a result, researchers often prefer simpler graphical criteria like the back-door criterion~\citep{Pearl_2000} or the generalized adjustment criterion~\citep{Perkovic_2016}, which offer intuitive graphical tests to identify  causal effects by covariate adjustment which is compatible with standard estimators.

However, in complex systems with high-dimensional data or partial observability, specifying the DAG is often infeasible or undesirable due to uncertainty, noise, or computational cost.
To address these challenges, researchers have proposed various abstractions of DAGs that simplify or summarize the underlying causal structure~\citep{Perkovic_2016,Beckers_2019,Anand_2023,Assaad_2023,Assaad_2024,Ferreira_2025}. These abstractions are particularly valuable in fields like epidemiology, where large-scale longitudinal studies often involve repeated measurements over time and many partially observed variables. One such abstraction is the cluster graph~\citep{Anand_2023,Assaad_2023,Assaad_2024,Ferreira_2025}, which groups variables into clusters. A notable and particularly useful instance of this is the summary causal graph (SCG)~\citep{Assaad_2023,Assaad_2024,Ferreira_2025}, where each node represents a time series. SCGs are especially well-suited for applications involving time series or cohort data, where variables are naturally organized over time. These graphs preserve causal relations between time series while abstracting away the time.
While adjustment criteria for causal inference have been well-studied in the context of DAGs\textemdash notably the back-door criterion~\citep{Pearl_1993BackDoor} and its extensions to the generalized adjustment criterion~\citep{Perkovic_2015,Perkovic_2016}\textemdash the theory of adjustment in SCGs remains comparatively underdeveloped. SCGs pose two additional challenges: they may contain cycles, and a single node can represent an entire cluster of variables, both of which complicate the identification of valid adjustment sets.

In recent years, a significant advancement was made for identifying causal effects in SCGs \cite{Assaad_2023,Assaad_2024,Ferreira_2025}. In particular, \cite{Assaad_2024} provided sound complete~\citep{Yvernes_2025} graphical conditions for identifying by covariate adjustment the causal effect of one timepoint $X_{t-\gamma}$ on another timepoint $Y_t$ \textemdash known as a micro causal effect and denoted as $\probac{y_t}{\interv{x_{t-\gamma}}}$~\citep{Ferreira_2025}
\textemdash in SCGs under the assumption of no hidden confounding. However, in \cite{Assaad_2024} only two valid unpracticable adjustment sets were presented for any identifiable micro causal effect. This stands in contrast to the tools developed for DAGs\textemdash like the back-door criterion\textemdash which gives, whenever a causal effect is identifiable by adjustment, multiple valid adjustment sets, each of which is theoretically correct but may differ significantly in practical properties such as statistical efficiency.
In particular, while all valid sets remove confounding bias, they can vary widely in terms of how they impact the variance of the estimated effect~\cite{Rotnitzky_2020,Witte_2020,Smucler_2021,Runge_2021,Henckel_2022}. Therefore, being able to search over adjustment sets to identify the one that yields the most precise estimate (i.e., minimal variance) is critical in real-world applications.

In this paper, we start by simplifying the conditions given in \cite{Assaad_2024} to identify $\probac{y_t}{\interv{x_{t-\gamma}}}$; then we present the SCG-back-door criterion, an extension of the back-door criterion to SCGs, which  provides, whenever $\probac{y_t}{\interv{x_{t-\gamma}}}$ is identifiable by adjustment in SCGs using the conditions given in \cite{Assaad_2024}, a rich collection of valid adjustment sets. 
Furthermore, we identify the quasi-optimal adjustment set among this collection, defined as the one that minimizes the asymptotic variance of the causal effect estimator under standard assumptions in at least some distribution compatible with the SCG; it
is also the smallest set among all valid adjustment that contains as much as possible of all sets that minimize the asymptotic variance of the causal effect estimator in every distribution compatible with the SCG. We emphasize that this set is not necessary the smallest set.
Our results bridge a critical gap in the theory of causal inference from abstracted graphical models and open new directions for variance-aware causal effect estimation in SCGs.

The remaining of the paper is as follows: Section~\ref{sec:preliminaries} introduces needed preliminaries and terminologies. Section~\ref{sec:scg_back-door} simplifies the results given in \cite{Assaad_2024} and introduces the SCG-back-door criterion. Section~\ref{sec:optimal} shows which of the set given by the SCG-back-door criterion is the quasi-optimal. 
Section~\ref{sec:exp}  empirically validates our theoretical findings. Section \ref{sec:discussion} discusses the implications and limitations of our results.
Finally, Section~\ref{sec:conclusion} concludes the paper. All proofs are deferred to the Appendix.
 
\section{Preliminaries and problem setup}
\label{sec:preliminaries}

In this work, we consider a \textit{Discrete-Time Dynamic Structural Causal Model} (DTDSCM), an extension of structural causal models \citep{Pearl_2000} to dynamic systems.
This choice is motivated by the nature of the data encountered in many real-world applications, such as monitoring systems or epidemiological studies, where variables evolve over time and observations are collected at discrete intervals. A DTDSCM allows us to explicitly model temporal dependencies and causal mechanisms operating across time steps, making it well-suited for analyzing interventions. In what follows, we recall the basic formalism of DTDSCM and present the notations and assumptions that will be used throughout the manuscript.
We use the convention that uppercase letters represent variables, lowercase letters represent their values, and letters in blackboard bold represent sets.

\begin{definition}[Discrete-time dynamic structural causal model (DTDSCM)]
	\label{def:DTDSCM}
A discrete-time dynamic structural causal model is a tuple $\mathcal{M}=(\mathbb{L}, \mathbb{V}, \mathbb{F},  \proba{\mathbb{l}})$,
	where
    \begin{itemize}
        \item $\mathbb{L} = \bigcup \{\mathbb{L}^{v^i_t} \mid i \in [1,d], t \in [t_0,t_{max}]\}$ is a set of exogenous variables, which cannot be observed but affect the rest of the model;
        \item 	$\mathbb{V} = \bigcup \{\mathbb{V}^i \mid i \in [1,d]\}$ such that $\forall i \in [1,d]$, $\mathbb{V}^{i} = \{V^{i}_{t} \mid t \in [t_0,t_{max}]\}$, is a set of endogenous variables, which are observed and every $V^i_t \in \mathbb{V}$ is functionally dependent on some subset of $\mathbb{L}^{v^i_t} \cup \mathbb{V}_{\leq t} \backslash \{V^i_t\}$ where $\mathbb{V}_{\leq t} = \{V^j_{t'} \mid j \in [1,d], t'\leq t\}$;
        \item 	$\mathbb{F}$ is a set of functions such that for all $V^i_t \in \mathbb{V}$, $f^{v^i_t}$ is a mapping from $\mathbb{L}^{v^i_t}$ and a subset of $\mathbb{V}_{\leq t} \backslash \{V^i_t\}$ to $V^i_t$;
        \item 	$\proba{\mathbb{l}}$ is a joint probability distribution over $\mathbb{L}$.
    \end{itemize}
\end{definition}

In this definition of DTDSCMs, the set of functions $\mathbb{F}$ encapsulates the causal mechanisms governing the relationships among variables, while $\mathbb{L}$ represents the noise, i.e., the unobserved or hidden variables that affect the observed variables, and  $\mathbb{V}$ represents the observed variables, which we will often refer to as \emph{micro variables}. The uncertainty associated with these unobserved variables is characterized by the distribution $\proba{\mathbb{l}}$, which accounts for the unknown influences outside the observed system. When combined with the causal mechanisms encoded in $\mathbb{F}$, this distribution gives rise to the distribution $\proba{\mathbb{v}}$ over the observed variables $\mathbb{V}$.
Throughout this paper, we make the following two assumptions about the DTDCM.
\begin{assumption}[No hidden confounding]
\label{ass:causal sufficiency}
We assume that 
the exogenous variables are mutually independent, and each variable in $\mathbb{L}$ is at most in one function in $\mathbb{F}$.
\end{assumption}

\begin{assumption}[Causal stationarity]
\label{ass:ctt}
 We assume that, for any causal relation between variables, the associated causal effect remains the same at every time step; in other words, causal dependencies are time-invariant.
\end{assumption}
The first assumption is essential for the theoretical results we derive, while the second is fully required only when working with a single multivariate time series where each time point corresponds to a single observation.
This scenario commonly arises in industrial systems, such as chemical plants, manufacturing lines, or energy production facilities, where various sensors continuously monitor quantities like temperature, pressure, flow rate, vibration, and power consumption. In such tightly controlled and stable environments, the system is expected to operate under stationary conditions, making the stationarity assumption reasonable. This scenario can also arise in health, for example, in Intensive Care Units (ICUs), where we focus on the data of a single patient that are continuously monitored through various sensors measuring variables such as heart rate, blood pressure, oxygen saturation, respiratory rate, and temperature. Over relatively short and clinically stable periods, patient physiology can often be assumed to follow approximately stationary dynamics, making the stationarity assumption acceptable.
Nevertheless, this second assumption could be relaxed to a "weaker"\footnote{By weaker stationarity we mean to say that when dealing with multiple multivariate time series, we only need to assume that all causal relations from $t-\gamma-\gamma_{max}$ to $t$ are time invariant in the FT-DAG but not necessarily in the DTDSCM.} causal stationarity in settings involving multiple multivariate time series\textemdash such as in cohort studies\textemdash where each time point is associated with multiple independent observations. For example, a case where the data of multiple patients in the ICU is collected.
Finally, as most of the literature in causal inference, we  assume that each DTDSCM induces a full-time directed \emph{acyclic} graph (FT-DAG)\footnote{FT-DAGs  are a subclass of DAGs, and similarly, DTDSCMs are instances of standard SCMs. As such, general results and tools developed for DAGs and SCMs are directly applicable to FT-DAGs and DTDSCMs. Throughout this paper, we may occasionally state that a method or result applies to FT-DAGs, even if it was originally established in the broader context of DAGs.}, where every variable in $\mathbb{V}$ corresponds to a node in the graph. In this DAG, a directed edge $\rightarrow$ is drawn from one variable to another if the former serves as an input to the function that determines the latter.  
Formally, FT-DAGs are defined as follows:

\begin{definition}[Full Time Directed Acyclic Graph (FT-DAG)] 
Consider a DTDSCM $\mathcal{M}$. The \emph{full-time directed acyclic graph}{(FT-DAG)} $\mathcal{G} = (\mathbb{V}, \mathbb{E})$ induced by $\mathcal{M}$ is defined in the following way:
	\begin{equation*}
		\begin{aligned}
			\mathbb{E} :=& \{X_{t-\gamma}\rightarrow Y_{t} \mid \forall Y_{t} \in \mathbb{V},~X_{t-\gamma} \in \mathbb{X} \subseteq \mathbb{V}_{\leq t} \backslash \{Y_t\} \st Y_t := f^{y_t}(\mathbb{X}, \mathbb{L}^{y_t}) \text{ in } \mathcal{M}\},\\
		\end{aligned}
	\end{equation*}
\end{definition}


Each node in an FT-DAG represents a timepoint in a time series. The lag $\gamma$ refers to the time difference between the variable observed at times $t$ and $t-\gamma$.
We denote the maximum lag $\gamma_{max}$ as the greatest time lag included in the model.

\textbf {FT-DAG concepts:} Let $\FTCG=(\mathbb{V},\mathbb{E})$ be a graph which is a set of nodes $\mathbb{V}$, and a set of edges $\mathbb{E}$ that  connect different pairs of nodes. Two nodes are adjacent if they are connected by an edge. Nodes denote variables and edges denote relationships between two variables. If the edges point from one node to another, the graph is called a directed graph. A causal graph is a directed graph where edges represent direct causal effects from one variable to another.
A path is a sequence of unique nodes in which each pair of consecutive nodes is adjacent. A directed path from $X$ to $Y$ is a path in which all edges point in the same direction, from the tail node to the head node of the form $ X\rightarrow{\cdots}\rightarrow{Y}$.
A path is said to be blocked by a set of nodes $\mathbb{Z}$ if the set contains a non-collider of the form $\leftarrow V\rightarrow$ or $\rightarrow V \rightarrow$ such that $V\in\mathbb{Z}$, or a collider of the form $\rightarrow V \leftarrow$ such that neither $V$ nor any of its descendants are in $\mathbb{Z}$. 
A back-door path from $X$ to $Y$ is any path between $X$ and $Y$ that starts with an edge pointing into $X$ that is $X\leftarrow$. 
We use the terminology of kinship to denote the relationships between the variables: parents, children, descendants and ancestors. $X$ is a parent of $Y$ and $Y$ is a child of $X$ if $X\rightarrow{Y}$ in $\FTCG$. 
If there is a directed path between $X$ and $Y$, then $X$ is an ancestor of $Y$ and $Y$ is a descendant of $X$. A node is considered as both its own ancestor and its own descendant. We note $\parents{X}{\FTCG}$ parents of $X$ in $\FTCG$, $\children{X}{\FTCG}$ children of $X$ in $\FTCG$, $\descendants{X}{\FTCG}$ descendants of $X$ in $\FTCG$ and $\ancestors{X}{\FTCG}$ ancestors of $X$ in $\FTCG$.

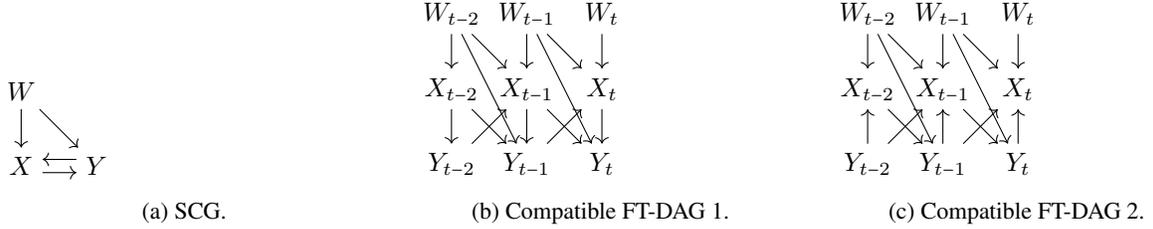
\begin{figure}[t]
    \label{fig:exemple_stress}
    \centering
\hfill
\begin{subfigure}[t]{.3\linewidth}
\begin{tikzpicture}    
    \node (W) {$W$};
    \node (X) [below of=W] {$X$};
    \node (Y) [right of=X] {$Y$};

    \draw [->] (W) -- (X);
    \draw [->] (W) -- (Y);

 \begin{scope}[transform canvas={yshift=-.25em}]
 \draw [->] (X) -- (Y);
 \end{scope}
 \begin{scope}[transform canvas={yshift=.25em}]
 \draw [<-] (X) -- (Y);
 \end{scope}

\end{tikzpicture}
\caption{SCG.}
\end{subfigure}
\hfill
\begin{subfigure}[t]{.3\linewidth}
\begin{tikzpicture}

  \node (X2) {$X_{t-2}$};
  \node (X1) [right of= X2] {$X_{t-1}$};
  \node (X0) [right of= X1] {$X_{t}$};

  \node (T2) [above of= X2] {$W_{t-2}$};
  \node (T1) [right of= T2] {$W_{t-1}$};
  \node (T0) [right of= T1] {$W_{t}$};

  \node (Y2) [below of= X2] {$Y_{t-2}$};
  \node (Y1) [right of= Y2] {$Y_{t-1}$};
  \node (Y0) [right of= Y1] {$Y_{t}$};

    \draw [->] (T2) -- (X2);
    \draw [->] (T2) -- (X1);
    \draw [->] (T1) -- (X1);
    \draw [->] (T1) -- (X0);
    \draw [->] (T0) -- (X0);
    
    \draw [->] (T2) -- (Y1);
    \draw [->] (T1) -- (Y0);

    \draw [->] (X2) -- (Y2);
    \draw [->] (X2) -- (Y1);
    \draw [->] (Y2) -- (X1);
    \draw [->] (X1) -- (Y1);
    \draw [->] (X1) -- (Y0);
    \draw [->] (Y1) -- (X0);
    \draw [->] (X0) -- (Y0);

\end{tikzpicture}
\caption{Compatible FT-DAG 1.}
\end{subfigure}
\hfill
\begin{subfigure}[t]{.3\linewidth}
\begin{tikzpicture}

  \node (X2) {$X_{t-2}$};
  \node (X1) [right of= X2] {$X_{t-1}$};
  \node (X0) [right of= X1] {$X_{t}$};

  \node (T2) [above of= X2] {$W_{t-2}$};
  \node (T1) [right of= T2] {$W_{t-1}$};
  \node (T0) [right of= T1] {$W_{t}$};

  \node (Y2) [below of= X2] {$Y_{t-2}$};
  \node (Y1) [right of= Y2] {$Y_{t-1}$};
  \node (Y0) [right of= Y1] {$Y_{t}$};

    \draw [->] (T2) -- (X2);
    \draw [->] (T2) -- (X1);
    \draw [->] (T1) -- (X1);
    \draw [->] (T1) -- (X0);
    \draw [->] (T0) -- (X0);
    
    \draw [->] (T2) -- (Y1);
    \draw [->] (T1) -- (Y0);

    \draw [<-] (X2) -- (Y2);
    \draw [->] (X2) -- (Y1);
    \draw [->] (Y2) -- (X1);
    \draw [<-] (X1) -- (Y1);
    \draw [->] (X1) -- (Y0);
    \draw [->] (Y1) -- (X0);
    \draw [<-] (X0) -- (Y0);

\end{tikzpicture}
\caption{Compatible FT-DAG 2.}
\end{subfigure}

\caption{Example of an SCG and two compatible FT-DAGs where: $X=$ Sleep Efficiency, $Y=$ Stress , $W=$ Total sleep time~\citep{Jordan_2023}.}
\label{fig:complex}
\end{figure}

In this paper, we focus on causal effect of a variable $X_{t-\gamma}$ on another $Y_t$, denoted as $\probac{y_t}{\interv{x_{t-\gamma}}}$ where the $\interv{}$ operator represents an intervention. This is usually referred to as the micro causal effect\footnote{The term "micro" causal effect is used in contrast to the "macro" causal effect, which refers to the influence of an entire time series on another. In the case of micro causal effects, the focus is on the causal influence of a single time point (or time-specific intervention) on another specific time point within or across time series.}.
In particular, we focus on the technical problem of identifying micro causal effects from observational data\textemdash that is, expressing these effects using a do-free formula involving only observed variables.


Causal effects are generally identifiable under various sets of assumptions. Among the most widely used are those that include the assumptions of conditional exchangeability and positivity~\citep{Hernan_2023, Austin_2011} (these two assumptions together are known as ignorability).
The positivity assumption (\ie, that the probability of receiving each level of treatment is strictly positive for all covariates) is, in principle, testable from data, whereas the conditional exchangeability assumption is not and must instead be justified using background or domain knowledge.
When these assumptions hold, the micro causal effect of a variable $X_{t-\gamma}$ on an outcome $Y_t$ can be identified using the following adjustment formula:
\begin{equation}
\label{eq:adjustment_formula}
\probac{y_t}{\interv{x_{t-\gamma}}}=\sum_{\mathbb{z}}\probac{y_t}{x_{t-\gamma}, \mathbb{z}}\proba{\mathbb{z}},    
\end{equation}
where $\mathbb{Z}$ is a set of covariates that removes confounding bias when adjusted on. In some specific settings, $\mathbb{Z}$ can be considered as the set of confounders.
This adjustment formula is arguably the most well-known do-free expression for causal effects. Many standard estimators, such as ordinary least squares  regression and propensity score methods~\citep{Vable_2019}, are based on this formula.

\citet{Pearl_1993BackDoor} showed that if the underlying  FT-DAG is known, one can test whether the conditional exchangeability assumption holds using a simple graphical condition called the back-door criterion. In practice, there may be multiple sets of covariates $\mathbb{Z}$ that allow for identification via Equation~\ref{eq:adjustment_formula}, and the back-door criterion provides a way to identify such sets.
A set $\mathbb{Z}$ satisfies the back-door criterion relative to an ordered pair $(X_{t-\gamma},Y_t)$ if $\mathbb{Z}$ blocks all back-door paths from $X_{t-\gamma}$ to $Y_t$ and $\mathbb{Z}$ does not contain any descendant of $X_{t-\gamma}$.

If the true FT-DAG is known and if Assumptions~\ref{ass:causal sufficiency} and \ref{ass:ctt} are satisfied, there exists at least one set $\mathbb{Z}$ that satisfies the back-door criterion~\citep{Perkovic_2016}, meaning that in this setup the micro causal effect of $X_{t-\gamma}$ on $Y_t$ is always identifiable  using Equation~\ref{eq:adjustment_formula}. 
However, in practical applications, using FT-DAGs is often impractical\textemdash or even infeasible\textemdash particularly in domains with a large number of variables. Such graphs tend to become overly complex, making them hard to interpret and construct. This challenge is further exacerbated in settings where only partial knowledge of the underlying causal relationships is available, limiting our ability to fully specify an FT-DAG.
For this reason, researchers often resort to a simpler and more summarized representation known as the summary causal graph (SCG)~\citep{Assaad_2022}.

\begin{definition}[Summary Causal Graph (SCG) with no hidden confounding]
	Consider an FT-DAG $\mathcal{G} = (\mathbb{V}, \mathbb{E})$. Under Assumption~\ref{ass:causal sufficiency}, the \emph{summary causal graph (SCG)} $\mathcal{G}^{s} = (\mathbb{V}^s, \mathbb{E}^{{s}})$ compatible with $\mathcal{G}$ is defined in the following way:
	\begin{equation*}
		\begin{aligned}
			\mathbb{V}^s :=& \{V^i = (V^i_{t_0},\cdots,V^i_{t_{max}}) \mid \forall i \in [1, d]\},\\
           \mathbb{E}^{s} :=& \{X\rightarrow Y \mid \forall X,Y \in \mathbb{V}^s,~\exists t'\leq t\in [t_0,t_{max}] 
        \st X_{t'}\rightarrow Y_{t}\in\mathbb{E}\}.
\end{aligned}
\end{equation*}
\end{definition}

An SCG is an abstraction of an FT-DAG where all timepoints in a time series  are compacted into one node, referred to as \emph{macro variable} or \emph{cluster}. 
An arrow $X\rightarrow Y$ exists in the SCG if and only if there is at least one temporal lag $0\leq \gamma\leq\gamma_{max}$ such as $X_{t-\gamma}\rightarrow Y_t$ exists in the FT-DAG. While multiple FT-DAGs may induce the same SCG, the mapping from an FT-DAG to its corresponding SCG is unique. For example, in Figure~\ref{fig:complex} and Figure~\ref{fig:simple}, we provide an SCG with two of its compatible FT-DAGs.
Unlike an FT-DAG, because of its abstract nature, an SCG can contain cycles.
For example, Figure~\ref{fig:complex} illustrates a real-life application where events were recorded at successive time points. It summarizes the causal relationships between sleep efficiency ($X$), stress levels ($Y$), and total sleep time ($W$), each measured over time, into a compact causal graph. In the corresponding FT-DAG, the structure remains acyclic, as each edge respects the temporal order. However, in the associated SCG, a cycle emerges due to the presence of edges such as $X_{t-\gamma} \rightarrow Y_t$ and $Y_{t-\gamma} \rightarrow X_t$. This reflects a plausible real-world dynamic: sleep efficiency on one day influences stress the following day, and conversely, elevated stress levels can impair sleep efficiency the next night. 
Another example is given in the SCG of Figure~\ref{fig:simple} which summarizes the causal relationships between Air pollution levels ($X$), Respiratory hospitals ($Y$), and Traffic density ($W$). In the example, traffic density and air pollution levels naturally exhibit strong temporal persistence. For example, heavy traffic on a given working day is often followed by similarly congested conditions the next day, and air pollution levels often evolve gradually rather than abruptly. These temporal dynamics are captured in the SCG through self-loops on the macro variables representing traffic density ($X$) and air pollution levels ($W$). Such loops encode the idea that the current state of these variables is partly, influenced by their own past values.
In the underlying FT-DAGs, this persistence is encoded by edges of the form $X_{t-\gamma} \rightarrow X_t$ and $W_{t-\gamma} \rightarrow W_t$.

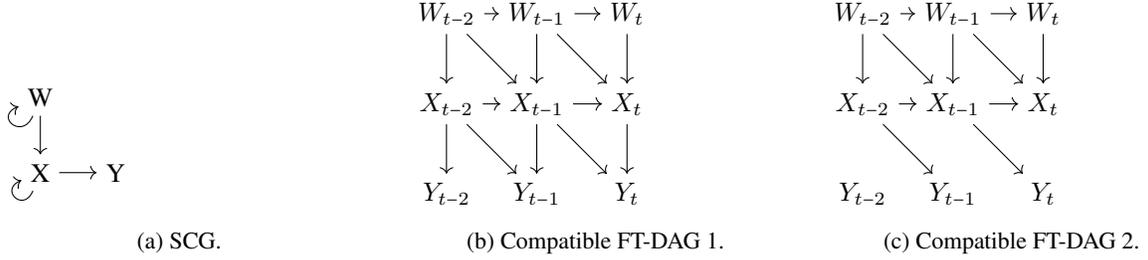
\begin{figure}[t]
    \label{fig:exemple_stress}
    \centering
\hfill
\begin{subfigure}[t]{.3\linewidth}
\begin{tikzpicture}
    
    \node (W) {W};
    \node (X) [below of=W] {X};
    \node (Y) [right of=X] {Y};
    
    \draw [->] (X) -- (Y);
    \draw[->, looseness=4, out=-90-45/2, in=180+45/2] (W) to (W);
    \draw[->, looseness=4, out=-90-45/2, in=180+45/2] (X) to (X);
    \draw [->] (W) -- (X);

\end{tikzpicture}
\caption{SCG.}
\label{fig:simple:SCG}
\end{subfigure}
\hfill
\begin{subfigure}[t]{.3\linewidth}
\begin{tikzpicture}
  \begin{scope}[node distance = 1.2cm]
  \node (A2) {$W_{t-2}$};
  \node (A1) [right of= A2] {$W_{t-1}$};
  \node (A0) [right of= A1] {$W_{t}$};

  \node (X2) [below of= A2] {$X_{t-2}$};
  \node (X1) [right of= X2] {$X_{t-1}$};
  \node (X0) [right of= X1] {$X_{t}$};

  \node (Y2) [below of= X2] {$Y_{t-2}$};
  \node (Y1) [right of= Y2] {$Y_{t-1}$};
  \node (Y0) [right of= Y1] {$Y_{t}$};
  \end{scope}
  
    \draw [->] (A2) -- (X2);
    \draw [->] (A2) -- (X1);
    \draw [->] (A1) -- (X1);
    \draw [->] (A1) -- (X0);
    \draw [->] (A0) -- (X0);

    \draw [->] (X2) -- (Y2);
    \draw [->] (X2) -- (Y1);
    \draw [->] (X1) -- (Y1);
    \draw [->] (X1) -- (Y0);    
    \draw [->] (X0) -- (Y0);

    \draw [->] (X2) -- (X1);
    \draw [->] (X1) -- (X0);
    \draw [->] (A2) -- (A1);
    \draw [->] (A1) -- (A0);

\end{tikzpicture}
\caption{Compatible FT-DAG 1.}
\label{fig:simple:DAG1}
\end{subfigure}
\hfill
\begin{subfigure}[t]{.3\linewidth}
\begin{tikzpicture}
  \begin{scope}[node distance = 1.2cm]
  \node (A2) {$W_{t-2}$};
  \node (A1) [right of= A2] {$W_{t-1}$};
  \node (A0) [right of= A1] {$W_{t}$};

  \node (X2) [below of= A2] {$X_{t-2}$};
  \node (X1) [right of= X2] {$X_{t-1}$};
  \node (X0) [right of= X1] {$X_{t}$};

  \node (Y2) [below of= X2] {$Y_{t-2}$};
  \node (Y1) [right of= Y2] {$Y_{t-1}$};
  \node (Y0) [right of= Y1] {$Y_{t}$};
  \end{scope}
    
    \draw [->] (A2) -- (X2);
    \draw [->] (A2) -- (X1);
    \draw [->] (A1) -- (X1);
    \draw [->] (A1) -- (X0);
    \draw [->] (A0) -- (X0);

    \draw [->] (X2) -- (Y1);
    \draw [->] (X1) -- (Y0);    

    \draw [->] (X2) -- (X1);
    \draw [->] (X1) -- (X0);
  \draw [->] (A2) -- (A1);
    \draw [->] (A1) -- (A0);

\end{tikzpicture}
\caption{Compatible FT-DAG 2.}
\label{fig:simple:DAG2}
\end{subfigure}
\caption{Example of an SCG and two compatible FT-DAGs where: $X=$  Air pollution levels, $Y=$ Respiratory hospital admissions, $W=$ Traffic density~\citep{Zhang_2013, Dominici_2006}.}
\label{fig:simple}
\end{figure}

\textbf {SCG concepts.} Let $\SCG=(\mathbb{V}^s,\mathbb{E}^s)$ be an SCG with a set of macro nodes $\mathbb{V}^s$. 
All concepts introduced for FT-DAGs apply also for SCG. However, for SCGs we also need to introduce additional concepts related to cycles.
A cycle is a directed path that starts and ends at the same node, with all intermediate nodes distinct. A self-loop is an edge that connects a node to itself and is considered a trivial cycle of length one.
A strongly connected component (SCC) in a graph is a maximal set of nodes such that there is a directed path from each node to every other node in the component. In particular, each node always belongs to its own SCC, even if no other node is reachable from or to it. We denote by $\scc{X}{\graph}$ the strongly connected component of node $X$ in the graph $\graph$.
Finally,  to map macro-variables $\mathbb{V}\subseteq \vSCG$ to specific compatible temporal instants we use the following operator: $\temp{\mathbb{W}}{t_2}{t_1} = \tempexplain{\mathbb{W}}{t_1}{t_2}$ where $t_1\leq t_2$.

The back-door criterion cannot be directly applied to identify micro causal effects using the SCG. For example, consider the simple SCG shown in Figure~\ref{fig:simple:SCG}, where $X\rightarrow Y$ and $W\rightarrow X$. If we apply the back-door criterion to this SCG, one might mistakenly conclude that no adjustment is necessary, implying $\probac{y_t}{\interv{x_{t-\gamma}}}=\probac{y_t}{x_{t-\gamma}}$. However, this is incorrect. To understand why, examine the FT-DAG in Figure~\ref{fig:simple:DAG1}, where multiple active back-door paths exist between $X_{t-\gamma}$ and $Y_t$. These paths can be blocked by the set $\{X_{t_0}, \cdots, X_{t-\gamma-1},W_{t_0}, \cdots, W_{t-\gamma}\}$which simultaneously does not contain any descendants of $X_{t-\gamma}$, thus satisfying the back-door criterion. However, using this set in the adjustment formula is impractical in the context of time series, as it would violate the positivity assumption (since the time series correspond to a single observation, making proper adjustment infeasible).
To address this, we focus on cases where we avoid adjusting for variables temporally prior to $t-\gamma-\gamma_{max}$. In this specific example, the set $\{X_{t - \gamma -\gamma_{max}}, \cdots, X_{t-\gamma-1},W_{t - \gamma - \gamma_{max}}, \cdots, W_{t-\gamma}\}$ satisfies the back-door criterion and does not violate the positivity assumption under Assumption~\ref{ass:ctt} (since past values within this range can be reused across time as repeated observations), meaning it can be used to estimate the micro causal effect. However, the back-door criterion alone does not provide a way to identify this set using only the SCG.
This is merely a simple illustration; in practice, scenarios can be far more complex\textemdash such as the one shown in Figure~\ref{fig:complex}\textemdash where identifying the micro causal effect becomes highly non-trivial.

Recently, \cite{Assaad_2024} introduced a set of conditions for identifying micro causal effects using SCGs. These conditions were later proven to be complete for identifiability under Assumptions~\ref{ass:causal sufficiency} and \ref{ass:ctt} by \cite{Yvernes_2025}. Initially, the formulation in \cite{Assaad_2024} was relatively complex, requiring the enumeration of all active paths between $X_{t-\gamma}$ and $Y_t$. An equivalent but somewhat simpler formulation was later provided in \cite{Yvernes_2025}, although it still involved enumerating certain types of paths and introduced technically involved definitions.
In addition to these conditions, \cite{Assaad_2024} proposed two valid sets of covariates that can be used in the adjustment formula whenever the identifiability conditions are satisfied. These sets are:

$$\mathbb{A}^1=\temp{De(X,\mathcal{G}^s)}{t-\gamma-1}{t-\gamma-1-\gamma_{max}} \cup \temp{\mathbb{V}\backslash De(X,\mathcal{G}^s)}{t-\gamma}{t-\gamma-\gamma_{max}}$$
and
$$\mathbb{A}^2=\temp{(An(\{X,Y\},\mathcal{G}^s))\cap De(X,\mathcal{G}^s)}{t-\gamma-1}{t-\gamma-1-\gamma_{max}} \cup \temp{An(\{X,Y\},\mathcal{G}^s)\backslash De(X,\mathcal{G}^s)}{t-\gamma}{t-\gamma-\gamma_{max}}.$$

In many applications, it would be beneficial to have more valid adjustment sets, allowing the user to choose the most appropriate one based on factors such as the validity of the positivity assumption or other constraints, like missing data. Furthermore, having multiple adjustment sets paves the way for finding the optimal set, which minimizes the asymptotic variance of the micro causal effect estimator, as characterized in \cite{Henckel_2022} for the case when the true FT-DAG is known. The results presented in \cite{Henckel_2022} were originally established in the context of linear models using the ordinary least squares estimator. However, subsequent work by \cite{Rotnitzky_2020} demonstrated that these results also hold in more general non-parametric settings and a class of non-parametric estimators.
We emphasize that the characterization of the optimal set given in \cite{Henckel_2022} only applies to FT-DAG and not to SCGs. In fact, searching for the optimal adjustment set turned out to be hard as it will be explained in Section~\ref{sec:optimal}. A simpler endeavor would be to search for the quasi-optimal set which will be defined formally in Section~\ref{sec:optimal}.

Now that we have introduced the necessary terminology and preliminaries, we state the three problems that we aim to solve.

\begin{problem}
We aim to present a formulation equivalent to the conditions given in \cite{Assaad_2024} and in \cite{Yvernes_2025}, but in a simpler (does not require enumerating paths) and more interpretable (only based on kinship relations) form, for identifying micro causal effects from SCGs.
\end{problem}

\begin{problem}
We aim to extend the back-door criterion to SCGs, providing a criterion that, under the conditions outlined in \cite{Assaad_2024},  offers multiple sets of covariates that can be used in the adjustment formula.
\end{problem}

\begin{problem}
We aim to characterize the quasi-optimal set, among those found by our extension of the back-door criterion.
\end{problem}

\section{A criterion for finding adjustment sets for identifying causal effects in SCGs}
\label{sec:scg_back-door}

We start this section by presenting a simplified version of the identifiability theorem for micro causal effects in SCGs, originally introduced in \cite{Assaad_2024} and later proven complete in \cite{Yvernes_2025}.

\begin{restatable}{theorem}{theoremIdentifiabilityTotalEffect}
\label{theorem:identifiability_total_effect_SCG}
Consider an SCG ${\SCG}=({\vSCG},{\eSCG})$. The micro causal effect $\probac{y_t}{\interv{x_{t-\gamma}}}$ is identifiable if and only if $X \notin \ancestors{Y}{\SCG}$ or $X\in \ancestors{Y}{\SCG}$ and one of the following holds
\begin{enumerate}[label=\textbf{\Alph*}]
    \item \label{cond1} $\scc{X}{\SCG} = \{X\}$, or
    \item \label{cond2} $\gamma=0$ and $\scc{X}{\SCG} \neq \{X\}$ and $\ancestors{Y}{\SCG\backslash\{X\}} \cap \scc{X}{\SCG} = \emptyset$, or
    \item \label{cond3} $\gamma = 1$ and $\cycles{Y}{\SCG} =\{X\rightleftarrows Y\}$\footnote{Condition~\ref{cond3} can be replaced by $\gamma = 1$ and $\scc{X}{\SCG} \subseteq \{X,Y\}$ and $\selfloop{Y} \notin \eSCG$}.
\end{enumerate}
\end{restatable}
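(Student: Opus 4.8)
The plan is not to reprove identifiability from scratch, but to show that the three kinship conditions above are logically equivalent to the path-enumeration criterion of \cite{Assaad_2024}, as reformulated in \cite{Yvernes_2025}, which is already known to be sound and complete under Assumptions~\ref{ass:causal sufficiency} and \ref{ass:ctt}. So it suffices to prove the biconditional: ``$X\notin\ancestors{Y}{\SCG}$, or $X\in\ancestors{Y}{\SCG}$ together with one of \ref{cond1}--\ref{cond3}'' holds if and only if the \cite{Yvernes_2025} criterion holds. I would split this into the two implications and, within each, carry out a case analysis driven by (i) whether $X$ is an ancestor of $Y$, (ii) the structure of $\scc{X}{\SCG}$, and (iii) the value of $\gamma$ — the cases $\gamma=0$, $\gamma=1$, $\gamma\geq 2$ must be separated because the admissible adjustment window is $[t-\gamma-\gamma_{max},t-\gamma]$ for non-descendants of $X$ and one step earlier for descendants.

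\textbf{From the conditions to identifiability.} First, if $X\notin\ancestors{Y}{\SCG}$, then no FT-DAG compatible with $\SCG$ contains a directed path from any $X_{t'}$ to $Y_t$, so $Y_t$ is never a descendant of $X_{t-\gamma}$ and every back-door path between them runs through common ancestors of $X$ and $Y$; adjusting for a window-respecting set of such non-descendants (a truncation of $\mathbb{A}^1$) blocks all of them in every compatible FT-DAG, and positivity is preserved by Assumption~\ref{ass:ctt}, so the effect is identifiable. When $X\in\ancestors{Y}{\SCG}$ and one of \ref{cond1}--\ref{cond3} holds, I would exhibit an explicit window-respecting candidate — essentially $\mathbb{A}^1$ under \ref{cond1}, and suitably truncated analogues of $\mathbb{A}^1$/$\mathbb{A}^2$ under \ref{cond2} and \ref{cond3} — and verify, via the lifting of $\SCG$-paths to FT-DAG-paths, that it is simultaneously a valid back-door adjustment set in all compatible FT-DAGs; equivalently and more economically, show that each of \ref{cond1}--\ref{cond3} implies the premises of the \cite{Yvernes_2025} criterion. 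Informally: \ref{cond1} forbids any cycle through $X$ and hence any ``self-confounding'' path; \ref{cond2} with $\gamma=0$ forces the only vertices of $\scc{X}{\SCG}$ that reach $Y$ to be ones dominated by $X$, eliminating the problematic active paths; \ref{cond3} with $\gamma=1$ restricts the cycle through $Y$ to $X\rightleftarrows Y$, so only the one-step back-door through $X$ survives and it is blockable within the window.

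\textbf{From failure of the conditions to non-identifiability.} Conversely, suppose $X\in\ancestors{Y}{\SCG}$ and none of \ref{cond1}--\ref{cond3} holds: then $\scc{X}{\SCG}\neq\{X\}$; moreover if $\gamma=0$ there is a vertex in $\scc{X}{\SCG}\cap\ancestors{Y}{\SCG\backslash\{X\}}$, if $\gamma=1$ there is a cycle through $Y$ other than $X\rightleftarrows Y$, and if $\gamma\geq 2$ no extra structure is needed. In each configuration I would produce two FT-DAGs compatible with $\SCG$ that disagree on $\probac{y_t}{\interv{x_{t-\gamma}}}$ for every window-respecting candidate set: instantiate the cycle through $\scc{X}{\SCG}$ (or through $Y$) with edges at specific lags so that a confounding path between $X_{t-\gamma}$ and $Y_t$ is created which either passes through a vertex temporally outside $[t-\gamma-\gamma_{max},t]$ or through a descendant of $X_{t-\gamma}$, hence cannot be blocked without leaving the window or violating the back-door constraint; contrast with a second FT-DAG in which that edge is absent. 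Alternatively, this step can be discharged purely combinatorially by checking that the \cite{Yvernes_2025} criterion fails in each such configuration.

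\textbf{Loose ends and main obstacle.} It remains to verify the two footnote reformulations — that $\cycles{Y}{\SCG}=\{X\rightleftarrows Y\}$ is equivalent to $\scc{X}{\SCG}\subseteq\{X,Y\}$ together with $\selfloop{Y}\notin\eSCG$, and the analogous restatement of \ref{cond3} — which are short graph-theoretic arguments describing what a cycle through $Y$ must look like once $X\in\ancestors{Y}{\SCG}$. The hard part will be the non-identifiability direction: one must show that in \emph{every} configuration where \ref{cond1}--\ref{cond3} all fail a genuine counterexample pair of compatible FT-DAGs exists, which requires careful bookkeeping of lags so that the induced confounding path is truly unblockable inside the admissible window; the interplay of a self-loop on $X$, a self-loop on $Y$, and longer cycles through $\scc{X}{\SCG}$ is where the argument is most delicate, and it is exactly there that the split between $\gamma=0$, $\gamma=1$ and $\gamma\geq 2$ must be handled with care.
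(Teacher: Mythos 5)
Your primary strategy---proving the three kinship conditions logically equivalent to the already sound-and-complete path-based criterion of \cite{Assaad_2024} (as reformulated in \cite{Yvernes_2025})---is exactly the paper's proof, which is a short condition-by-condition graph-theoretic equivalence (e.g., the existence of a $\sigma$-active back-door path whose intermediate nodes are all descendants of $X$, combined with $X$ being an ancestor of $Y$, is shown to be equivalent to the strongly connected component of $X$ being nontrivial). The only remark worth adding is that the step you single out as the main obstacle---constructing, for every failing configuration, a pair of compatible FT-DAGs witnessing non-identifiability---is unnecessary under your own declared strategy, since completeness of the cited criterion already supplies the non-identifiability direction once the purely combinatorial equivalence of conditions is established.
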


This version of the theorem offers a simplification over the formulations presented in \cite{Assaad_2024} and \cite{Yvernes_2025}. Unlike those, it avoids the need to enumerate activated paths in the SCG, relying instead on kinship relations, which makes it both more interpretable and conceptually clearer. 
For instance, it is straightforward to verify that the micro causal effect $\probac{y_t}{\interv{x_{t-\gamma}}}$ is identifiable in the SCG shown in Figure~\ref{fig:identifiableA}, as it satisfies Condition~\ref{cond1}. Similarly, identifiability also holds in the SCG depicted in Figure~\ref{fig:identifiableB} due to the fulfillment of Condition~\ref{cond2}.
Finally, the SCG in Figure~\ref{fig:complex} satisfies the Condition~\ref{cond3} of the theorem, implying that the micro causal effect, is identifiable in that setting as well.

Additionally, as we will show when we introduce the SCG-back-door criterion, the structure of this version of the theorem is particularly well-suited for systematically identifying multiple valid adjustment sets.
Before introducing the SCG-back-door criterion, which  enumerate valid adjustment sets whenever $\probac{y_t}{\interv{x_{t-\gamma}}}$ is identifiable according to Theorem~\ref{theorem:identifiability_total_effect_SCG}, we introduce additional notions that will be useful to define the criterion. We consider the set of possible descendants of a temporal variable $V_{t_1}$ in an SCG as the set of temporal variables that are descendants of $V_{t_1}$ in at least one compatible FT-DAG and we denoted as $\posdescendants{V_{t_1}}{\SCG}$. Note that the difference between $\temp{\descendants{V}{\SCG}}{t_{2}}{t_1}$ and $\posdescendants{V_{t_1}}{\SCG}$ lies in the fact that $V_{t_1+1}$ is always in $\temp{\descendants{V}{\SCG}}{t_{2}}{t_1}$ whereas $V_{t_1+1}$ is in $\posdescendants{V_{t_1}}{\SCG}$ if and only if there exists a cycle on $V$ in $\SCG$.
As in \cite{Henckel_2022} we define the causal nodes $\cn{X^*}{Y^*}{\graph}$ of a any pair of nodes $(X^*, Y^*)$\textemdash be it a pair of macro nodes or micro nodes\textemdash in a any graph $\graph$\textemdash be it an FT-DAG or an SCG\textemdash as the nodes on a directed path from $X^*$ to $Y^*$, excluding $X^*$.
Moreover, for any graph $\graph$ and any pair of nodes $(X^*, Y^*)$, we define the extended causal nodes $\ecn{X^*}{Y^*}{\graph}$ as the causal nodes and their strongly connected components.
$$\cn{X^*}{Y^*}{\graph} = \{V^* \in \pi \mid \forall \text{ path }\pi = \langle X^*\rightarrow\cdots\rightarrow Y^*\rangle \}\backslash\{X^*\}$$
$$\ecn{X^*}{Y^*}{\graph} = \bigcup\limits_{V^*\in\cn{X^*}{Y^*}{\graph}}\scc{V^*}{\graph}.$$

\begin{figure}[t]
    \label{fig:exemple_stress}
    \centering
\hfill
\begin{subfigure}[t]{.23\linewidth}
\begin{tikzpicture}
SCG
[
    ->,                       
    >=Stealth,                
    node distance=3cm,        
    every node/.style={circle, draw, minimum size=1cm}]
    
    \node (W) {W};
    \node (X) [below of=W] {X};
    \node (Y) [right of=X] {Y};
    \node (U) [above of=Y] {U};

    \draw [->] (X) -- (Y);
    \draw[->, looseness=4, out=-90-45/2, in=180+45/2] (W) to (W);
    \draw[->, looseness=4, out=-90-45/2, in=180+45/2] (X) to (X);
        \draw[->, looseness=4, out=-90+45/2, in=-45/2] (Y) to (Y);
    \draw[->, looseness=4, out=-90+45/2, in=-45/2] (U) to (U);

    \draw [->] (W) -- (X);
    \draw [->] (W) -- (Y);

    \draw [->] (X) -- (U);
    \draw [->] (U) -- (Y);

\end{tikzpicture}
\caption{SCG 1.}
\end{subfigure}
\hfill
\begin{subfigure}[t]{.23\linewidth}
\begin{tikzpicture}
SCG
[
    ->,                       
    >=Stealth,                
    node distance=3cm,        
    every node/.style={circle, draw, minimum size=1cm}]
    
    \node (W) {W};
    \node (X) [below of=W] {X};
    \node (Y) [right of=X] {Y};
    \node (U) [above of=Y] {U};
    
    \draw [->] (X) -- (Y);
    \draw[->, looseness=4, out=-90-45/2, in=180+45/2] (W) to (W);
    \draw[->, looseness=4, out=-90-45/2, in=180+45/2] (X) to (X);
    \draw[->, looseness=4, out=-90+45/2, in=-45/2] (Y) to (Y);
    \draw[->, looseness=4, out=-90+45/2, in=-45/2] (U) to (U);
    \draw [->] (W) -- (X);
    \draw [->] (U) -- (Y);

 \begin{scope}[transform canvas={yshift=-.25em}]
 \draw [->] (W) -- (U);
 \end{scope}
 \begin{scope}[transform canvas={yshift=.25em}]
 \draw [<-] (W) -- (U);
 \end{scope}

\end{tikzpicture}
\caption{SCG 2.}
\end{subfigure}
\hfill
\begin{subfigure}[t]{.23\linewidth}
\begin{tikzpicture}
SCG
[
    ->,                       
    >=Stealth,                
    node distance=3cm,        
    every node/.style={circle, draw, minimum size=1cm}]
    
    \node (W) {W};
    \node (X) [below of=W] {X};
    \node (Y) [right of=X] {Y};
    \node (U) [above of=Y] {U};
    
    \draw [->] (X) -- (Y);
    \draw[->, looseness=4, out=-90-45/2, in=180+45/2] (W) to (W);
    \draw[->, looseness=4, out=-90-45/2, in=180+45/2] (X) to (X);
    \draw[->, looseness=4, out=-90+45/2, in=-45/2] (Y) to (Y);
    \draw[->, looseness=4, out=-90+45/2, in=-45/2] (U) to (U);
    \draw [->] (X) -- (W);

 \begin{scope}[transform canvas={xshift=-.25em}]
 \draw [->] (Y) -- (U);
 \end{scope}
 \begin{scope}[transform canvas={xshift=.25em}]
 \draw [<-] (Y) -- (U);
 \end{scope}

 \begin{scope}[transform canvas={yshift=-.25em}]
 \draw [->] (W) -- (U);
 \end{scope}
 \begin{scope}[transform canvas={yshift=.25em}]
 \draw [<-] (W) -- (U);
 \end{scope}

\end{tikzpicture}\caption{SCG 3.}
\end{subfigure}
\caption{Examples of SCGs where $\probac{y_t}{\interv{x_{t-\gamma}}}$ is identifiable because Condition~\ref{cond1} of Theorem~\ref{theorem:identifiability_total_effect_SCG} is satisfied.}
\label{fig:identifiableA}
\end{figure}
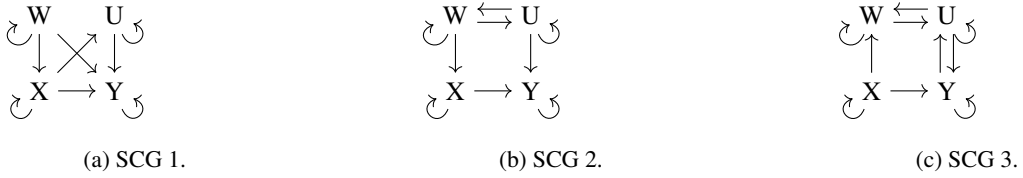

\begin{figure}[t]
    \label{fig:exemple_stress}
    \centering
\hfill
\begin{subfigure}[t]{.23\linewidth}
\begin{tikzpicture}
SCG
[
    ->,                       
    >=Stealth,                
    node distance=3cm,        
    every node/.style={circle, draw, minimum size=1cm}]
    
    \node (W) {W};
    \node (X) [below of=W] {X};
    \node (Y) [right of=X] {Y};
    \node (U) [above of=Y] {U};

    \draw [->] (X) -- (Y);
    \draw[->, looseness=4, out=-90-45/2, in=180+45/2] (W) to (W);
    \draw[->, looseness=4, out=-90-45/2, in=180+45/2] (X) to (X);
        \draw[->, looseness=4, out=-90+45/2, in=-45/2] (Y) to (Y);
    \draw[->, looseness=4, out=-90+45/2, in=-45/2] (U) to (U);

 \begin{scope}[transform canvas={xshift=-.25em}]
 \draw [->] (W) -- (X);
 \end{scope}
 \begin{scope}[transform canvas={xshift=.25em}]
 \draw [<-] (W) -- (X);
 \end{scope}

    \draw [->] (U) -- (W);
    \draw [->] (U) -- (Y);

\end{tikzpicture}
\caption{SCG 1.}
\end{subfigure}
\hfill
\begin{subfigure}[t]{.23\linewidth}
\begin{tikzpicture}
SCG
[
    ->,                       
    >=Stealth,                
    node distance=3cm,        
    every node/.style={circle, draw, minimum size=1cm}]
    
    \node (W) {W};
    \node (X) [below of=W] {X};
    \node (Y) [right of=X] {Y};
    \node (U) [above of=Y] {U};

    \draw [->] (X) -- (Y);
    \draw[->, looseness=4, out=-90-45/2, in=180+45/2] (W) to (W);
    \draw[->, looseness=4, out=-90-45/2, in=180+45/2] (X) to (X);
        \draw[->, looseness=4, out=-90+45/2, in=-45/2] (Y) to (Y);
    \draw[->, looseness=4, out=-90+45/2, in=-45/2] (U) to (U);

 \begin{scope}[transform canvas={xshift=-.25em}]
 \draw [->] (W) -- (X);
 \end{scope}
 \begin{scope}[transform canvas={xshift=.25em}]
 \draw [<-] (W) -- (X);
 \end{scope}

    \draw [->] (X) -- (U);
    \draw [->] (U) -- (Y);

\end{tikzpicture}
\caption{SCG 2.}
\end{subfigure}
\hfill
\begin{subfigure}[t]{.23\linewidth}
\begin{tikzpicture}
SCG
[
    ->,                       
    >=Stealth,                
    node distance=3cm,        
    every node/.style={circle, draw, minimum size=1cm}]
    
    \node (W) {W};
    \node (X) [below of=W] {X};
    \node (Y) [right of=X] {Y};
    \node (U) [above of=Y] {U};

    \draw [->] (X) -- (Y);
    \draw[->, looseness=4, out=-90-45/2, in=180+45/2] (W) to (W);
    \draw[->, looseness=4, out=-90-45/2, in=180+45/2] (X) to (X);
        \draw[->, looseness=4, out=-90+45/2, in=-45/2] (Y) to (Y);
    \draw[->, looseness=4, out=-90+45/2, in=-45/2] (U) to (U);

 \begin{scope}[transform canvas={xshift=-.25em}]
 \draw [->] (W) -- (X);
 \end{scope}
 \begin{scope}[transform canvas={xshift=.25em}]
 \draw [<-] (W) -- (X);
 \end{scope}

    \draw [->] (X) -- (U);
 \begin{scope}[transform canvas={xshift=-.25em}]
 \draw [->] (Y) -- (U);
 \end{scope}
 \begin{scope}[transform canvas={xshift=.25em}]
 \draw [<-] (Y) -- (U);
 \end{scope}  
\end{tikzpicture}
\caption{SCG 3.}
\end{subfigure}
\caption{Examples of SCGs where $\probac{y_t}{\interv{x_{t}}}$ is identifiable because Condition~\ref{cond2} of Theorem~\ref{theorem:identifiability_total_effect_SCG} is satisfied.}
\label{fig:identifiableB}
\end{figure}

\begin{definition}[SCG-back-door criterion for a micro causal effect]
\label{def:backdoor}
Consider an SCG with no hidden confounding ${\SCG}=({\vSCG},{\eSCG})$. If $X\in \ancestors{Y}{\SCG}$ then a set $\mathbb{Z}$ satisfies the SCG-back-door criterion relative to a pair of variables $(X_{t - \gamma}, Y_t)$ compatible with an ordered pair of nodes $(X, Y)$ in $\SCG$ if $\mathbb{D}\cap\mathbb{Z}=\emptyset$ where $\mathbb{D}=\posdescendants{X_{t-\gamma}}{\SCG}$, $\forall V_{t'} \in \mathbb{Z},~t-\gamma-\gamma_{max}\leq t'\leq t$, and

\begin{itemize}[align=left]
    \item[Condition~\ref{cond1} is satisfied] \label{adjcond1} and one of the following holds:
    \begin{enumerate}
        \item \label{adjcond1a} $\mathbb{P}\backslash\mathbb{D} \subseteq \mathbb{Z}$ where
            \begin{itemize}
                \item $\mathbb{P} = \temp{\parents{\scc{X}{\SCG}}{\SCG}}{t-\gamma}{t-\gamma-\gamma_{max}}$, and
            \end{itemize}
        \item \label{adjcond1b} $\mathbb{P} \backslash \mathbb{D} \subseteq \mathbb{Z}$ and
            $\cycles{X}{\SCG} = \emptyset$
            where
            \begin{itemize}
                \item $\mathbb{P} = \temp{\parents{\ecn{X}{Y}{\SCG}}{\SCG}}{t}{t-\gamma-\gamma_{max}}$, and
            \end{itemize}
        \item \label{adjcond1c} $\mathbb{Z} = \mathbb{Z}_1\sqcup\mathbb{Z}_2$ and
            $\gamma=0$
            where
            \begin{itemize}
                \item $\ecnbd{X}{Y}{\SCG}{\mathbb{Z}_2} =\{V \in \ecn{X}{Y}{\SCG} \mid \exists \text{ a backdoor path } \pi_{BD}=\langle X \leftarrow \cdots Y\rangle \in \SCG \st V \in \pi_{BD} \text{ and } \forall \text{ collider }\langle \cdot \rightarrow C \leftarrow \cdot \rangle \subseteq \pi_{BD}, \exists t_C \st C_{t_C}\in\mathbb{Z}_2\}$, and
                \item $\mathbb{Z}_1 = \left( \temp{\parents{\cn{X}{Y}{\SCG}}{\SCG}}{t}{t-\gamma-\gamma_{max}} \cup \temp{\parents{\ecnbd{X}{Y}{\SCG}{\mathbb{Z}_2}}{\SCG}}{t}{t-\gamma-\gamma_{max}}\right)\backslash \mathbb{D}$, and
            \end{itemize}
        \item \label{adjcond1d} $\mathbb{P} \backslash \mathbb{D} \subseteq \mathbb{Z}$ and $\cycles{X}{\SCG} \neq \emptyset$ and $\gamma>0$ where
        \begin{itemize}
            \item $\mathbb{P} = \temp{\parents{X}{\SCG}}{t}{t-\gamma-\gamma_{max}+1} \cup \temp{\parents{\ecn{X}{Y}{\SCG}}{\SCG}}{t}{t-\gamma-\gamma_{max}}$
        \end{itemize}
    \end{enumerate}
    \item[Condition~\ref{cond2} is satisfied] \label{adjcond2} and one of the following holds:
    \begin{enumerate}
        \item \label{adjcond2a} $\mathbb{P}\backslash\mathbb{D} \subseteq \mathbb{Z}$ where
            \begin{itemize}
                \item $\mathbb{P} = \temp{\parents{\scc{X}{\SCG}}{\SCG}}{t-\gamma}{t-\gamma-\gamma_{max}}$, and
            \end{itemize}
        \item \label{adjcond2b} $\mathbb{Z} = \mathbb{Z}_1\sqcup\mathbb{Z}_2$
            where
            \begin{itemize}
                \item $\ecnbd{X}{Y}{\SCG}{\mathbb{Z}_2} =\{V \in \ecn{X}{Y}{\SCG} \mid \exists \text{ a backdoor path } \pi_{BD}=\langle X \leftarrow \cdots Y\rangle \in \SCG \st V \in \pi_{BD} \text{ and } \forall \text{ collider }\langle \cdot \rightarrow C \leftarrow \cdot \rangle \subseteq \pi_{BD}, \exists t_C \st C_{t_C}\in\mathbb{Z}_2\}$, and
                \item $\mathbb{Z}_1 = \left( \temp{\parents{\cn{X}{Y}{\SCG}}{\SCG}}{t}{t-\gamma-\gamma_{max}} \cup \temp{\parents{\ecnbd{X}{Y}{\SCG}{\mathbb{Z}_2}}{\SCG}}{t}{t-\gamma-\gamma_{max}}\right)\backslash \mathbb{D}$, and
            \end{itemize}
    \end{enumerate}
    \item[Condition~\ref{cond3} is satisfied] \label{adjcond3} and
        $\left( \mathbb{P}_X \cup \mathbb{P}_Y\right) \backslash \mathbb{D} \subseteq \mathbb{Z}$, and
            ($\mathbb{P}_X^{all} \subseteq \mathbb{Z}$, or
            $\mathbb{P}_Y^{all} \subseteq \mathbb{Z}$)
        where
        \begin{itemize}
            \item $\mathbb{P}_X = \temp{\parents{X}{\SCG}}{t}{t-\gamma_{max}}$
            \item $\mathbb{P}_Y = \temp{\parents{Y}{\SCG}}{t}{t-\gamma_{max}}$
            \item $\mathbb{P}_X^{all} = \temp{\parents{X}{\SCG}}{t-\gamma-\gamma_{max}}{t-\gamma-\gamma_{max}}$
            \item $\mathbb{P}_Y^{all} = \temp{\parents{Y}{\SCG}}{t-\gamma-\gamma_{max}}{t-\gamma-\gamma_{max}}$.
        \end{itemize}
\end{itemize}
\end{definition}

The name "SCG-back-door" reflects the fact that any set that satisfies this criterion in a given SCG satisfies the classical back-door criterion in every FT-DAG that is compatible with the SCG. We organized Conditions \ref{cond1} and \ref{cond2} in Definition \ref{def:backdoor} to follow a specific logical order. The intuition behind this structure is as follows. In the first item, we aim to block all back-door paths from the side of the treatment, by adjusting for variables that have an arrow pointing into the strongly connected component of $X$. In the last items, we shift focus to the outcome side, adjusting for variables that are parents of the causal nodes or extended causal nodes.
As for Condition \ref{cond3}, which is more restrictive, it requires including all variables that are parents of either $X$ or $Y$ over a restricted time window. Outside of this window, one can choose to block back-door paths either from the side of $X$ or from the side of $Y$, following the same approach as in Conditions \ref{cond1} and \ref{cond2}.

These conditions allow us to explore a broader range of adjustment sets compared to those presented in \cite{Assaad_2024}, which only propose two possible sets: $\mathbb{A}^1$, the largest acceptable set that includes all temporal variables except those that are descendants of the treatment, and a second $\mathbb{A}^2$, shorter alternative set that considers only the ancestors of the treatment and the outcome that are not descendants of the treatment $X_{t-\gamma}$. 
In particular, both adjustment sets $\mathbb{A}^1$ and $\mathbb{A}^2$ can be recovered using the SCG-back-door criterion.
Regarding $\mathbb{A}^1$, since it includes all temporal variables that are not descendants of the treatment, it necessarily contains the variables allowed by the first item in Definition~\ref{def:backdoor}.
In fact, the set $\mathbb{P}\backslash\mathbb{D}$ defined in the first item is a subset of $\mathbb{A}^1$, and extending it to include all such non-descendant temporal variables yields exactly $\mathbb{A}^1$.
Indeed, $\mathbb{A}^1$ satisfies the overall conditions of the SCG-back-door criterion: $\posdescendants{X_{t-\gamma}}{\SCG} \cap \mathbb{Z} = \emptyset$, and $\forall V_{t'} \in \mathbb{Z},~ t - \gamma - \gamma_{\max} \leq t' \leq t$.
For example, in Figure~\ref{fig:simple}, with $\gamma = 1$ and $\gamma_{\max} = 1$, we obtain $\mathbb{A}^1 = \{W_{t-2}, W_{t-1}, W_{t}, X_{t-2}, Y_{t-2}, Y_{t-1}\}$, which contains all variables satisfying the above conditions.
In contrast, the first item of the SCG-back-door criterion accepts, for example, the set $\mathbb{Z} = \{X_{t-2}, W_{t-2}, W_{t-1}\}$, showing that $\mathbb{Z} \subseteq \mathbb{A}^1$.
A similar argument holds for $\mathbb{A}^2$.
Since $\mathbb{A}^2$ includes all possible temporal ancestors of $X_{t-\gamma}$ and $Y_t$ that are not descendants $X_{t-\gamma}$, it necessarily contains the adjustment sets $\mathbb{P}\backslash\mathbb{D}$ in the first item.
Extending $\mathbb{P}\backslash\mathbb{D}$ from the first item to include all remaining ancestors of $X$ and those of $Y$ reconstructs $\mathbb{A}^2$ precisely.
The same intuition holds for the other items of the SCG-back-door criterion.
Thus,  each item in our SCG-back-door criterion returns a large enough collection of adjustment sets that include $\mathbb{A}^1$ or $\mathbb{A}^2$. In addition, all these adjustment sets are subsets of  $\mathbb{A}^1$.

The following theorem formally establishes the correctness of the SCG-back-door criterion for identifying by adjustment the micro causal effect.

\begin{restatable}{theorem}{theoremSCGbackdoor}
    Given an SCG $\SCG$, an ordered pair of variables $(X,Y)$ and a lag $\gamma$, if $\mathbb{Z}$ satisfies the SCG-back-door criterion and if, for all values $\mathbb{z}$ of variables $\mathbb{Z}$, $\probac{x_{t-\gamma}}{\mathbb{z}}>0$ then the micro causal effect of $X_{t-\gamma}$ on $Y_t$ is identifiable and is given by Equation~\ref{eq:adjustment_formula}. 
\end{restatable}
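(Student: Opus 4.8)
The plan is to reduce the statement to a purely graphical claim about the FT-DAGs compatible with $\SCG$. The key step is to prove: if $\mathbb{Z}$ satisfies the SCG-back-door criterion relative to $(X_{t-\gamma},Y_t)$ in $\SCG$ (so in particular $X\in\ancestors{Y}{\SCG}$, which is the case Definition~\ref{def:backdoor} addresses), then $\mathbb{Z}$ satisfies the classical back-door criterion relative to $(X_{t-\gamma},Y_t)$ in \emph{every} FT-DAG $\mathcal{G}$ compatible with $\SCG$. Granting this, the theorem follows at once: the (unknown) true FT-DAG of the underlying DTDSCM is one such $\mathcal{G}$, so $\mathbb{Z}$ satisfies the classical back-door criterion in it; combined with the hypothesis $\probac{x_{t-\gamma}}{\mathbb{z}}>0$ (which, together with $\proba{\mathbb{z}}>0$, makes every summand of Equation~\ref{eq:adjustment_formula} well defined), the classical back-door adjustment theorem \citep{Pearl_1993BackDoor} yields Equation~\ref{eq:adjustment_formula}.

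So fix a compatible FT-DAG $\mathcal{G}$. The forbidden-node half of the back-door criterion is immediate: every descendant of $X_{t-\gamma}$ in $\mathcal{G}$ lying in the window $\{t':t-\gamma-\gamma_{max}\le t'\le t\}$ belongs, by definition of possible descendants, to $\posdescendants{X_{t-\gamma}}{\SCG}=\mathbb{D}$; since $\mathbb{Z}$ is disjoint from $\mathbb{D}$ and lies entirely in that window, $\mathbb{Z}$ contains no descendant of $X_{t-\gamma}$ in $\mathcal{G}$. The work is in the other half --- showing $\mathbb{Z}$ blocks every back-door path from $X_{t-\gamma}$ to $Y_t$ in $\mathcal{G}$ --- which I would carry out by a case analysis following the branches of Definition~\ref{def:backdoor} and of Theorem~\ref{theorem:identifiability_total_effect_SCG}. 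Two elementary observations organise it. First, the blocking nodes I will exhibit are always either parents of $X_{t-\gamma}$ (at times in $[t-\gamma-\gamma_{max},t-\gamma]$) or parents of a node lying on a directed $X_{t-\gamma}$-to-$Y_t$ path in $\mathcal{G}$ (such nodes lie at times in $[t-\gamma,t]$, so their parents at times in $[t-\gamma-\gamma_{max},t]$); in either case the blocking node lies inside the window of the criterion and projects to a parent, in $\SCG$, of the macro node it comes from, so the SCG-level parent sets of Definition~\ref{def:backdoor}, restricted to the window, really do contain all such nodes, even though the back-door path itself may wander far outside the window. Second, a parent of $X_{t-\gamma}$ occurring as the first node of a back-door path is a non-collider there, so adjusting for it blocks that path.

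For the treatment-side branches --- those requiring $(\mathbb{P}\backslash\mathbb{D})\subseteq\mathbb{Z}$ with $\mathbb{P}$ built from $\parents{\scc{X}{\SCG}}{\SCG}$, under Condition~\ref{cond1} or Condition~\ref{cond2} --- the argument is: under Condition~\ref{cond1}, $\scc{X}{\SCG}=\{X\}$, so a back-door path enters a genuine parent $P_{t'}$ of $X_{t-\gamma}$; $P$ is not a descendant of $X$ in $\SCG$ (else $P\in\scc{X}{\SCG}$), so $P_{t'}\in\mathbb{P}\backslash\mathbb{D}\subseteq\mathbb{Z}$, and as a non-collider it blocks the path. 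Under Condition~\ref{cond2} ($\gamma=0$) a back-door path may first travel inside $\scc{X}{\SCG}$; here the hypothesis $\ancestors{Y}{\SCG\backslash\{X\}}\cap\scc{X}{\SCG}=\emptyset$ forces the path to exit the component before reaching $Y_t$, and a bookkeeping of edge orientations at the first exit shows it meets a node of $\mathbb{P}\backslash\mathbb{D}\subseteq\mathbb{Z}$ as a non-collider. For the outcome-side branches --- those using $\parents{\cn{X}{Y}{\SCG}}{\SCG}$ or $\parents{\ecn{X}{Y}{\SCG}}{\SCG}$ --- I would adapt the argument of \citet{Henckel_2022} that parents of the causal nodes form a valid adjustment set: a back-door path, followed from $Y_t$, meets a node on some directed $X_{t-\gamma}$-to-$Y_t$ path; that meeting point --- or, under cycles, a node strongly connected to it, which is exactly why $\ecn{X}{Y}{\SCG}$ closes the causal nodes under strongly connected components --- has a parent on the path in $\mathbb{P}\backslash\mathbb{D}\subseteq\mathbb{Z}$ and is a non-collider there. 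The extra block $\temp{\parents{X}{\SCG}}{t}{t-\gamma-\gamma_{max}+1}$ in the branch with $\cycles{X}{\SCG}\neq\emptyset$ and $\gamma>0$ neutralises back-door paths created by past copies of $X$; the split $\mathbb{Z}=\mathbb{Z}_1\sqcup\mathbb{Z}_2$ in the two $\gamma=0$ branches handles back-door paths that are collider paths, $\mathbb{Z}_2$ recording which colliders are deliberately opened and $\ecnbd{X}{Y}{\SCG}{\mathbb{Z}_2}$ isolating the extra extended causal nodes whose parents then go into $\mathbb{Z}_1$ to re-block those paths; and for Condition~\ref{cond3} the narrow window forces adjusting on all recent parents of both $X$ and $Y$, while outside it one of $\mathbb{P}_X^{all}\subseteq\mathbb{Z}$ or $\mathbb{P}_Y^{all}\subseteq\mathbb{Z}$ seals the remaining back-door paths from the treatment side or the outcome side.

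I expect the main obstacle to be this case analysis, and within it two points. First, strongly connected components: a single SCG node unfolds into unboundedly many FT-DAG copies, and self-loops on nodes other than $X$ create future descendants of $X_{t-\gamma}$ that $\mathbb{D}$ must excise while still leaving enough of their parents in $\mathbb{Z}$ to block every path; I would handle this by inducting on where a back-door path first exits the relevant strongly connected component, keeping the argument inside the finite window via the first observation above. Second, the collider bookkeeping in the $\gamma=0$ branches: one must verify that $\mathbb{Z}_1\sqcup\mathbb{Z}_2$ opens precisely the intended colliders and re-blocks every path thereby opened, with nothing slipping through because some collider's descendant lies in $\mathbb{Z}_1$; I would track, for each back-door path, its last collider and the segment joining it to $Y_t$, and show that segment meets an adjusted non-collider. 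The remaining temporal arithmetic --- checking in each branch that the stated window captures exactly the parents one needs, given the maximum lag $\gamma_{max}$ --- is routine but must be repeated branch by branch.
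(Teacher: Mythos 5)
Your proposal follows essentially the same route as the paper: reduce the theorem to showing that any set satisfying the SCG-back-door criterion satisfies the classical back-door criterion in every FT-DAG compatible with $\SCG$ (the non-descendant half coming directly from $\mathbb{Z}\cap\posdescendants{X_{t-\gamma}}{\SCG}=\emptyset$), and then block every back-door path by a branch-by-branch case analysis in which each path is either cut by a collider that is a (possible) descendant of $X_{t-\gamma}$ or by an adjusted non-collider drawn from the relevant SCG-level parent set. The key ideas you flag for each branch (parents of $\scc{X}{\SCG}$ on the treatment side, parents of $\cn{X}{Y}{\SCG}$ or $\ecn{X}{Y}{\SCG}$ on the outcome side, the $\mathbb{Z}_1\sqcup\mathbb{Z}_2$ collider bookkeeping for $\gamma=0$, and the narrow-window argument for Condition~\ref{cond3}) match the paper's proof, which carries out exactly this case analysis.
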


We note that the condition $\posdescendants{X_{t-\gamma}}{\SCG}\cap\mathbb{Z}=\emptyset$ in the SCG-back-door criterion is too restrictive. This restriction reflects our focus on extending the back-door criterion, which eventhough it is useful at enumerating many adjustment sets, it\textemdash unlike the generalized adjustment criterion\textemdash does not aim to enumerate \emph{all} possible adjustment sets. 
In addition, there are cases where the SCG-back-door criterion fails to capture certain valid adjustment sets. Consider one FT-DAG compatible with the SCG shown in Figure~\ref{fig:incompletness:SCG}, where the goal is to estimate the effect of $X_t$ on $Y_t$. Adjusting for the variables highlighted in red in Figures~\ref{fig:incompletness:DAG1} and \ref{fig:incompletness:DAG2} satisfies items 1 and 2 of Condition~\ref{cond1} of the SCG-back-door criterion, respectively.
However, if the true FT-DAG were known, it would be possible to identify alternative valid adjustment sets that are not selected by the criterion. For example, in Figure~\ref{fig:incompletness:DAG2}, the back-door paths initially blocked by adjusting for $R_t$ could equally be blocked by adjusting for $U_{t-1}$ and $U_t$ as in Figure ~\ref{fig:incompletness:DAG4}. 
However, interestingly, as we will show in the next section, although the SCG-back-door criterion does not identify every single valid adjustment set, it does generate a sufficient collection that includes the quasi-optimal adjustment set.

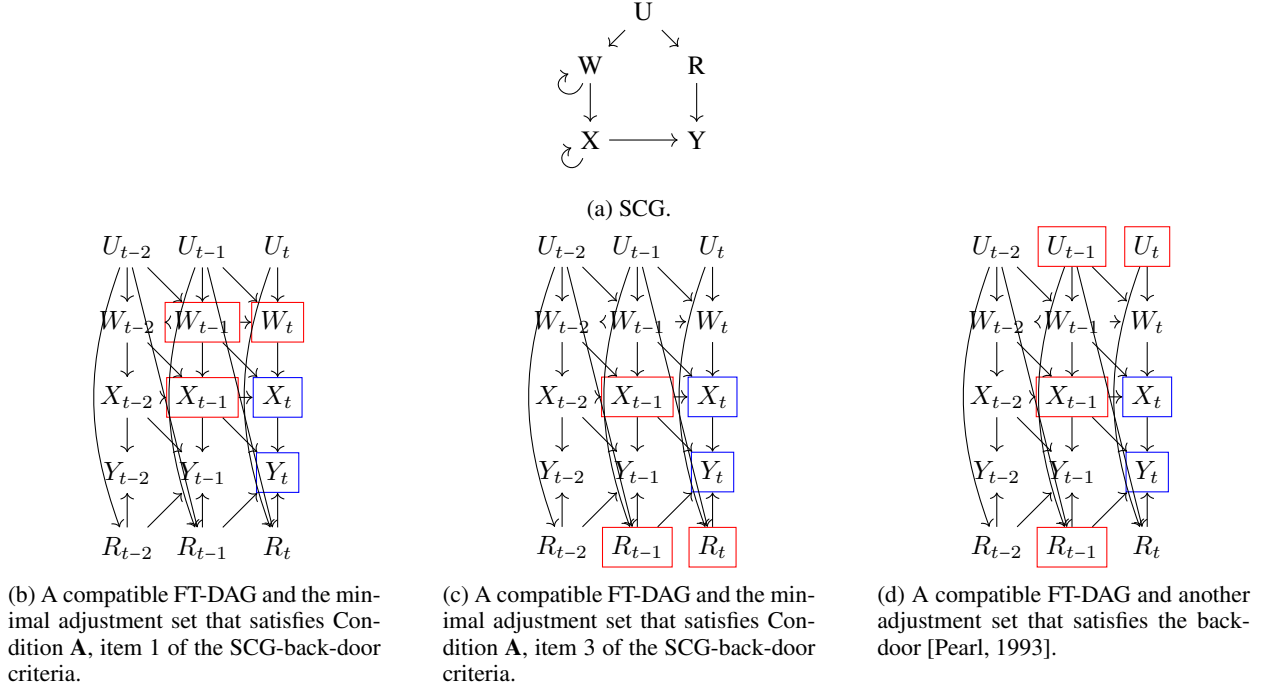
\begin{figure}[t]
    \centering

\begin{subfigure}[t]{\linewidth}
\centering
\begin{tikzpicture}
    
    \node (U) {U};
    \node (W) [below left of= U] {W};
    \node (R) [below right of= U] {R};
    \node (X) [below of= W] {X};
    \node (Y) [below of= R] {Y};
    
    \draw [->] (X) -- (Y);
    \draw[->, looseness=4, out=-90-45/2, in=180+45/2] (X) to (X);
    \draw [->] (W) -- (X);
    \draw [->] (R) -- (Y);
    \draw [->] (U) -- (R);
    \draw [->] (U) -- (W);
    \draw[->, looseness=4, out=-90-45/2, in=180+45/2] (W) to (W);

\end{tikzpicture}
\caption{\centering SCG.}
\label{fig:incompletness:SCG}
\end{subfigure}

\begin{subfigure}[t]{.3\linewidth}
\centering
\begin{tikzpicture}

  \node (U2) {$U_{t-2}$};
  \node (U1) [right of= U2] {$U_{t-1}$};
  \node (U0) [right of= U1] {$U_{t}$};

  \node (A2) [below of= U2]{$W_{t-2}$};
  \node [draw=red] (A1) [right of= A2] {$W_{t-1}$};
  \node [draw=red] (A0) [right of= A1] {$W_{t}$};

  \node (X2) [below of= A2] {$X_{t-2}$};
  \node [draw=red] (X1) [right of= X2] {$X_{t-1}$};
  \node [draw=blue] (X0) [right of= X1] {$X_{t}$};

  \node (Y2) [below of= X2] {$Y_{t-2}$};
  \node (Y1) [right of= Y2] {$Y_{t-1}$};
  \node [draw=blue] (Y0) [right of= Y1] {$Y_{t}$};

  \node (V2) [below of= Y2] {$R_{t-2}$};
  \node (V1) [right of= V2] {$R_{t-1}$};
  \node (V0) [right of= V1] {$R_{t}$};

    \draw [->] (U2) -- (A2);
    \draw [->] (U2) -- (A1);
    \draw [->] (U1) -- (A1);
    \draw [->] (U1) -- (A0);
    \draw [->] (U0) -- (A0);

    \draw[->, bend right=20] (U2) to (V2);
    \draw[->] (U2) to (V1);
    \draw[->, bend right=20] (U1) to (V1);
    \draw[->] (U1) to (V0);
    \draw[->, bend right=20] (U0) to (V0);

    \draw [->] (A2) -- (X2);
    \draw [->] (A2) -- (X1);
    \draw [->] (A1) -- (X1);
    \draw [->] (A1) -- (X0);
    \draw [->] (A0) -- (X0);

    \draw [->] (X2) -- (Y2);
    \draw [->] (X2) -- (Y1);
    \draw [->] (X1) -- (Y1);
    \draw [->] (X1) -- (Y0);    
    \draw [->] (X0) -- (Y0);

    \draw [->] (X2) -- (X1);
    \draw [->] (X1) -- (X0);
    \draw [->] (A2) -- (A1);
    \draw [->] (A1) -- (A0);

    \draw [->] (V2) -- (Y2);
    \draw [->] (V2) -- (Y1);
    \draw [->] (V1) -- (Y1);
    \draw [->] (V1) -- (Y0);
    \draw [->] (V0) -- (Y0);

\end{tikzpicture}
\caption{A compatible FT-DAG and the minimal adjustment set that satisfies Condition~\ref{cond1}, item~\ref{adjcond1a} of the SCG-back-door criteria.}
\label{fig:incompletness:DAG1}
\end{subfigure}
\hfill
\begin{subfigure}[t]{.3\linewidth}
\centering
\begin{tikzpicture}

  \node (U2) {$U_{t-2}$};
  \node (U1) [right of= U2] {$U_{t-1}$};
  \node (U0) [right of= U1] {$U_{t}$};

  \node (A2) [below of= U2]{$W_{t-2}$};
  \node (A1) [right of= A2] {$W_{t-1}$};
  \node (A0) [right of= A1] {$W_{t}$};

  \node (X2) [below of= A2] {$X_{t-2}$};
  \node [draw=red] (X1) [right of= X2] {$X_{t-1}$};
  \node [draw=blue] (X0) [right of= X1] {$X_{t}$};

  \node (Y2) [below of= X2] {$Y_{t-2}$};
  \node (Y1) [right of= Y2] {$Y_{t-1}$};
  \node [draw=blue] (Y0) [right of= Y1] {$Y_{t}$};

  \node (V2) [below of= Y2] {$R_{t-2}$};
  \node [draw=red] (V1) [right of= V2] {$R_{t-1}$};
  \node [draw=red] (V0) [right of= V1] {$R_{t}$};

    \draw [->] (U2) -- (A2);
    \draw [->] (U2) -- (A1);
    \draw [->] (U1) -- (A1);
    \draw [->] (U1) -- (A0);
    \draw [->] (U0) -- (A0);

    \draw[->, bend right=20] (U2) to (V2);
    \draw[->] (U2) to (V1);
    \draw[->, bend right=20] (U1) to (V1);
    \draw[->] (U1) to (V0);
    \draw[->, bend right=20] (U0) to (V0);

    \draw [->] (A2) -- (X2);
    \draw [->] (A2) -- (X1);
    \draw [->] (A1) -- (X1);
    \draw [->] (A1) -- (X0);
    \draw [->] (A0) -- (X0);

    \draw [->] (X2) -- (Y2);
    \draw [->] (X2) -- (Y1);
    \draw [->] (X1) -- (Y1);
    \draw [->] (X1) -- (Y0);    
    \draw [->] (X0) -- (Y0);

    \draw [->] (X2) -- (X1);
    \draw [->] (X1) -- (X0);
    \draw [->] (A2) -- (A1);
    \draw [->] (A1) -- (A0);

    \draw [->] (V2) -- (Y2);
    \draw [->] (V2) -- (Y1);
    \draw [->] (V1) -- (Y1);
    \draw [->] (V1) -- (Y0);
    \draw [->] (V0) -- (Y0);

\end{tikzpicture}
\caption{A compatible FT-DAG and the minimal adjustment set that satisfies Condition~\ref{cond1}, item~\ref{adjcond1c} of the SCG-back-door criteria.}
\label{fig:incompletness:DAG2}
\end{subfigure}
\hfill
\begin{subfigure}[t]{.3\linewidth}
\centering
\begin{tikzpicture}

  \node (U2) {$U_{t-2}$};
  \node [draw=red] (U1) [right of= U2] {$U_{t-1}$};
  \node [draw=red] (U0) [right of= U1] {$U_{t}$};

  \node (A2) [below of= U2]{$W_{t-2}$};
  \node (A1) [right of= A2] {$W_{t-1}$};
  \node (A0) [right of= A1] {$W_{t}$};

  \node (X2) [below of= A2] {$X_{t-2}$};
  \node [draw=red] (X1) [right of= X2] {$X_{t-1}$};
  \node [draw=blue] (X0) [right of= X1] {$X_{t}$};

  \node (Y2) [below of= X2] {$Y_{t-2}$};
  \node (Y1) [right of= Y2] {$Y_{t-1}$};
  \node [draw=blue] (Y0) [right of= Y1] {$Y_{t}$};

  \node (V2) [below of= Y2] {$R_{t-2}$};
  \node [draw=red] (V1) [right of= V2] {$R_{t-1}$};
  \node (V0) [right of= V1] {$R_{t}$};

    \draw [->] (U2) -- (A2);
    \draw [->] (U2) -- (A1);
    \draw [->] (U1) -- (A1);
    \draw [->] (U1) -- (A0);
    \draw [->] (U0) -- (A0);

    \draw[->, bend right=20] (U2) to (V2);
    \draw[->] (U2) to (V1);
    \draw[->, bend right=20] (U1) to (V1);
    \draw[->] (U1) to (V0);
    \draw[->, bend right=20] (U0) to (V0);

    \draw [->] (A2) -- (X2);
    \draw [->] (A2) -- (X1);
    \draw [->] (A1) -- (X1);
    \draw [->] (A1) -- (X0);
    \draw [->] (A0) -- (X0);

    \draw [->] (X2) -- (Y2);
    \draw [->] (X2) -- (Y1);
    \draw [->] (X1) -- (Y1);
    \draw [->] (X1) -- (Y0);    
    \draw [->] (X0) -- (Y0);

    \draw [->] (X2) -- (X1);
    \draw [->] (X1) -- (X0);
    \draw [->] (A2) -- (A1);
    \draw [->] (A1) -- (A0);

    \draw [->] (V2) -- (Y2);
    \draw [->] (V2) -- (Y1);
    \draw [->] (V1) -- (Y1);
    \draw [->] (V1) -- (Y0);
    \draw [->] (V0) -- (Y0);

\end{tikzpicture}
\caption{A compatible FT-DAG and another adjustment set that satisfies the back-door~\citep{Pearl_1993BackDoor}.}
\label{fig:incompletness:DAG4}
\end{subfigure}

\caption{Example of an SCG and one compatible FT-DAG with three adjustment sets that satisfy condition~\ref{cond1} of the SCG-back-door criteria and one adjustment set that satisfies the back-door criteria in \cite{Pearl_1993BackDoor}.}
\label{fig:incompletness}
\end{figure}

\section{Finding the quasi-optimal adjustment set for the causal effect in SCGs}
\label{sec:optimal}

Firstly, let us recall the definition of an optimal adjustment set in a DAG \citep[Definition 3.12]{Henckel_2022} and adapt it in the case of FT-DAGs. 
\begin{definition}[Optimal adjustment set in an FT-DAG]
\label{def:opt-ftdag}
    Let $\FTCG$ be a FT-DAG and $(X_{t-\gamma},Y_{t})$ be two temporal variables of interest such that $X_{t-\gamma} \in \ancestors{Y_{t}}{\FTCG}$.
    We define the optimal adjustment set of the pair $(X_{t-\gamma},Y_{t})$ in the FT-DAG $\FTCG$ as follows:
    \begin{itemize}
        \item $\mathbb{P} = \parents{\cn{X_{t-\gamma}}{Y_t}{\FTCG}\backslash\{X\}}{\FTCG}$, and
        \item $\mathbb{D} = \descendants{X_{t-\gamma}}{\FTCG}$, and
        \item $\opt{X_{t-\gamma}}{Y_t}{\FTCG} = \mathbb{P} \backslash \mathbb{D}$.
    \end{itemize}
\end{definition}

Upon its introduction, $\opt{X_{t-\gamma}}{Y_t}{\FTCG}$ was shown to yield the smallest asymptotic variance when estimating causal effects under two conditions:  the SCM is linear and satisfies Assumption~\ref{ass:causal sufficiency}, and  the causal effect is estimated using the coefficient of the treatment in an ordinary least squares regression of the outcome on the treatment and a valid adjustment set~\cite{Henckel_2022}.
Subsequent work~\cite{Rotnitzky_2020} extended this optimality result to non-parametric SCMs\textemdash where no specific assumptions are made about the conditional distributions of nodes given their parents\textemdash and to non-parametric estimators, which do not rely on the conditional independencies encoded in the SCM.

However, defining optimality in SCGs is more nuanced than in FT-DAG. We consider that a sound definition of optimality should be that a set $\opt{X_{t-\gamma}}{Y_t}{\SCG}$ is optimal in an SCG $\SCG$ if  for all adjustment sets $\mathbb{Z}$ that are valid in any FT-DAG compatible with  $\SCG$ and for every distribution $\Pr$ compatible with any FT-DAG compatible with $\SCG$,
$$\var{\Pr}{\estimator{X_{t-\gamma}}{Y_t}{\opt{X_{t-\gamma}}{Y_t}{\SCG}}} \le\var{\Pr}{\estimator{X_{t-\gamma}}{Y_t}{\mathbb{Z}}}$$
where $\var{\Pr}{\cdot}$ represents the asymptotical variance under distribution $\Pr$, $\causaleffect{X_{t-\gamma}}{Y_t}{\mathbb{Z}} = \sum_{\mathbb{z}} E(Y_t\mid X_{t-\gamma} = x, \mathbb{Z}=\mathbb{z}) -E(Y_t\mid X_{t-\gamma} = x', \mathbb{Z}=\mathbb{z})$ and $\estimator{X_{t-\gamma}}{Y_t}{\mathbb{Z}}$ is a consistent estimator of $\causaleffect{X_{t-\gamma}}{Y_t}{\mathbb{Z}}$ (such as ordinary least squares regression) used to estimate the micro-level causal effect of $X_{t-\gamma}$ on $Y_t$ using the adjustment set $\mathbb{Z}$.
The problem of identifying an optimal adjustment set in SCGs proves to be highly challenging. In fact, if we limit ourselves to the results provided in \citep{Henckel_2022}, it becomes impossible to fully solve. More precisely, \citep[Theorem 1]{Henckel_2022} establishes an inequality that can be used to compare two adjustment sets and determine which one is more optimal\textemdash\ie, which yields a lower asymptotic variance\textemdash when the sets satisfy different graphical conditions. However, this result is not complete: there exist pairs of sets for which the theorem does not allow a clear comparison in terms of optimality (that said, their approach does guarantee the identification of at least one set that is optimal relative to all others).

In this work, instead of focusing on finding optimal adjustment sets in SCGs, we will instead focus on finding quasi-optimal sets (quasi since we limit ourselves to the results given in~\cite{Henckel_2022}). We say that a set $\qopt{X_{t-\gamma}}{Y_t}{\SCG}$ is quasi-optimal in an SCG if:
\begin{enumerate}[label=(\textbf{\roman*})]
    \item\label{item:1} $\qopt{X_{t-\gamma}}{Y_t}{\SCG}$ is a valid adjustment set for the SCG;
    \item\label{item:2}  there exists at least one FT-DAG $\FTCG$ compatible with $\SCG$ such that for all adjustment sets $\mathbb{Z}$ that are valid in any FT-DAG compatible with  $\SCG$ and for every distribution $\Pr$   compatible $\FTCG$,
$$\var{\Pr}{\estimator{X_{t-\gamma}}{Y_t}{\qopt{X_{t-\gamma}}{Y_t}{\SCG}}} \le \var{\Pr}{\estimator{X_{t-\gamma}}{Y_t}{\mathbb{Z}}};$$
    \item\label{item:3} for every $\mathcal{G}$ compatible with $\mathcal{G}^s$,
    $\qopt{X_{t-\gamma}}{Y_t}{\SCG}$ is the \emph{smallest} among all valid adjustment for all FT-DAGs compatible with $\mathcal{G}^s$ that contains \emph{as much as possible} of the optimal adjustment set of every FT-DAG compatible with $\mathcal{G}^s$. 
\end{enumerate}
The item~\ref{item:1} mirrors the requirement for optimality in the standard FT-DAG setting. Items~\ref{item:2} and \ref{item:3} highlight the challenge of defining optimality in SCGs: since different FT-DAGs compatible with the same SCG may have different optimal sets, no single adjustment set can be optimal for all of them simultaneously. We argue that item~\ref{item:2} is essential for any reasonable notion of optimality in SCGs, while item~\ref{item:3} serves as a strong sufficiency criterion. 
While we do not present a formal proof, empirical results indicate that adjustment sets containing \emph{as much as possible} of the optimal set generally produce more accurate estimates than those that do not. This observation motivates the intuition behind item~\ref{item:3} and guides the design of the quasi-optimal adjustment set, which seeks to balance robustness and performance in the face of SCG ambiguity.

Next, we can characterize the optimal adjustment set given only the SCG.
\begin{definition}[Quasi-optimal adjustment set in an SCG]
\label{def:q-opt-scg}
    Let $\SCG$ be an SCG and $(X_{t-\gamma},Y_{t})$ be two temporal variables of interest such that $X \in \ancestors{Y}{\SCG}$.
    We define the quasi-optimal adjustment set of the pair $(X_{t-\gamma},Y_{t})$ in the SCG $\SCG$ as         $\qopt{X_{t-\gamma}}{Y_{t}}{\SCG} = \mathbb{P} \backslash \mathbb{D}$, where $\mathbb{D} = \posdescendants{X_{t-\gamma}}{\SCG}$, and :
    \begin{itemize}
        \item If Condition~\ref{cond1} is verified and $\cycles{X}{\SCG} = \emptyset$ and $\gamma>0$:\\ $\mathbb{P} = \temp{\parents{\ecn{X}{Y}{\SCG}}{\SCG}}{t}{t-\gamma-\gamma_{max}}$, and
        \item If Condition~\ref{cond1} is verified and $\gamma=0$:\\ $\mathbb{P} = \temp{\parents{\cn{X}{Y}{\SCG}}{\SCG}}{t}{t-\gamma-\gamma_{max}} \cup \temp{\parents{\ecnbd{X}{Y}{\SCG}{\emptyset}}{\SCG}}{t}{t-\gamma-\gamma_{max}}$, and
        \item If Condition~\ref{cond1} is verified and $\cycles{X}{\SCG} \neq \emptyset$ and $\gamma>0$:\\ $\mathbb{P} = \temp{\parents{X}{\SCG}}{t}{t-\gamma-\gamma_{max}+1} \cup \temp{\parents{\ecn{X}{Y}{\SCG}}{\SCG}}{t}{t-\gamma-\gamma_{max}}$, and
        \item If Condition~\ref{cond2} is verified: \\
        $\mathbb{P} = \temp{\parents{\cn{X}{Y}{\SCG}}{\SCG}}{t}{t-\gamma-\gamma_{max}} \cup \temp{\parents{\ecnbd{X}{Y}{\SCG}{\emptyset}}{\SCG}}{t}{t-\gamma-\gamma_{max}}$, and
        \item If Condition~\ref{cond3} is verified: \\
        $\mathbb{P} = \temp{\parents{Y}{\SCG}}{t}{t-\gamma-\gamma_{max}} \cup \temp{\parents{X}{\SCG}}{t}{t-\gamma_{max}}$.
    \end{itemize}
\end{definition}


In the following,  we  give three propositions that show that the  set given in Definition~\ref{def:q-opt-scg} satisfies the three items given above, therefore corresponds to the actual quasi-optimal set.

\begin{restatable}{proposition}{propositionOptimalIsGivenBySCGBackDoor}
\label{prop:OptimalIsGivenBySCGBackDoor}
    Let $\SCG$ be an SCG and $(X_{t-\gamma},Y_{t})$ be two temporal variables of interest such that $X \in \ancestors{Y}{\SCG}$ and such that the micro causal effect from $X_{t-\gamma}$ to $Y_{t}$ is identifiable by Theorem~\ref{theorem:identifiability_total_effect_SCG}.
    The set $\qopt{X_{t-\gamma}}{Y_{t}}{\SCG}$ satisfies the SCG-back-door criterion.
\end{restatable}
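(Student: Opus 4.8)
The plan is to prove the claim by a case analysis that follows the very case split used to \emph{define} $\qopt{X_{t-\gamma}}{Y_{t}}{\SCG}$ in Definition~\ref{def:q-opt-scg}. Since $X\in\ancestors{Y}{\SCG}$ and the micro causal effect is identifiable, Theorem~\ref{theorem:identifiability_total_effect_SCG} forces one of Conditions~\ref{cond1}, \ref{cond2}, \ref{cond3} to hold; these three are mutually exclusive (Condition~\ref{cond1} asks $\scc{X}{\SCG}=\{X\}$, whereas Conditions~\ref{cond2} and \ref{cond3} both place another node in the strongly connected component of $X$), and within Condition~\ref{cond1} the sub-cases ``$\gamma=0$'', ``$\gamma>0$ and $\cycles{X}{\SCG}=\emptyset$'', ``$\gamma>0$ and $\cycles{X}{\SCG}\neq\emptyset$'' exhaust the remaining possibilities, so exactly one branch of Definition~\ref{def:q-opt-scg} applies. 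For that branch I would exhibit the matching item of the SCG-back-door criterion of Definition~\ref{def:backdoor}, instantiate its free parameters ($\mathbb{Z}_1$, $\mathbb{Z}_2$, and the choice of disjunct where applicable), and then confirm the two side-conditions shared by all branches of the criterion.

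First I would dispose of those two shared side-conditions once and for all. In every branch $\qopt{X_{t-\gamma}}{Y_{t}}{\SCG}=\mathbb{P}\setminus\mathbb{D}$ with $\mathbb{D}=\posdescendants{X_{t-\gamma}}{\SCG}$, so $\mathbb{D}\cap\qopt{X_{t-\gamma}}{Y_{t}}{\SCG}=\emptyset$ is immediate. Moreover every $\mathbb{P}$ appearing in Definition~\ref{def:q-opt-scg} is built only from sets of the form $\temp{\cdot}{t}{t-\gamma-\gamma_{max}}$, $\temp{\cdot}{t}{t-\gamma-\gamma_{max}+1}$ and $\temp{\cdot}{t}{t-\gamma_{max}}$, and, since $\gamma\ge 0$, each of these contains only temporal variables indexed in $[t-\gamma-\gamma_{max},t]$; hence every $V_{t'}\in\qopt{X_{t-\gamma}}{Y_{t}}{\SCG}$ satisfies $t-\gamma-\gamma_{max}\le t'\le t$, which is the temporal-window requirement.

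The core of the argument is then the branch-by-branch matching. Under Condition~\ref{cond1}: if $\gamma>0$ and $\cycles{X}{\SCG}=\emptyset$, then $\mathbb{P}=\temp{\parents{\ecn{X}{Y}{\SCG}}{\SCG}}{t}{t-\gamma-\gamma_{max}}$ is exactly the set $\mathbb{P}$ of item~\ref{adjcond1b}, and $\mathbb{Z}=\mathbb{P}\setminus\mathbb{D}$ satisfies $\mathbb{P}\setminus\mathbb{D}\subseteq\mathbb{Z}$ with equality while the extra hypothesis $\cycles{X}{\SCG}=\emptyset$ holds by assumption; if $\gamma=0$, then $\qopt{X_{t-\gamma}}{Y_{t}}{\SCG}$ coincides with $\mathbb{Z}_1$ of item~\ref{adjcond1c} under the choice $\mathbb{Z}_2=\emptyset$ (so that $\ecnbd{X}{Y}{\SCG}{\mathbb{Z}_2}$ reduces to $\ecnbd{X}{Y}{\SCG}{\emptyset}$, the term that actually appears in the definition), and then $\mathbb{Z}=\mathbb{Z}_1\sqcup\emptyset$ with $\gamma=0$; if $\gamma>0$ and $\cycles{X}{\SCG}\neq\emptyset$, the formula for $\mathbb{P}$ matches item~\ref{adjcond1d} verbatim, including its side-hypotheses. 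Under Condition~\ref{cond2}, $\qopt{X_{t-\gamma}}{Y_{t}}{\SCG}$ is again $\mathbb{Z}_1$ of item~\ref{adjcond2b} with $\mathbb{Z}_2=\emptyset$. Under Condition~\ref{cond3} (so $\gamma=1$), I would use the disjunct ``$\mathbb{P}_Y^{all}\subseteq\mathbb{Z}$'': since $\mathbb{P}=\temp{\parents{Y}{\SCG}}{t}{t-\gamma-\gamma_{max}}\cup\temp{\parents{X}{\SCG}}{t}{t-\gamma_{max}}$ contains both $\mathbb{P}_X$ and $\mathbb{P}_Y$, one gets $(\mathbb{P}_X\cup\mathbb{P}_Y)\setminus\mathbb{D}\subseteq\mathbb{Z}$, and it remains to show $\mathbb{P}_Y^{all}\subseteq\mathbb{Z}$.

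The step I expect to be the main obstacle is precisely this last point in the Condition~\ref{cond3} branch: the criterion there forces one of $\mathbb{P}_X^{all}$, $\mathbb{P}_Y^{all}$ to be fully included, whereas the quasi-optimal set is assembled from parents of $Y$ rather than of $X$, so I must check that deleting $\mathbb{D}=\posdescendants{X_{t-\gamma}}{\SCG}$ never removes an element of $\mathbb{P}_Y^{all}$. This hinges on the temporal observation that $\mathbb{P}_Y^{all}=\temp{\parents{Y}{\SCG}}{t-\gamma-\gamma_{max}}{t-\gamma-\gamma_{max}}$ lives entirely at time $t-1-\gamma_{max}$, which is strictly earlier than $t-\gamma=t-1$ (a $2$-cycle $X\rightleftarrows Y$ forces $\gamma_{max}\ge 1$), while every possible descendant of $X_{t-\gamma}$ is at a time $\ge t-\gamma$; hence $\mathbb{P}_Y^{all}\cap\mathbb{D}=\emptyset$, and since $\mathbb{P}_Y^{all}\subseteq\temp{\parents{Y}{\SCG}}{t}{t-\gamma-\gamma_{max}}\subseteq\mathbb{P}$ we conclude $\mathbb{P}_Y^{all}\subseteq\mathbb{P}\setminus\mathbb{D}=\mathbb{Z}$. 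All the other branches reduce to a syntactic identification with the corresponding item of Definition~\ref{def:backdoor}, on top of the two universal side-conditions already dispatched.
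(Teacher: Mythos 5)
Your proposal is correct and follows essentially the same route as the paper's proof: a branch-by-branch matching of each case of Definition~\ref{def:q-opt-scg} to the corresponding item of Definition~\ref{def:backdoor} (with $\mathbb{Z}_2=\emptyset$ in the items involving $\ecnbd{X}{Y}{\SCG}{\cdot}$). You additionally spell out the two shared side-conditions and the temporal argument showing $\mathbb{P}_Y^{all}\cap\posdescendants{X_{t-\gamma}}{\SCG}=\emptyset$ in the Condition~\ref{cond3} branch, details the paper's proof leaves implicit; both checks are sound.
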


\begin{restatable}{proposition}{propositionOptimalInOneFTDAG}
\label{prop:OptimalInOneFTDAG}
    Let $\SCG$ be an SCG and $(X_{t-\gamma},Y_{t})$ be two temporal variables of interest such that $X \in \ancestors{Y}{\SCG}$ and such that the micro causal effect from $X_{t-\gamma}$ to $Y_{t}$ is identifiable by the SCG-back-door criterion.
    Suppose there exists a compatible FT-DAG $\FTCG$ such that $\opt{X_{t-\gamma}}{Y_t}{\FTCG}$ is a valid adjustment set in every compatible FT-DAG.
    There exist a compatible FT-DAG $\FTCG$ in which $\opt{X_{t-\gamma}}{Y_t}{\FTCG}=\qopt{X_{t-\gamma}}{Y_{t}}{\SCG}$.
\end{restatable}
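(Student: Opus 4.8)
The plan is to show that, under the hypothesis, the set $\qopt{X_{t-\gamma}}{Y_{t}}{\SCG}$ of Definition~\ref{def:q-opt-scg} is exactly the optimal set (in the sense of Definition~\ref{def:opt-ftdag}) of one carefully chosen compatible FT-DAG; the conclusion of the proposition then follows, and combined with the optimality result of \cite{Henckel_2022} it yields item~(ii) of the notion of quasi-optimality. Concretely, let $\FTCG_0$ be a compatible FT-DAG witnessing the hypothesis, and for each strongly connected component of $\SCG$ let $<$ be the total order that $\FTCG_0$ induces on its nodes through the orientation of its instantaneous edges (this orientation is acyclic since $\FTCG_0$ is). I then build the \emph{saturated} FT-DAG $\FTCG^{\star}=(\mathbb{V},\mathbb{E}^{\star})$: for every edge $A\rightarrow B$ of $\SCG$ with $A\neq B$, put $A_{t'-k}\rightarrow B_{t'}$ in $\mathbb{E}^{\star}$ for every lag $k\in\{1,\dots,\gamma_{max}\}$ and every admissible $t'$, and also the instantaneous edge $A_{t'}\rightarrow B_{t'}$ whenever $A$ and $B$ lie in different SCCs, or in the same SCC with $A< B$; and for every self-loop $A\rightarrow A$ put $A_{t'-k}\rightarrow A_{t'}$ for every $k\in\{1,\dots,\gamma_{max}\}$. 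A short check shows that $\FTCG^{\star}$ is acyclic (any cycle would either live inside a single time slice, contradicting the chosen orders, or strictly decrease time through a lagged edge), that its summary causal graph equals $\SCG$, and that its maximal lag is $\gamma_{max}$, so $\FTCG^{\star}$ is compatible with $\SCG$.

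The heart of the proof is then a structural description of $\FTCG^{\star}$. One first shows that $\descendants{X_{t-\gamma}}{\FTCG^{\star}}$ coincides with $\posdescendants{X_{t-\gamma}}{\SCG}=\mathbb{D}$ on every micro node that could ever belong to a parent set of a causal node: the inclusion ``$\subseteq$'' is immediate because descendants in any compatible FT-DAG are possible descendants, and ``$\supseteq$'' is obtained by realizing inside $\FTCG^{\star}$ the lag choices that witness membership in $\posdescendants{X_{t-\gamma}}{\SCG}$, using that every SCG edge is present at every positive lag. One then shows that a micro node $V_{t'}$ with $V\neq X$ lies on a directed $X_{t-\gamma}\!\to\!Y_t$ path in $\FTCG^{\star}$ precisely when $V$ belongs to the relevant macro causal structure and $t'$ falls in the time window on which $V_{t'}$ is simultaneously reachable from $X_{t-\gamma}$ and reaches $Y_t$; consequently $\parents{\cn{X_{t-\gamma}}{Y_t}{\FTCG^{\star}}\setminus\{X\}}{\FTCG^{\star}}$ equals the set $\mathbb{P}$ written in Definition~\ref{def:q-opt-scg}. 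Together with the first point this gives $\opt{X_{t-\gamma}}{Y_t}{\FTCG^{\star}}=\mathbb{P}\setminus\mathbb{D}=\qopt{X_{t-\gamma}}{Y_{t}}{\SCG}$.

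What remains is to verify, case by case, that the ``relevant macro causal structure'' is the one appearing in each bullet of Definition~\ref{def:q-opt-scg}. When $X$ is a singleton SCC with no self-loop and $\gamma>0$, the causal nodes of $\FTCG^{\star}$ together with their SCCs are $\ecn{X}{Y}{\SCG}$ and their parents land in $[t-\gamma-\gamma_{max},t]$. When $\gamma=0$ a back-door path can enter $X_t$ instantaneously, so only the proper causal nodes $\cn{X}{Y}{\SCG}$ and the extended causal nodes reached by an already collider-free back-door path, $\ecnbd{X}{Y}{\SCG}{\emptyset}$, contribute non-forbidden parents; Condition~\ref{cond2} is handled the same way. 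When $X$ has a self-loop and $\gamma>0$, tracking which micro-instances of $X$ are deleted from $\cn{X_{t-\gamma}}{Y_t}{\FTCG^{\star}}$ forces the shifted window $[t-\gamma-\gamma_{max}+1,t]$ for the $\parents{X}{\SCG}$ contribution. Finally under Condition~\ref{cond3} the tight lag $\gamma=1$ together with the restricted cycle structure ($\cycles{Y}{\SCG}=\{X\rightleftarrows Y\}$, no self-loop on $Y$) collapses the causal nodes essentially to temporal copies of $Y$, while the extra term $\temp{\parents{X}{\SCG}}{t}{t-\gamma_{max}}$ arises because $X_t$ itself is a parent of $Y_t$. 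In each case $\opt{X_{t-\gamma}}{Y_t}{\FTCG^{\star}}=\qopt{X_{t-\gamma}}{Y_{t}}{\SCG}$, which is in addition a valid adjustment set by Proposition~\ref{prop:OptimalIsGivenBySCGBackDoor}.

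The step I expect to be the main obstacle is the handling of strongly connected components throughout the above: which temporal node of a component is a causal node, and which possible descendant of $X_{t-\gamma}$ becomes an actual descendant in $\FTCG^{\star}$, both depend on the orientation chosen for the instantaneous edges of that component, and it is not a priori clear that one orientation makes $\posdescendants{X_{t-\gamma}}{\SCG}$, $\ecn{X}{Y}{\SCG}$ and $\ecnbd{X}{Y}{\SCG}{\emptyset}$ all come out as required simultaneously. This is exactly where the hypothesis is needed: inheriting the orientation from a witness $\FTCG_0$ whose optimal set is valid in \emph{every} compatible FT-DAG supplies the coherence that makes the saturated FT-DAG's optimal set equal to $\qopt{X_{t-\gamma}}{Y_{t}}{\SCG}$; the rest — extracting the precise time windows and the precise sub-case distinctions ($\gamma=0$ vs.\ $\gamma>0$, presence of a self-loop on $X$, Conditions~\ref{cond1}--\ref{cond3}) from $\FTCG^{\star}$ — is a direct if somewhat lengthy verification.
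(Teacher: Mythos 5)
Your overall strategy---build a single saturated FT-DAG compatible with $\SCG$ and show that its optimal set in the sense of Definition~\ref{def:opt-ftdag} equals $\qopt{X_{t-\gamma}}{Y_{t}}{\SCG}$---is the same as the paper's, which constructs such a dense FT-DAG by adding every lagged edge and then greedily adding instantaneous edges along directed $X$-to-$Y$ paths and into their causal nodes. The gap is in your case analysis, and it sits exactly where the hypothesis of the proposition has to do its work. In the bullets of Definition~\ref{def:q-opt-scg} with $\gamma=0$ (including Condition~\ref{cond2}), the set $\mathbb{P}$ contains $\temp{\parents{\ecnbd{X}{Y}{\SCG}{\emptyset}}{\SCG}}{t}{t-\gamma-\gamma_{max}}$, and you assert that these nodes ``contribute non-forbidden parents'' to $\opt{X_{t-\gamma}}{Y_t}{\FTCG^{\star}}$, i.e.\ that they become parents of causal nodes of $(X_t,Y_t)$ in your saturated FT-DAG. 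This cannot hold: a macro node $V\in\ecn{X}{Y}{\SCG}\setminus\cn{X}{Y}{\SCG}$ lies in the SCC of a causal node but on no directed $X$-to-$Y$ path in $\SCG$, and with $\gamma=0$ any directed path $X_t\to\cdots\to V_{t'}\to\cdots\to Y_t$ in a compatible FT-DAG forces $t'=t$ and hence consists entirely of instantaneous edges, whose projection is a directed $X$-to-$Y$ path in $\SCG$ through $V$---a contradiction. So no temporal instance of such a $V$ is ever a causal node in \emph{any} compatible FT-DAG, its parents never enter $\opt{X_{t-\gamma}}{Y_t}{\FTCG^{\star}}$, and your construction misses exactly the $\ecnbd$ contribution to $\qopt$. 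The paper closes this by spending the hypothesis here: if $\ecnbd{X}{Y}{\SCG}{\emptyset}\neq\emptyset$ one can exhibit two compatible FT-DAGs whose optimal sets are mutually invalid (Figure~\ref{fig:ex_qopt_diff_opt}), contradicting the assumed existence of a universally valid $\opt{X_{t-\gamma}}{Y_t}{\FTCG}$; hence $\ecnbd{X}{Y}{\SCG}{\emptyset}=\emptyset$ and the problematic term vanishes. You instead spend the hypothesis on inheriting instantaneous-edge orientations from a witness $\FTCG_0$, which does not address this obstruction.

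A secondary weakness is your identification of $\descendants{X_{t-\gamma}}{\FTCG^{\star}}$ with $\posdescendants{X_{t-\gamma}}{\SCG}$ on candidate parents: only the inclusion $\subseteq$ is immediate. A node $B_t$ with $B$ in the SCC of a causal node is a possible descendant of $X_{t-\gamma}$ (witnessed by one orientation of the instantaneous edges of that SCC) yet is not a descendant in $\FTCG^{\star}$ if your inherited order points the other way; since Definition~\ref{def:opt-ftdag} subtracts only $\descendants{X_{t-\gamma}}{\FTCG^{\star}}$ while $\qopt$ subtracts the larger set $\posdescendants{X_{t-\gamma}}{\SCG}$, this would make $\opt{X_{t-\gamma}}{Y_t}{\FTCG^{\star}}$ strictly larger than $\qopt{X_{t-\gamma}}{Y_{t}}{\SCG}$. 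One must again argue from the hypothesis that the witness's orientation excludes such nodes (an FT-DAG in which $B_t$ is a non-descendant parent of a causal node puts $B_t$ into its optimal set, which is forbidden in the FT-DAG with the opposite orientation, so no universally valid optimal set could exist); you gesture at this coherence but do not supply the argument.
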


\begin{restatable}{proposition}{propositionIncludeAllOptimals}
\label{prop:IncludeAllOptimals}
    Let $\SCG$ be an SCG and $(X_{t-\gamma},Y_{t})$ be two temporal variables of interest such that $X \in \ancestors{Y}{\SCG}$ and such that the micro causal effect from $X_{t-\gamma}$ to $Y_{t}$ is identifiable by the SCG-back-door criterion.
    For every compatible FT-DAG $\FTCG$, $\opt{X_{t-\gamma}}{Y_t}{\FTCG} \backslash \posdescendants{X_{t-\gamma}}{\SCG}\subseteq\qopt{X_{t-\gamma}}{Y_{t}}{\SCG}$.
\end{restatable}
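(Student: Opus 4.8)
\section*{Proof proposal for Proposition~\ref{prop:IncludeAllOptimals}}

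The plan is to fix an arbitrary FT-DAG $\FTCG$ compatible with $\SCG$ and an arbitrary timepoint $V_{t'}\in\opt{X_{t-\gamma}}{Y_t}{\FTCG}\setminus\posdescendants{X_{t-\gamma}}{\SCG}$, and to show $V_{t'}\in\qopt{X_{t-\gamma}}{Y_t}{\SCG}$. Writing $\qopt{X_{t-\gamma}}{Y_t}{\SCG}=\mathbb{P}\setminus\mathbb{D}$ with $\mathbb{D}=\posdescendants{X_{t-\gamma}}{\SCG}$, and noting that $V_{t'}\notin\mathbb{D}$ holds by hypothesis, it suffices to prove $V_{t'}\in\mathbb{P}$. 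By Definition~\ref{def:opt-ftdag}, $V_{t'}$ is an $\FTCG$-parent of some micro causal node $C_s\in\cn{X_{t-\gamma}}{Y_t}{\FTCG}$, so there is a simple directed path $\rho=\langle X_{t-\gamma}\rightarrow\cdots\rightarrow C_s\rightarrow\cdots\rightarrow Y_t\rangle$ in $\FTCG$ together with an edge $V_{t'}\rightarrow C_s$ in $\FTCG$. Let $V$ and $C$ be the macro nodes of $\SCG$ to which $V_{t'}$ and $C_s$ belong.

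Two ingredients drive the proof. First, the lag bounds: since every FT-DAG edge points weakly forward in time with lag at most $\gamma_{max}$, traversing $\rho$ gives $t-\gamma\le s\le t$, and the edge $V_{t'}\rightarrow C_s$ gives $s-\gamma_{max}\le t'\le s$; together $t-\gamma-\gamma_{max}\le t'\le t$, so $V_{t'}$ always lies in the ambient time window, and moreover $\gamma=0$ forces $s=t$. Second, the macro placement: projecting $\rho$ edgewise yields a directed walk $\langle X\rightarrow\cdots\rightarrow C\rightarrow\cdots\rightarrow Y\rangle$ in $\SCG$ (each micro edge induces an $\SCG$ edge, possibly a self-loop). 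I claim $C\in\ecn{X}{Y}{\SCG}$: extracting a simple directed path $P$ from $X$ to $Y$ out of this walk, either $C$ lies on $P$, hence in $\cn{X}{Y}{\SCG}$ (it cannot equal $X$ when it is interior), or $C$ was excised against some vertex $a$ on $P$ with which it is mutually reachable, so $C\in\scc{a}{\SCG}$; if $a\ne X$ then $a\in\cn{X}{Y}{\SCG}$ and $C\in\ecn{X}{Y}{\SCG}$, whereas if $a=X$ the condition-specific hypothesis ($\scc{X}{\SCG}=\{X\}$ under Condition~\ref{cond1}, $\scc{X}{\SCG}\subseteq\{X,Y\}$ under Condition~\ref{cond3}, while Condition~\ref{cond2} never reaches this case since there $\gamma=0$) pins $C$ down. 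Additionally, when $\gamma=0$ every node of $\rho$ lies in the single time slice $t$, and a directed path inside one time slice projects to a \emph{simple} directed path of $\SCG$ (a repeated macro node would be a cycle in the acyclic $\FTCG$), so then $C\in\cn{X}{Y}{\SCG}$ and $C\ne X$ (the only macro-$X$ node of slice $t$ is the source, which is not causal). In all cases $V\rightarrow C$ in $\SCG$ (or $V=C$ with a self-loop), so $V\in\parents{\ecn{X}{Y}{\SCG}}{\SCG}$, and $V\in\parents{\cn{X}{Y}{\SCG}}{\SCG}$ when $\gamma=0$.

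These two facts are then matched against the five cases of Definition~\ref{def:q-opt-scg}. For the three sub-cases of Condition~\ref{cond1} and for Condition~\ref{cond2}, $\mathbb{P}$ contains $\temp{\parents{\ecn{X}{Y}{\SCG}}{\SCG}}{t}{t-\gamma-\gamma_{max}}$ (when $\gamma>0$) or $\temp{\parents{\cn{X}{Y}{\SCG}}{\SCG}}{t}{t-\gamma-\gamma_{max}}$ (when $\gamma=0$) as one of its union components, and the lag bounds place $V_{t'}$ in precisely that component; the additional $\temp{\parents{X}{\SCG}}{\cdot}{\cdot}$ components, present only to make the set back-door valid, are not needed here. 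For Condition~\ref{cond3} ($\gamma=1$), I first use $\cycles{Y}{\SCG}=\{X\rightleftarrows Y\}$: any intermediate macro causal node $W\notin\{X,Y\}$ on a simple directed path from $X$ to $Y$ would, via the edge $Y\rightarrow X$, close into a simple cycle through $Y$ other than $X\rightleftarrows Y$, a contradiction, so $\cn{X}{Y}{\SCG}=\{Y\}$ and, with $\scc{X}{\SCG}\subseteq\{X,Y\}$, $\ecn{X}{Y}{\SCG}=\scc{Y}{\SCG}=\{X,Y\}$. Hence $C\in\{X,Y\}$: if $C=Y$ then $V\in\parents{Y}{\SCG}$ and $s\in\{t-1,t\}$ place $V_{t'}$ in $\temp{\parents{Y}{\SCG}}{t}{t-1-\gamma_{max}}$; if $C=X$ then $C_s$, a causal node of macro $X$, cannot be the source $X_{t-1}$, so $s=t$, giving $V\in\parents{X}{\SCG}$ and $V_{t'}\in\temp{\parents{X}{\SCG}}{t}{t-\gamma_{max}}$. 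In every case $V_{t'}\in\mathbb{P}$, which completes the argument.

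The main obstacle is the macro-placement step: a micro causal node sits on a \emph{simple} path of $\FTCG$, but its macro image may only sit on a non-simple \emph{walk} of $\SCG$, and the naive statement ``reachable from $X$ and reaching $Y$ implies membership in $\ecn{X}{Y}{\SCG}$'' is false in general --- the degenerate case being when the node's SCC is $\scc{X}{\SCG}$ with no genuine causal node inside it. The real work is to carry out the walk-to-SCC bookkeeping carefully (including the short recursion hidden in ``excised against $a$'') and to invoke the condition-specific structural hypotheses ($\scc{X}{\SCG}=\{X\}$ for Condition~\ref{cond1}; $\gamma=0$ forcing single-slice paths for Condition~\ref{cond2}; $\scc{X}{\SCG}\subseteq\{X,Y\}$ and $\selfloop{Y}\notin\eSCG$ for Condition~\ref{cond3}) to exclude this degeneracy. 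Once that lemma is established, the lag bounds and the per-case matching are routine.
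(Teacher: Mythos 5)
Your overall strategy coincides with the paper's: project each micro causal node of $\FTCG$ onto its macro node, argue that the projection lands in $\cn{X}{Y}{\SCG}$ (when $\gamma=0$) or $\ecn{X}{Y}{\SCG}$ (when $\gamma>0$), bound the parent's time index by $t-\gamma-\gamma_{max}\le t'\le t$, and match against the cases of Definition~\ref{def:q-opt-scg}. You are in fact more careful than the paper, whose proof simply asserts the containment of $\cn{X_{t-\gamma}}{Y_t}{\FTCG}$ in the temporal extension of $\ecn{X}{Y}{\SCG}$ without the walk-to-simple-path extraction you spell out.

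There is, however, a genuine gap in the macro-placement step. Under Condition~\ref{cond1} with $\cycles{X}{\SCG}\neq\emptyset$ and $\gamma>0$, a micro causal node can be a \emph{later instance of $X$ itself}, e.g.\ $X_{t-\gamma+1}$ on the path $X_{t-\gamma}\rightarrow X_{t-\gamma+1}\rightarrow\cdots\rightarrow Y_t$ obtained via the self-loop. Its macro projection is $C=X$, and under Condition~\ref{cond1} one has $X\notin\ecn{X}{Y}{\SCG}$, since $\scc{X}{\SCG}=\{X\}$ and $X$ is excluded from $\cn{X}{Y}{\SCG}$ by definition. Hence your conclusion that ``in all cases $V\in\parents{\ecn{X}{Y}{\SCG}}{\SCG}$'' fails here: a parent $W_{t'}$ of $X_{t-\gamma+1}$ with $W\in\parents{X}{\SCG}$ but $W\notin\parents{\ecn{X}{Y}{\SCG}}{\SCG}$ (and $W\notin\descendants{X}{\SCG}$) lies in $\opt{X_{t-\gamma}}{Y_t}{\FTCG}\backslash\posdescendants{X_{t-\gamma}}{\SCG}$ yet is not captured by the $\ecn$ component of $\mathbb{P}$. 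The component $\temp{\parents{X}{\SCG}}{t}{t-\gamma-\gamma_{max}+1}$ in the third bullet of Definition~\ref{def:q-opt-scg} exists precisely to absorb these parents (the ``$+1$'' reflecting that the earliest causal instance of $X$ is $X_{t-\gamma+1}$), so your remark that this component is ``present only to make the set back-door valid'' and ``not needed here'' is exactly backwards\textemdash it is needed for this inclusion. The repair is routine and mirrors what you already do, correctly, for Condition~\ref{cond3}: when $C=X$, route $V_{t'}$ through $\temp{\parents{X}{\SCG}}{t}{t-\gamma-\gamma_{max}+1}$ using $t'\ge s-\gamma_{max}\ge t-\gamma+1-\gamma_{max}$. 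The rest of your argument (the lag bounds, the $\gamma=0$ single-slice simplification, and the Condition~\ref{cond3} case analysis) checks out.
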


These propositions suggest that the quasi-optimal set can be interpreted as the union of all optimal adjustment sets across FT-DAGs compatible with the SCG, excluding all possible descendants of $X_{t-\gamma}$. This observation is formalized by the following corollary:

\begin{restatable}{corollary}{corollaryUnionAllOptimals}
\label{coro:UnionAllOptimals}
    Let $\SCG$ be an SCG and $(X_{t-\gamma},Y_{t})$ be two temporal variables of interest such that $X \in \ancestors{Y}{\SCG}$ and such that the causal effect from $X_{t-\gamma}$ to $Y_{t}$ is identifiable.
    Suppose there exists a compatible FT-DAG $\FTCG$ such that $\opt{X_{t-\gamma}}{Y_t}{\FTCG}$ is a valid adjustment set in every compatible FT-DAG.
    Let $\fulltime{\SCG}$ be the set of FT-DAGs compatible with $\SCG$.
    The following equality holds: $$\qopt{X_{t-\gamma}}{Y_{t}}{\SCG} = \bigcup\limits_{\FTCG \in \fulltime{\SCG}}\opt{X_{t-\gamma}}{Y_{t}}{\FTCG}\backslash \posdescendants{X_{t-\gamma}}{\SCG}$$
\end{restatable}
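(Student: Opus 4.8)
The plan is to prove the claimed set equality by establishing the two inclusions separately, treating Propositions~\ref{prop:OptimalIsGivenBySCGBackDoor}, \ref{prop:OptimalInOneFTDAG} and \ref{prop:IncludeAllOptimals} as black boxes and combining them with one elementary set-algebra identity. First note that, by the completeness of Theorem~\ref{theorem:identifiability_total_effect_SCG}, the identifiability hypothesis means one of its conditions holds, so Proposition~\ref{prop:OptimalIsGivenBySCGBackDoor} applies and $\qopt{X_{t-\gamma}}{Y_{t}}{\SCG}$ satisfies the SCG-back-door criterion; in particular the micro causal effect is identifiable by the SCG-back-door criterion, which is exactly the hypothesis required by Propositions~\ref{prop:OptimalInOneFTDAG} and \ref{prop:IncludeAllOptimals}. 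Moreover, by Definition~\ref{def:q-opt-scg} the set $\qopt{X_{t-\gamma}}{Y_{t}}{\SCG}$ has the form $\mathbb{P}\backslash\posdescendants{X_{t-\gamma}}{\SCG}$, hence is disjoint from $\posdescendants{X_{t-\gamma}}{\SCG}$ by construction.

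For the inclusion ``$\supseteq$'', I would apply Proposition~\ref{prop:IncludeAllOptimals} to each compatible FT-DAG $\FTCG \in \fulltime{\SCG}$, obtaining $\opt{X_{t-\gamma}}{Y_{t}}{\FTCG} \backslash \posdescendants{X_{t-\gamma}}{\SCG} \subseteq \qopt{X_{t-\gamma}}{Y_{t}}{\SCG}$, and then take the union over $\FTCG$. Since $\posdescendants{X_{t-\gamma}}{\SCG}$ does not depend on $\FTCG$, we have $\bigcup_{\FTCG}\left(\opt{X_{t-\gamma}}{Y_{t}}{\FTCG} \backslash \posdescendants{X_{t-\gamma}}{\SCG}\right) = \left(\bigcup_{\FTCG}\opt{X_{t-\gamma}}{Y_{t}}{\FTCG}\right) \backslash \posdescendants{X_{t-\gamma}}{\SCG}$, so the two readings of the right-hand side of the corollary coincide and the inclusion follows immediately.

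For ``$\subseteq$'', I would invoke Proposition~\ref{prop:OptimalInOneFTDAG}, which, under the corollary's standing assumption that some compatible FT-DAG has an optimal set valid in every compatible FT-DAG, yields a compatible FT-DAG $\FTCG_0$ with $\opt{X_{t-\gamma}}{Y_{t}}{\FTCG_0} = \qopt{X_{t-\gamma}}{Y_{t}}{\SCG}$. Using the disjointness noted above,
$$\qopt{X_{t-\gamma}}{Y_{t}}{\SCG} = \opt{X_{t-\gamma}}{Y_{t}}{\FTCG_0} = \opt{X_{t-\gamma}}{Y_{t}}{\FTCG_0} \backslash \posdescendants{X_{t-\gamma}}{\SCG} \subseteq \left(\bigcup_{\FTCG \in \fulltime{\SCG}}\opt{X_{t-\gamma}}{Y_{t}}{\FTCG}\right) \backslash \posdescendants{X_{t-\gamma}}{\SCG},$$
because $\FTCG_0$ is one of the FT-DAGs appearing in the union. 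Combining the two inclusions proves the corollary.

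Given the three propositions, this argument is mostly bookkeeping, and the single point I would flag as the main (minor) subtlety is the reverse inclusion: $\opt{X_{t-\gamma}}{Y_{t}}{\FTCG_0}$ by Definition~\ref{def:opt-ftdag} removes only $\descendants{X_{t-\gamma}}{\FTCG_0}$, whereas $\posdescendants{X_{t-\gamma}}{\SCG}$ is the generally larger union of descendant sets over all compatible FT-DAGs, so a priori it is unclear that subtracting $\posdescendants{X_{t-\gamma}}{\SCG}$ leaves $\opt{X_{t-\gamma}}{Y_{t}}{\FTCG_0}$ unchanged. This is resolved purely by the construction of $\qopt{X_{t-\gamma}}{Y_{t}}{\SCG}$, which already excludes $\posdescendants{X_{t-\gamma}}{\SCG}$, together with the identity $\opt{X_{t-\gamma}}{Y_{t}}{\FTCG_0} = \qopt{X_{t-\gamma}}{Y_{t}}{\SCG}$ supplied by Proposition~\ref{prop:OptimalInOneFTDAG}; no further graph-theoretic work is needed.
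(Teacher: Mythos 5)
Your proof is correct and takes essentially the same approach as the paper, whose proof consists precisely of combining Proposition~\ref{prop:IncludeAllOptimals} (union over all compatible FT-DAGs, giving one inclusion) with Proposition~\ref{prop:OptimalInOneFTDAG} (an FT-DAG realizing equality, giving the other). Your additional remarks\textemdash bridging ``identifiable'' to ``identifiable by the SCG-back-door criterion'' via Proposition~\ref{prop:OptimalIsGivenBySCGBackDoor}, and noting that the quasi-optimal set is disjoint from $\posdescendants{X_{t-\gamma}}{\SCG}$ by construction so the set difference is harmless\textemdash are just careful bookkeeping that the paper leaves implicit.
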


Proposition~\ref{prop:IncludeAllOptimals} (or Corollary~\ref{coro:UnionAllOptimals}) might initially appear overly weak, as it suggests that \emph{only} a subset of the optimal adjustment set is included within the quasi-optimal set for any FT-DAG compatible with the SCG. This could mistakenly lead one to believe that a valid quasi-optimal set must contain the optimal adjustment set of every such FT-DAG. However, this is not feasible in general: the union of all optimal sets across compatible FT-DAGs is not necessarily a valid adjustment set in each of them, \ie it does not verify the generalized adjustment criterion in each FT-DAG. In other words, this union may fail to form a valid adjustment set in the SCG.
This is illustrated in Figure~\ref{fig:pb_optimality}, where $U$ appears as both a parent and a child of the outcome in the SCG shown in Figure~\ref{fig:pb_optimality:SCG}. As a result, $\opt{X_{t-}}{Y_t}{\FTCG} \not\subseteq\qopt{X_{t}}{Y_{t}}{\SCG}$ in such cases. Indeed, among the FT-DAGs compatible with this SCG, one can construct a FT-DAG (Figure~\ref{fig:pb_optimality:DAG2}) in which $U_t$ is a descendant of the outcome $Y_t$. In that FT-DAG, any adjustment set including $U_t$ would be invalid, since adjusting for a descendant of the outcome violates the back-door criterion. Therefore, although $U_t$ may belong to an optimal adjustment set in another compatible FT-DAG (e.g., Figure~\ref{fig:pb_optimality:DAG1}), it cannot be included in a quasi-optimal set that is valid across all compatible FT-DAGs.

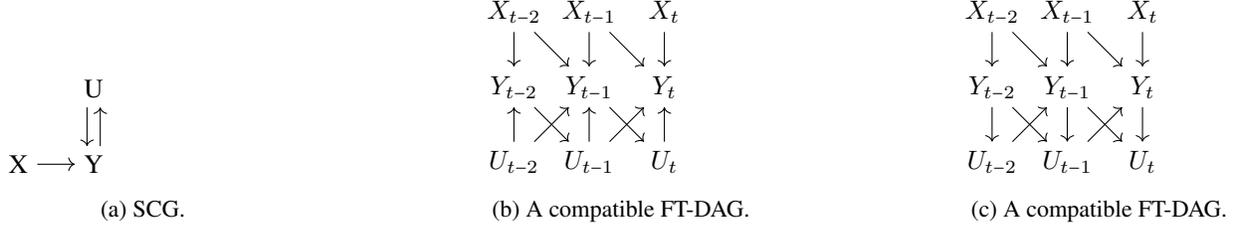
\begin{figure}[t]
    \label{fig:pb_optimality}
    \centering

\begin{subfigure}[t]{.23\linewidth}
\begin{tikzpicture}
    
    \node (Z) {U};
    \node (Y) [below of= U] {Y};
    \node (X) [left of= Y] {X};

    \draw [->] (X) -- (Y);
    \begin{scope}[transform canvas={xshift=-.25em}]
    \draw [->] (U) -- (Y);
    \end{scope}
    \begin{scope}[transform canvas={xshift=.25em}]
    \draw [<-] (U) -- (Y);
    \end{scope}

\end{tikzpicture}
\caption{SCG.}
\label{fig:pb_optimality:SCG}
\end{subfigure}
\hfill
\begin{subfigure}[t]{.23\linewidth}
\begin{tikzpicture}

  \node (X2) {$X_{t-2}$};
  \node (X1) [right of= X2] {$X_{t-1}$};
  \node (X0) [right of= X1] {$X_{t}$};

  \node (Y2) [below of= X2] {$Y_{t-2}$};
  \node (Y1) [right of= Y2] {$Y_{t-1}$};
  \node (Y0) [right of= Y1] {$Y_{t}$};

  \node (U2) [below of= Y2] {$U_{t-2}$};
  \node (U1) [right of= U2] {$U_{t-1}$};
  \node (U0) [right of= U1] {$U_{t}$};

    \draw [->] (X2) -- (Y2);
    \draw [->] (X2) -- (Y1);
    \draw [->] (X1) -- (Y1);
    \draw [->] (X1) -- (Y0);    
    \draw [->] (X0) -- (Y0);

    \draw [->] (U2) -- (Y2);
    \draw [->] (U2) -- (Y1);
    \draw [->] (U1) -- (Y1);
    \draw [->] (U1) -- (Y0);
    \draw [->] (U0) -- (Y0);

    \draw [->] (Y2) -- (U1);
    \draw [->] (Y1) -- (U0);

\end{tikzpicture}
\caption{A compatible FT-DAG.}
\label{fig:pb_optimality:DAG1}
\end{subfigure}
\hfill
\begin{subfigure}[t]{.23\linewidth}
\begin{tikzpicture}

  \node (X2) {$X_{t-2}$};
  \node (X1) [right of= X2] {$X_{t-1}$};
  \node (X0) [right of= X1] {$X_{t}$};

  \node (Y2) [below of= X2] {$Y_{t-2}$};
  \node (Y1) [right of= Y2] {$Y_{t-1}$};
  \node (Y0) [right of= Y1] {$Y_{t}$};

  \node (U2) [below of= Y2] {$U_{t-2}$};
  \node (U1) [right of= U2] {$U_{t-1}$};
  \node (U0) [right of= U1] {$U_{t}$};

    \draw [->] (X2) -- (Y2);
    \draw [->] (X2) -- (Y1);
    \draw [->] (X1) -- (Y1);
    \draw [->] (X1) -- (Y0);    
    \draw [->] (X0) -- (Y0);

    \draw [->] (Y2) -- (U2);
    \draw [->] (Y2) -- (U1);
    \draw [->] (Y1) -- (U1);
    \draw [->] (Y1) -- (U0);
    \draw [->] (Y0) -- (U0);

    \draw [->] (U2) -- (Y1);
    \draw [->] (U1) -- (Y0);

\end{tikzpicture}
\caption{A compatible FT-DAG.}
\label{fig:pb_optimality:DAG2}
\end{subfigure}

\caption{Example of an SCG and two compatible FT-DAGs illustrating that the optimal adjustment set is not necessarily contained within the quasi-optimal set.}

\label{fig:pb_optimality}

\end{figure}

Notice that what we defined as quasi-optimal set is a subset of $\mathbb{A}^1$ and $\mathbb{A}^2$, as we recall that $\mathbb{A}^1$ is the set containing all variables that are not descendants of the treatment and $\mathbb{A}^2$ is the set containing all ancestors of the treatment and the outcome.
Interestingly, the optimal set is a subset of $\mathbb{A}^1$ and $\mathbb{A}^2$ if and only if the optimal set is a subset of the quasi-optimal (in cases where the optimal set contains no descendants of the treatment).
The following corollary relates the quasi-optimal set with the sets $\mathbb{A}^1$ and $\mathbb{A}^2$ introduced in \cite{Assaad_2024}: 

\begin{restatable}{corollary}{corollaryQoptVsAs}
\label{coro:QoptVsAs}
    Let $\SCG$ be an SCG and $(X_{t-\gamma},Y_{t})$ be two temporal variables of interest such that $X \in \ancestors{Y}{\SCG}$ and such that the causal effect from $X_{t-\gamma}$ to $Y_{t}$ is identifiable by Theorem~\ref{theorem:identifiability_total_effect_SCG}. For any distribution $\Pr$ compatible with any FT-DAG compatible with $\mathcal{G}^s$. 
    Under the same assumptions (briefly discussed above and described in details in \cite{Henckel_2022,Rotnitzky_2020}) required for the optimality of the set $\opt{X_{t-\gamma}}{Y_t}{\FTCG}$, we have the following:
$$\var{\Pr}{\estimator{X_{t-\gamma}}{Y_t}{\qopt{X_{t-\gamma}}{Y_t}{\mathcal{G}^s}}} \le \var{\Pr}{\estimator{X_{t-\gamma}}{Y_t}{\mathbb{A}^1}} \text{, and }$$
$$\var{\Pr}{\estimator{X_{t-\gamma}}{Y_t}{\qopt{X_{t-\gamma}}{Y_t}{\mathcal{G}^s}}} \le \var{\Pr}{\estimator{X_{t-\gamma}}{Y_t}{\mathbb{A}^2}}.$$    
\end{restatable}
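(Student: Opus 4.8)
The plan is to reduce Corollary~\ref{coro:QoptVsAs} to the optimality theory of \cite{Henckel_2022,Rotnitzky_2020} applied inside a well-chosen compatible FT-DAG, by chaining the propositions already established. First I would invoke Proposition~\ref{prop:OptimalInOneFTDAG} to obtain a compatible FT-DAG $\FTCG$ in which $\opt{X_{t-\gamma}}{Y_t}{\FTCG} = \qopt{X_{t-\gamma}}{Y_t}{\SCG}$; this requires checking the hypothesis that there is some compatible FT-DAG whose optimal set is valid in every compatible FT-DAG, which follows from the identifiability assumption (by Theorem~\ref{theorem:identifiability_total_effect_SCG}) together with the fact that the sets $\mathbb{A}^1,\mathbb{A}^2$ are valid in every compatible FT-DAG and, as argued after Definition~\ref{def:backdoor}, the optimal set is contained in them in the relevant cases. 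Next I would observe that $\mathbb{A}^1$ and $\mathbb{A}^2$ are themselves valid adjustment sets in $\FTCG$ (indeed in every compatible FT-DAG), so that all three of $\qopt{}{}{\SCG}=\opt{}{}{\FTCG}$, $\mathbb{A}^1$, $\mathbb{A}^2$ are simultaneously valid adjustment sets for the pair $(X_{t-\gamma},Y_t)$ in the single DAG $\FTCG$.

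The second main step is to apply the optimality statement of \cite{Henckel_2022} (extended to the nonparametric estimators of \cite{Rotnitzky_2020}): within $\FTCG$, the set $\opt{X_{t-\gamma}}{Y_t}{\FTCG}$ minimizes the asymptotic variance of the estimator $\estimator{X_{t-\gamma}}{Y_t}{\cdot}$ among \emph{all} valid adjustment sets, under the stated assumptions. Since $\mathbb{A}^1$ and $\mathbb{A}^2$ are valid in $\FTCG$, this gives, for every distribution $\Pr$ compatible with $\FTCG$,
\[
\var{\Pr}{\estimator{X_{t-\gamma}}{Y_t}{\qopt{X_{t-\gamma}}{Y_t}{\SCG}}} \le \var{\Pr}{\estimator{X_{t-\gamma}}{Y_t}{\mathbb{A}^1}}
\]
and likewise with $\mathbb{A}^2$. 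Finally I would upgrade "for every $\Pr$ compatible with $\FTCG$" to "for every $\Pr$ compatible with any FT-DAG compatible with $\SCG$": since $\qopt{}{}{\SCG}$, $\mathbb{A}^1$, $\mathbb{A}^2$ are all valid adjustment sets in \emph{every} compatible FT-DAG, and since $\qopt{}{}{\SCG}=\opt{}{}{\FTCG'}$ holds... wait—here is the subtlety, and this is the step I expect to be the main obstacle.

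The delicate point is that $\qopt{}{}{\SCG}$ equals the \emph{optimal} set of only one particular compatible FT-DAG, not of all of them; in another compatible FT-DAG $\FTCG'$ it need only be \emph{a} valid set, possibly strictly worse than $\opt{}{}{\FTCG'}$. So the inequality against $\mathbb{A}^1$ and $\mathbb{A}^2$ cannot be obtained pointwise-in-$\FTCG'$ merely from "optimal beats valid". Instead I would argue structurally: by the characterization in \cite[Theorem~1]{Henckel_2022}, a valid set $\mathbb{Z}$ dominates another valid set $\mathbb{Z}'$ (asymptotic variance $\le$ in every compatible distribution) whenever certain graphical containments hold, and these containments are monotone in the sense relevant here — $\qopt{}{}{\SCG}$ always consists of parents of (extended) causal nodes minus descendants of $X_{t-\gamma}$, while $\mathbb{A}^1 \supseteq \mathbb{A}^2$ are much larger "everything-non-descendant" and "all-ancestors" sets that include extra nodes lying off all proper causal paths. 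I would verify, case by case along the five cases of Definition~\ref{def:q-opt-scg} (mirroring the case split of Theorem~\ref{theorem:identifiability_total_effect_SCG}), that in every compatible FT-DAG the pair $(\qopt{}{}{\SCG},\mathbb{A}^i)$ satisfies the graphical comparison hypotheses of \cite[Theorem~1]{Henckel_2022} in the direction making $\qopt{}{}{\SCG}$ the better set — essentially because $\mathbb{A}^i$ adds only "neutral or harmful" nodes (forbidden nodes that are not forbidden, i.e. non-descendants that are nonetheless not parents of causal nodes, plus mediators' non-causal siblings) relative to $\qopt{}{}{\SCG}$. Combining this graphical domination (which holds in \emph{every} compatible FT-DAG, hence for every compatible $\Pr$) with Proposition~\ref{prop:OptimalIsGivenBySCGBackDoor} (validity of $\qopt{}{}{\SCG}$) and the earlier remarks (validity of $\mathbb{A}^1,\mathbb{A}^2$) yields the two displayed inequalities for all $\Pr$ compatible with any compatible FT-DAG, completing the proof.
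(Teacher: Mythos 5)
Your closing step---comparing $\qopt{X_{t-\gamma}}{Y_t}{\SCG}$ against $\mathbb{A}^1$ and $\mathbb{A}^2$ pairwise via the graphical conditions of \cite[Theorem 1]{Henckel_2022}, verified in every compatible FT-DAG and hence holding for every compatible distribution---is exactly the paper's proof, except that no case split over Definition~\ref{def:q-opt-scg} is needed: since $\qopt{X_{t-\gamma}}{Y_t}{\SCG}\subseteq\mathbb{A}^i$ the condition on $S=\qopt{X_{t-\gamma}}{Y_t}{\SCG}\backslash\mathbb{A}^i$ is essentially vacuous, and $Y_t$ is d-separated from $T=\mathbb{A}^i\backslash\qopt{X_{t-\gamma}}{Y_t}{\SCG}$ given $\qopt{X_{t-\gamma}}{Y_t}{\SCG}\cup\{X_{t-\gamma}\}$ uniformly because the quasi-optimal set contains all parents of $Y_t$ and of the causal nodes while $\mathbb{A}^i$, being a valid back-door set, contains no causal nodes or descendants of $Y_t$. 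The entire first half of your proposal (routing through Proposition~\ref{prop:OptimalInOneFTDAG} and ``the optimal set beats any valid set in a single FT-DAG''), whose insufficiency you yourself identify, is not used in the paper and can simply be deleted.
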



The results of this section indicate that the set $\qopt{X_{t-\gamma}}{Y_{t}}{\SCG}$ is optimal in at least one of the FT-DAGs compatible with the SCG and outperforms several other adjustment sets (\eg, $\mathbb{A}^1$ and $\mathbb{A}^2$).  These findings have direct practical implications: they provide guidance for selecting adjustment sets when working with multiple time series (such as cohort data) generated from a DTDSCM, in cases where only the SCG is known. We further conjecture that these insights may extend to classical time series settings under Assumption~\ref{ass:ctt}, offering a principled approach to adjustment set selection in that context as well.


\section{Discussion}
\label{sec:discussion}

\begin{figure}[t]
\centering

\tikzset{every picture/.style={scale=1, every node/.style={scale=0.8}}}

\begin{subfigure}[b]{\textwidth}
\centering
\begin{tikzpicture}
\begin{axis}[
    width=\textwidth,
    height=2cm,
    ylabel={$X$},
    ymin=0, ymax=5,
    xmin=0, xmax=10,
    xtick={0,...,10},
    ytick=\empty,
    grid=major,
    domain=0:10,
    samples=11
]
\addplot[only marks, mark=*, mark size=2pt, color=blue] 
    coordinates {(0,1) (1,2) (2,1.5) (3,3) (4,2.2) (5,3.5) (6,2.8) (7,4) (8,3.5) (9,4.5) (10,3.8)};
\end{axis}

\end{tikzpicture}

\begin{tikzpicture}
\begin{axis}[
    width=\textwidth,
    height=2cm,
    xlabel={Time (seconds)},
    ylabel={$Y$},
    ymin=0, ymax=5,
    xmin=0, xmax=10,
    xtick={0,...,10},
    ytick=\empty,
    grid=major,
    domain=0:10,
    samples=11
]
\addplot[only marks, mark=*, mark size=2pt, color=red] 
    coordinates {(0,3) (1,1.5) (2,2.4) (3,2.8) (4,1.9) (5,2) (6,4.6) (7,3.8) (8,2) (9,4.5) (10,2.5)};
\end{axis}

\end{tikzpicture}
\caption{Data generated every second by a DTDSCM.}
\end{subfigure}

\begin{subfigure}[b]{\textwidth}
\centering
\begin{tikzpicture}
\begin{axis}[
    width=\textwidth,
    height=2cm,
    ylabel={$X$},
    ymin=0, ymax=5,
    xmin=0, xmax=10,
    xtick={0,...,10},
    ytick=\empty,
    grid=major,
    domain=0:10,
    samples=11
]
\addplot[only marks, mark=*, mark size=2pt, color=blue] 
    coordinates {(0,1)  (2,1.5)  (4,2.2)  (6,2.8) (8,3.5)  (10,3.8)};
\end{axis}

\end{tikzpicture}

\begin{tikzpicture}
\begin{axis}[
    width=\textwidth,
    height=2cm,
    xlabel={Time (seconds)},
    ylabel={$Y$},
    ymin=0, ymax=5,
    xmin=0, xmax=10,
    xtick={0,...,10},
    ytick=\empty,
    grid=major,
    domain=0:10,
    samples=11
]
\addplot[only marks, mark=*, mark size=2pt, color=red] 
    coordinates {(0,3)  (2,2.4)  (4,1.9)  (6,4.6) (8,2) (10,2.5)};
\end{axis}

\end{tikzpicture}
\caption{Data sampled data every two seconds.}
\end{subfigure}

\begin{subfigure}[b]{\textwidth}
\centering

\begin{tikzpicture}
\tikzset{hidden/.append style={draw=black, circle}}

  \node (X10)  {$X_{0}$};
  \node[hidden] (X9) [right of= X10] {$X_{1}$};
  \node (X8) [right of= X9] {$X_{2}$};
  \node[hidden] (X7) [right of= X8] {$X_{3}$};
  \node (X6) [right of= X7] {$X_{4}$};
  \node[hidden] (X5) [right of= X6] {$X_{5}$};
  \node (X4) [right of= X5] {$X_{6}$};
  \node[hidden] (X3) [right of= X4] {$X_{7}$};
  \node (X2) [right of= X3] {$X_{8}$};
  \node[hidden] (X1) [right of= X2] {$X_{9}$};
  \node (X0) [right of= X1] {$X_{10}$};

  \node (Y10) [below of= X10] {$Y_{0}$};
  \node[hidden] (Y9) [below of= X9] {$Y_{1}$};
  \node (Y8) [below of= X8] {$Y_{2}$};
  \node[hidden] (Y7) [below of= X7] {$Y_{3}$};
  \node (Y6) [below of= X6] {$Y_{4}$};
  \node[hidden] (Y5) [below of= X5] {$Y_{5}$};
  \node (Y4) [below of= X4] {$Y_{6}$};
  \node[hidden] (Y3) [below of= X3] {$Y_{7}$};
  \node (Y2) [below of= X2] {$Y_{8}$};
  \node[hidden] (Y1) [right of= Y2] {$Y_{9}$};
  \node (Y0) [right of= Y1] {$Y_{10}$};
    
    \draw [->] (X10) -- (X9);
    \draw [->] (X9) -- (X8);
    \draw [->] (X8) -- (X7);
    \draw [->] (X7) -- (X6);
    \draw [->] (X6) -- (X5);
    \draw [->] (X5) -- (X4);
    \draw [->] (X4) -- (X3);
    \draw [->] (X3) -- (X2);
    \draw [->] (X2) -- (X1);
    \draw [->] (X1) -- (X0);

    \draw [->] (Y10) -- (Y9);
    \draw [->] (Y9) -- (Y8);
    \draw [->] (Y8) -- (Y7);
    \draw [->] (Y7) -- (Y6);
    \draw [->] (Y6) -- (Y5);
    \draw [->] (Y5) -- (Y4);
    \draw [->] (Y4) -- (Y3);
    \draw [->] (Y3) -- (Y2);
    \draw [->] (Y2) -- (Y1);
    \draw [->] (Y1) -- (Y0);


    \draw [->] (X10) -- (Y8);
    \draw [->] (X9) -- (Y7);
    \draw [->] (X8) -- (Y6);
    \draw [->] (X7) -- (Y5);
    \draw [->] (X6) -- (Y4);
    \draw [->] (X5) -- (Y3);
    \draw [->] (X4) -- (Y2);
    \draw [->] (X3) -- (Y1);
    \draw [->] (X2) -- (Y0);

\end{tikzpicture}
\caption{FT-DAG induced by the DTDSCM where circles represents unmeasured variables.}
\end{subfigure}

\caption{Comparison of generated vs sampled time series data. The upper plot in (a) shows values generated every second. The lower plot in (b) shows the same data sampled every two seconds. The FT-DAG induced by the generative process is given in (c).}
\label{fig:samplingVsgeneration}
\end{figure}
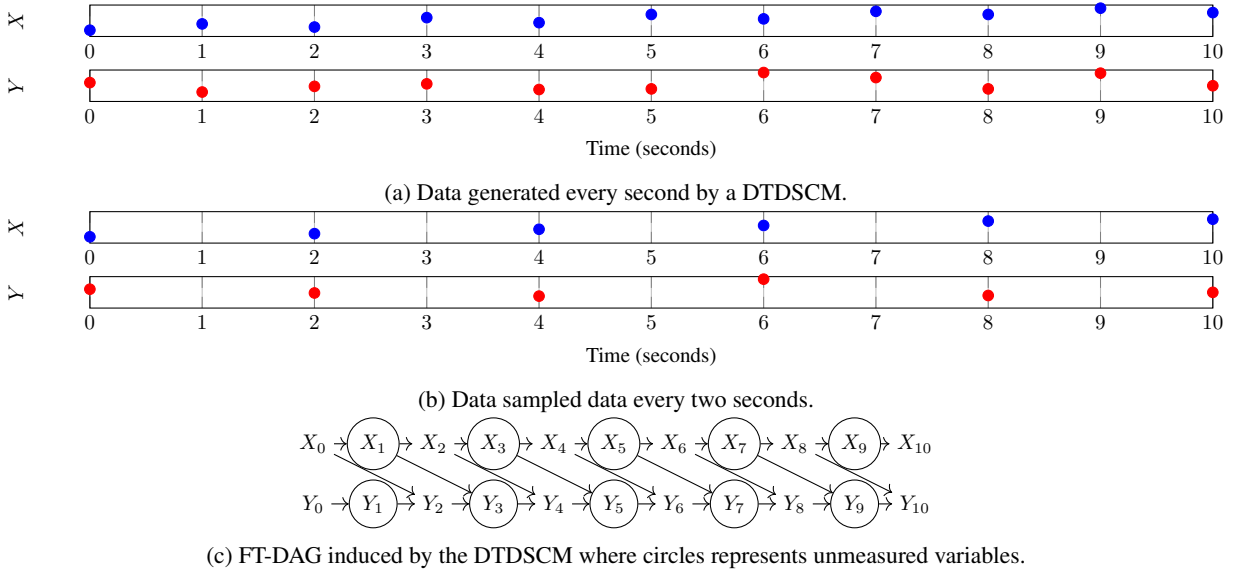


It is important to emphasize that Conditions~\ref{cond1}, \ref{cond2} and \ref{cond3} of Theorem~\ref{theorem:identifiability_total_effect_SCG} rely heavily on the assumptions of a DTDSCM, in particular that the data collection frequency perfectly matches the data generation process.
If this assumption is violated, there is a risk of backdoor paths between $X_{t-\gamma}$ and $Y_t$ consisting entirely of unmeasured variables, which prevents valid adjustment. For instance, as illustrated in Figure~\ref{fig:samplingVsgeneration}, if the underlying system generates values at every second\textemdash  $\mathbb{v}_{t_0}, \mathbb{v}_{t_0 +1}, \mathbb{v}_{t_0 +2}, \cdots \mathbb{v}_{t_{max}-2}, \mathbb{v}_{t_{max}-1},\mathbb{v}_{t_{max}}$\textemdash  but observations are recorded only every two seconds\textemdash $\mathbb{v}_{t_0}, \mathbb{v}_{t_0 +2} \cdots , \mathbb{v}_{t_{max}-2}, \mathbb{v}_{t_{max}}$\textemdash then it is not possible to identify the micro causal effect of $X_{4}$ on $Y_6$ since the backdoor path consisting of the nodes $X_3$ and $Y_3$, which are both unmeasured, cannot be blocked.
In addition, discrepancies between the data-generating process and the data collection frequency may lead to misinterpretation in Condition~\ref{cond3}. 
For example, in scenarios like the one illustrated in Figure~\ref{fig:samplingVsgeneration},
a modeler might mistakenly set $\gamma$ to $1$ to reflect the difference in observed time steps (which is actually $2$).
This could lead to the erroneous belief that Condition~\ref{cond3} applies for the micro causal effect of interest $\probac{y_t}{\interv{x_{t-1}}}$, when in reality the actual  micro causal effect of interest corresponds to $\probac{y_t}{\interv{x_{t-2}}}$.
Such mismatches highlight the need for caution when applying Condition~\ref{cond3} or any lag-based (where the lag is set to a value greater than zero) identifiability conditions to discretely observed time series. One might assume that Condition~\ref{cond2} suffers from the same limitation as Condition~\ref{cond3}, since it also relies on a specific lag\textemdash namely, $\gamma = 0$. However, having a lag of zero corresponds to an instantaneous relationship, which remains unaffected by differences between the data generation rate and the sampling rate. Thus, Condition~\ref{cond2} is not subject to the same limitations. 

As this paper focuses on time series, it is necessary to define a temporal boundary, since measurements begin at a finite point in time. We choose to set this limit at $t - \gamma - \gamma_{max}$, which places us sufficiently far in the past to ensure that all relevant back-door paths are blocked. This choice also reflects real-world constraints, where infinite-length measurements are not feasible. Moreover, by selecting this time limit, we ensure that no more optimal adjustment set would be discovered by going even further back in time\textemdash any variables beyond this boundary would either be redundant or irrelevant for optimality.

We evaluated our results using simulated multivariate time series data. However, it is important to note that the theoretical guarantees underlying our approach extend beyond this setting. Specifically, they apply not only to single multivariate time series but also to collections of multiple multivariate time series, as commonly encountered in cohort studies. As such, we expect similar empirical performance to hold in those broader contexts as well if all assumptions are met. However, it is important to emphasize that our approach is applicable to cohort studies only when the necessary assumptions hold and the data is complete, without missing values or irregularities in the timing of measurements.
Furthermore, while our analysis relies on the standard positivity assumption, this condition can be relaxed. In particular, it suffices that for all values $\mathbb{z}$ of the adjustment set $\mathbb{Z}$, either $\mathbb{P}(\mathbb{z}) > 0$ or $\mathbb{P}(x_{t-\gamma} \mid \mathbb{z}) > 0$, as shown by \cite{Hwang_2024}. This relaxed condition broadens the applicability of our results to more realistic scenarios where strict positivity may not hold globally.

In many applications, it is falsely assumed that the minimal adjustment set is also the most optimal. This misconception has been challenged, for example, in \cite{Henckel_2022} and is further examined in this paper. While minimal adjustment sets are often preferred for their simplicity and efficiency, they do not always lead to the lowest estimation variance or the most robust causal effect estimation. In particular, including additional variables \textemdash when carefully chosen \textemdash can improve statistical efficiency in the presence of unmeasured confounding structures. This highlights the importance of going beyond minimality and taking into account statistical efficiency when choosing among valid adjustment sets.

\section{Algorithmic validity}
\label{sec:exp}

We began by randomly generating 1500 graphs—both cyclic and acyclic—each containing five or six nodes. These graphs served as our SCGs. If at least one of the identifiability conditions in Theorem~\ref{theorem:identifiability_total_effect_SCG} was satisfied, we then generated all compatible FT-DAGs by setting $\gamma_{max}=1$. Since the number of compatible FT-DAGs grows exponentially with the number of cycles in the SCG, we restricted our analysis to SCGs that generate at most $50$ compatible densest FT-DAGs, in order to keep computation time manageable. For each selected SCG, we conducted a two-step comparison. First, we identified all valid adjustment sets derived from the SCG, as determined by the SCG-back-door criterion. Then, we searched for all adjustment sets that satisfie the back-door criterion in all the corresponding FT-DAGs. We will call the former  sets as the SCG-back-door sets and the latter sets as the common-back-door sets.
We compared the two sets to assess the validity of our criterion. 

As a result, we observed that all SCG-back-door sets are also common-back-door sets. This empirical comparison allowed us to evaluate a key theoretical property of our method: soundness, meaning that every adjustment set selected by our criterion is guaranteed to be valid across all compatible FT-DAGs. This supports the reliability and robustness of the SCG-back-door criterion.

However, we also observed that not every common back-door set qualifies as an SCG-back-door set. This underscores the incompleteness of the SCG-back-door criterion. It is important to note, however, that our objective was not to establish a complete criterion, but rather to develop a sufficiently expressive one (similarly to the back-door criterion which is itself not complete)\textemdash capable of identifying multiple valid adjustment sets, among which a quasi-optimal set can be found.

\section{Conclusion}
\label{sec:conclusion}

In this paper, we addressed the problem of identifying micro causal effects from observational data using SCGs. Building on prior work, we proposed a simplified and interpretable reformulation of the identifiability by adjustment conditions in SCGs originally introduced in \cite{Assaad_2024}. Our formulation avoids the enumeration of all active paths and relies solely on kinship-based graph properties, making it more tractable in practice.
Furthermore, we introduced the SCG-back-door criterion, a graphical test that extends the classical back-door criterion to the SCG setting, providing a sound and efficient method to enumerate valid adjustment sets whenever the micro causal effect is identifiable by adjustment. In addition, we characterized the quasi-optimal adjustment set among all sets satisfying the SCG-back-door criterion, thus offering a principled way to improve estimation efficiency.
Our results not only provide theoretical insights into causal identifiability by adjustment in SCGs but also offer practical tools for applied researchers working with time series data or cohorts where only high-level causal abstractions are available. 

Applying the theoretical findings developed in this work to real-world epidemiological datasets would be a valuable next step to assess their practical impact.
Additionally, future research could explore characterizing the optimal set from SCGs when there is hidden confounding using the recently proposed SCG-front-door criterion~\citep{Assaad_2025}, or investigate how the classical single-door criterion might be extended to SCGs, building on the identifiability conditions introduced in \cite{Ferreira_2024} for micro direct effect identification in linear SCMs. 
It would also be interesting to investigate whether the forbidden projection technique introduced in \cite{Witte_2020} and which simplifies DAGs while preserving all information relevant to identifying optimal adjustment sets, can be extended to SCGs. Another promising direction is to focus on cohort studies with imperfect data, characterized by irregular measurement timings and missing values. Since current criteria assume well-structured and complete data, it would be valuable to develop approaches that can accommodate the complexities and imperfections commonly encountered in real-world datasets.
However, since this work builds on the results of \cite{Henckel_2022}, which are not yet complete, the question of whether a more optimal adjustment set can be identified within the SCG remains open. We leave this investigation for future work.

\subsection*{Contribution}
This work was carried out during the internship of I.B., who contributed to the implementation, experimental evaluation, and validation of the theoretical results at the algorithmic level. All authors contributed to the conceptualization of the study. S.F. developed most of the theoretical proofs, and S.F. and C.A. jointly supervised the internship and the research. C.A. provided the funding for this study. All authors reviewed the manuscript. S.F. is designated as the corresponding author of this paper.

\subsection*{Acknowledgements}
We thank Timothée Loranchet for providing part of the code used to generate the simulated data, as well as Chantal Guihenneuc and David Hajage for their valuable comments on the manuscript.
This work was supported by the CIPHOD project (ANR-23-CPJ1-0212-01). 

\bibliography{references}

@InProceedings{Assaad_2024,
  title = 	 {Identifiability of total effects from abstractions of time series causal graphs},
  author =       {Charles K. Assaad and Emilie Devijver and Eric Gaussier and Gregor Goessler and Anouar Meynaoui},
  booktitle = 	 {Proceedings of the Fourtieth Conference on Uncertainty in Artificial Intelligence},
  year = 	 {2024},
  series = 	 {Proceedings of Machine Learning Research},
  publisher =    {PMLR},
}

@misc{Yvernes_2025,
      title={Identifiability by common backdoor in summary causal graphs of time series}, 
      author={Clément Yvernes and Charles K. Assaad and Emilie Devijver and Eric Gaussier},
      year={2025},
      eprint={2506.14862},
      archivePrefix={arXiv},
      primaryClass={math.ST},
}

@article{Henckel_2022,
    author = {Henckel, Leonard and Perković, Emilija and Maathuis, Marloes H.},
    title = {Graphical Criteria for Efficient Total Effect Estimation Via Adjustment in Causal Linear Models},
    journal = {Journal of the Royal Statistical Society Series B: Statistical Methodology},
    volume = {84},
    number = {2},
    pages = {579-599},
    year = {2022},
    month = {03},
    issn = {1369-7412},
    doi = {10.1111/rssb.12451},
}

@article{Rotnitzky_2020,
  author  = {Andrea Rotnitzky and Ezequiel Smucler},
  title   = {Efficient Adjustment Sets for Population Average Causal Treatment Effect Estimation in Graphical Models},
  journal = {Journal of Machine Learning Research},
  year    = {2020},
  volume  = {21},
  number  = {188},
  pages   = {1--86},
}

@article{Smucler_2021,
    author = {Smucler, Ezequiel and Sapienza, Facundo and Rotnitzky,  Andrea},
    title = {Efficient adjustment sets in causal graphical models with hidden variables},
    journal = {Biometrika},
    volume = {109},
    number = {1},
    pages = {49-65},
    year = {2021},
    month = {03},
    issn = {1464-3510},
    doi = {10.1093/biomet/asab018},
}

@article{Witte_2020,
author = {Witte, Janine and Henckel, Leonard and Maathuis, Marloes H. and Didelez, Vanessa},
title = {On efficient adjustment in causal graphs},
year = {2020},
issue_date = {January 2020},
publisher = {JMLR.org},
volume = {21},
number = {1},
issn = {1532-4435},
journal = {J. Mach. Learn. Res.},
month = jan,
numpages = {45},
}

@inproceedings{Runge_2021,
 author = {Runge, Jakob},
 booktitle = {Advances in Neural Information Processing Systems},
 editor = {M. Ranzato and A. Beygelzimer and Y. Dauphin and P.S. Liang and J. Wortman Vaughan},
 pages = {15762--15773},
 publisher = {Curran Associates, Inc.},
 title = {Necessary and sufficient graphical conditions for optimal adjustment sets in causal graphical models with hidden variables},
 volume = {34},
 year = {2021}
}

@article{Ferreira_2025, title={Identifying Macro Conditional Independencies and Macro Total Effects in Summary Causal Graphs with Latent Confounding}, volume={39}, DOI={10.1609/aaai.v39i25.34882},  number={25}, journal={Proceedings of the AAAI Conference on Artificial Intelligence}, author={Ferreira, Simon and Assaad, Charles K.}, year={2025}, month={Apr.}, pages={26787-26795} }

@article{Pearl_1993BackDoor,
author = {Judea Pearl},
title = {{[Bayesian Analysis in Expert Systems]: Comment: Graphical Models, Causality and Intervention}},
volume = {8},
journal = {Statistical Science},
number = {3},
publisher = {Institute of Mathematical Statistics},
pages = {266 -- 269},
year = {1993},
doi = {10.1214/ss/1177010894},
}

@article{Perkovic_2016,
  title={Complete Graphical Characterization and Construction of Adjustment Sets in Markov Equivalence Classes of Ancestral Graphs},
  author={Emilija Perkovic and Johannes Textor and Markus Kalisch and Marloes H. Maathuis},
  journal={J. Mach. Learn. Res.},
  year={2016},
  volume={18},
  pages={220:1-220:62},
}

@inproceedings{Perkovic_2015,
author = {Perkovi\'{c}, Emilija and Textor, Johannes and Kalisch, Markus and Maathuis, Marloes H.},
title = {A complete generalized adjustment criterion},
year = {2015},
isbn = {9780996643108},
publisher = {AUAI Press},
address = {Arlington, Virginia, USA},
booktitle = {Proceedings of the Thirty-First Conference on Uncertainty in Artificial Intelligence},
pages = {682–691},
numpages = {10},
location = {Amsterdam, Netherlands},
series = {UAI'15}
}

@book{Pearl_2000,
author = {Pearl, Judea},
title = {Causality: Models, Reasoning and Inference},
year = {2009},
isbn = {052189560X},
publisher = {Cambridge University Press},
address = {USA},
edition = {2nd}
}

@article{Assaad_2022,
author = {Assaad, Charles K. and Devijver, Emilie and Gaussier, Eric},
title = {Survey and Evaluation of Causal Discovery Methods for Time Series},
year = {2022},
issue_date = {May 2022},
publisher = {AI Access Foundation},
address = {El Segundo, CA, USA},
volume = {73},
issn = {1076-9757},
doi = {10.1613/jair.1.13428},
journal = {J. Artif. Int. Res.},
month = {may},
numpages = {53},
keywords = {causality, knowledge discovery, time series}
}

@InProceedings{Assaad_2023,
  title = 	 {Root Cause Identification for Collective Anomalies in Time Series given an Acyclic Summary Causal Graph with Loops},
  author =       {Assaad, Charles K. and Ez-Zejjari, Imad and Zan, Lei},
  booktitle = 	 {Proceedings of The 26th International Conference on Artificial Intelligence and Statistics},
  pages = 	 {8395--8404},
  year = 	 {2023},
  editor = 	 {Ruiz, Francisco and Dy, Jennifer and van de Meent, Jan-Willem},
  volume = 	 {206},
  series = 	 {Proceedings of Machine Learning Research},
  month = 	 {25--27 Apr},
  publisher =    {PMLR}
}

@article{Ferreira_2024, title={Identifiability of Direct Effects from Summary Causal Graphs}, volume={38}, 
DOI={10.1609/aaai.v38i18.30021}, number={18}, journal={Proceedings of the AAAI Conference on Artificial Intelligence}, author={Ferreira, Simon and Assaad, Charles K.}, year={2024}, month={Mar.}, pages={20387-20394} }

@article{Assaad_2025,
title={Towards identifiability of micro total effects in summary causal graphs with latent confounding:  extension of  the front-door criterion},
author={Charles K. Assaad},
journal={Transactions on Machine Learning Research},
issn={2835-8856},
year={2025},
url={https://openreview.net/forum?id=5f7YlSKG1l},
note={}
}

@InProceedings{Hwang_2024,
  author    = {Hwang, Inwoo and Choe, Yesong and Kwon, Yeahoon and Lee, Sanghack},
  booktitle = {Proceedings of the 41st International Conference on Machine Learning},
  title     = {On positivity condition for causal inference},
  year      = {2024},
  publisher = {JMLR.org},
  series    = {ICML'24},
  abstract  = {Identifying and estimating a causal effect is a fundamental task when researchers want to infer a causal effect using an observational study without experiments. A conventional assumption is the strict positivity of the given distribution, or so called positivity (or overlap) under the unconfounded assumption that the probabilities of treatments are positive. However, there exist many environments where neither observational data exhibits strict positivity nor unconfounded assumption holds. Against this background, we examine the graphical counterpart of the conventional positivity condition so as to license the use of identification formula without strict positivity. In particular, we explore various approaches, including analysis in a post-hoc manner, do-calculus, Q-decomposition, and algorithmic, to yielding a positivity condition for an identification formula, where we relate them, providing a comprehensive view. We further discuss the design of a positivity-aware identification algorithm based on the theoretical characterization of identification formulas.},
  articleno = {836},
  groups    = {Causal Reasoning, Causal Estimation},
  location  = {Vienna, Austria},
  numpages  = {24},
}

@ARTICLE{Piccininni_2023,
  title     = "The effect of mobile stroke unit care on functional outcomes: An
               application of the front-door formula",
  author    = "Piccininni, Marco and Kurth, Tobias and Audebert, Heinrich J and
               Rohmann, Jessica L",
  abstract  = "BACKGROUND: The Berlin-based B\_PROUD study was designed to
               assess the effect of mobile stroke unit (MSU) dispatch among
               ischemic stroke and transient ischemic attack (TIA) patients
               without contraindications to reperfusion treatments. However, a
               large proportion of patients for whom the MSU was dispatched did
               not ultimately receive MSU care. We estimated the causal effect
               of additional MSU care on 3-month functional outcomes among
               B\_PROUD patients for whom an MSU was dispatched. METHODS: We
               used data from the B\_PROUD study (1 February 2017-8 May 2019).
               Given the presence of exposure-outcome unmeasured confounding,
               we used the front-door formula to identify the distribution of
               modified Rankin scale (mRS) outcomes under two hypothetical
               interventions: (1) receiving additional MSU care and (2) only
               receiving conventional care. We considered the time from
               dispatch to thrombolysis as the full mediator and adjusted for
               exposure-mediator and mediator-outcome confounding. We used a
               parametric estimator to estimate the common odds ratio (cOR) and
               95\% bootstrapped confidence intervals (CI). RESULTS: We
               included in total 768 ischemic stroke/TIA patients with MSU
               dispatch. The MSU was canceled for 180 (23\%) patients, whereas
               588 (77\%) received MSU care. The unadjusted association between
               the care group and mRS favored conventional care (cOR = 1.7;
               95\% CI = 1.2, 2.3); however, after applying the front-door
               formula, the mRS distribution favored MSU care (cOR = 0.88; 95\%
               CI = 0.81, 0.96). CONCLUSIONS: Receiving MSU care was associated
               with better functional outcomes than conventional care only,
               compatible with the hypothesized beneficial effect of MSU care
               on poststroke outcomes, among stroke and TIA patients without
               contraindications to reperfusion treatments.",
  journal   = "Epidemiology",
  publisher = "Ovid Technologies (Wolters Kluwer Health)",
  volume    =  34,
  number    =  5,
  pages     = "712--720",
  month     =  sep,
  year      =  2023,
  language  = "en"
}

@ARTICLE{Zhang_2013,
  title     = "Air pollution and health risks due to vehicle traffic",
  author    = "Zhang, Kai and Batterman, Stuart",
  abstract  = "Traffic congestion increases vehicle emissions and degrades
               ambient air quality, and recent studies have shown excess
               morbidity and mortality for drivers, commuters and individuals
               living near major roadways. Presently, our understanding of the
               air pollution impacts from congestion on roads is very limited.
               This study demonstrates an approach to characterize risks of
               traffic for on- and near-road populations. Simulation modeling
               was used to estimate on- and near-road NO2 concentrations and
               health risks for freeway and arterial scenarios attributable to
               traffic for different traffic volumes during rush hour periods.
               The modeling used emission factors from two different models
               (Comprehensive Modal Emissions Model and Motor Vehicle Emissions
               Factor Model version 6.2), an empirical traffic speed-volume
               relationship, the California Line Source Dispersion Model, an
               empirical NO2-NOx relationship, estimated travel time changes
               during congestion, and concentration-response relationships from
               the literature, which give emergency doctor visits, hospital
               admissions and mortality attributed to NO2 exposure. An
               incremental analysis, which expresses the change in health risks
               for small increases in traffic volume, showed non-linear
               effects. For a freeway, ``U'' shaped trends of incremental risks
               were predicted for on-road populations, and incremental risks
               are flat at low traffic volumes for near-road populations. For
               an arterial road, incremental risks increased sharply for both
               on- and near-road populations as traffic increased. These
               patterns result from changes in emission factors, the NO2-NOx
               relationship, the travel delay for the on-road population, and
               the extended duration of rush hour for the near-road population.
               This study suggests that health risks from congestion are
               potentially significant, and that additional traffic can
               significantly increase risks, depending on the type of road and
               other factors. Further, evaluations of risk associated with
               congestion must consider travel time, the duration of rush-hour,
               congestion-specific emission estimates, and uncertainties.",
  journal   = "Sci. Total Environ.",
  publisher = "Elsevier BV",
  volume    = "450-451",
  pages     = "307--316",
  month     =  apr,
  year      =  2013,
  language  = "en"
}

@ARTICLE{Austin_2011,
  title     = "An introduction to propensity score methods for reducing the
               effects of confounding in observational studies",
  author    = "Austin, Peter C",
  abstract  = "The propensity score is the probability of treatment assignment
               conditional on observed baseline characteristics. The propensity
               score allows one to design and analyze an observational
               (nonrandomized) study so that it mimics some of the particular
               characteristics of a randomized controlled trial. In particular,
               the propensity score is a balancing score: conditional on the
               propensity score, the distribution of observed baseline
               covariates will be similar between treated and untreated
               subjects. I describe 4 different propensity score methods:
               matching on the propensity score, stratification on the
               propensity score, inverse probability of treatment weighting
               using the propensity score, and covariate adjustment using the
               propensity score. I describe balance diagnostics for examining
               whether the propensity score model has been adequately
               specified. Furthermore, I discuss differences between
               regression-based methods and propensity score-based methods for
               the analysis of observational data. I describe different causal
               average treatment effects and their relationship with propensity
               score analyses.",
  journal   = "Multivariate Behav. Res.",
  publisher = "Informa UK Limited",
  volume    =  46,
  number    =  3,
  pages     = "399--424",
  month     =  may,
  year      =  2011,
  language  = "en"
}

@ARTICLE{Vable_2019,
  title     = "Performance of matching methods as compared with unmatched
               ordinary least squares regression under constant effects",
  author    = "Vable, Anusha M and Kiang, Mathew V and Glymour, M Maria and
               Rigdon, Joseph and Drabo, Emmanuel F and Basu, Sanjay",
  abstract  = "Matching methods are assumed to reduce the likelihood of a
               biased inference compared with ordinary least squares (OLS)
               regression. Using simulations, we compared inferences from
               propensity score matching, coarsened exact matching, and
               unmatched covariate-adjusted OLS regression to identify which
               methods, in which scenarios, produced unbiased inferences at the
               expected type I error rate of 5\%. We simulated multiple data
               sets and systematically varied common support, discontinuities
               in the exposure and/or outcome, exposure prevalence, and
               analytical model misspecification. Matching inferences were
               often biased in comparison with OLS, particularly when common
               support was poor; when analysis models were correctly specified
               and common support was poor, the type I error rate was 1.6\% for
               propensity score matching (statistically inefficient), 18.2\%
               for coarsened exact matching (high), and 4.8\% for OLS
               (expected). Our results suggest that when estimates from
               matching and OLS are similar (i.e., confidence intervals
               overlap), OLS inferences are unbiased more often than matching
               inferences; however, when estimates from matching and OLS are
               dissimilar (i.e., confidence intervals do not overlap), matching
               inferences are unbiased more often than OLS inferences. This
               empirical ``rule of thumb'' may help applied researchers
               identify situations in which OLS inferences may be unbiased as
               compared with matching inferences.",
  journal   = "Am. J. Epidemiol.",
  publisher = "Oxford University Press (OUP)",
  volume    =  188,
  number    =  7,
  pages     = "1345--1354",
  month     =  jul,
  year      =  2019,
  keywords  = "causal inference; confounding; epidemiologic methods; matching;
               observational data",
  copyright = "https://academic.oup.com/journals/pages/open\_access/funder\_policies/chorus/standard\_publication\_model",
  language  = "en"
}

@article{Dominici_2006,
    author = {Dominici, Francesca and Peng, Roger D. and Bell, Michelle L. and Pham, Luu and McDermott, Aidan and Zeger, Scott L. and Samet, Jonathan M.},
    title = {Fine Particulate Air Pollution and Hospital Admission for Cardiovascular and Respiratory Diseases},
    journal = {JAMA},
    volume = {295},
    number = {10},
    pages = {1127-1134},
    year = {2006},
    month = {03},
    abstract = {ContextEvidence on the health risks associated with short-term exposure to fine particles (particulate matter ≤2.5 μm in aerodynamic diameter [PM2.5]) is limited. Results from the new national monitoring network for PM2.5 make possible systematic research on health risks at national and regional scales.ObjectivesTo estimate risks of cardiovascular and respiratory hospital admissions associated with short-term exposure to PM2.5 for Medicare enrollees and to explore heterogeneity of the variation of risks across regions.Design, Setting, and ParticipantsA national database comprising daily time-series data daily for 1999 through 2002 on hospital admission rates (constructed from the Medicare National Claims History Files) for cardiovascular and respiratory outcomes and injuries, ambient PM2.5 levels, and temperature and dew-point temperature for 204 US urban counties (population \&gt;200 000) with 11.5 million Medicare enrollees (aged \&gt;65 years) living an average of 5.9 miles from a PM2.5 monitor.Main Outcome MeasuresDaily counts of county-wide hospital admissions for primary diagnosis of cerebrovascular, peripheral, and ischemic heart diseases, heart rhythm, heart failure, chronic obstructive pulmonary disease, and respiratory infection, and injuries as a control outcome.ResultsThere was a short-term increase in hospital admission rates associated with PM2.5 for all of the health outcomes except injuries. The largest association was for heart failure, which had a 1.28\% (95\% confidence interval, 0.78\%-1.78\%) increase in risk per 10-μg/m3 increase in same-day PM2.5. Cardiovascular risks tended to be higher in counties located in the Eastern region of the United States, which included the Northeast, the Southeast, the Midwest, and the South.ConclusionShort-term exposure to PM2.5 increases the risk for hospital admission for cardiovascular and respiratory diseases.},
    issn = {0098-7484},
    doi = {10.1001/jama.295.10.1127},
    url = {https://doi.org/10.1001/jama.295.10.1127},
    eprint = {https://jamanetwork.com/journals/jama/articlepdf/202503/joc60023.pdf},
}

@ARTICLE{Jordan_2023,
  title     = "Investigating sleep, stress, and mood dynamics via temporal
               network analysis",
  author    = "Jordan, D Gage and Slavish, Danica C and Dietch, Jessee and
               Messman, Brett and Ruggero, Camilo and Kelly, Kimberly and
               Taylor, Daniel J",
  abstract  = "OBJECTIVE/BACKGROUND: Prior research has emphasized the
               bidirectional relationships between sleep, stress, and affective
               states, such as depression. Given the inherent variability and
               fluctuations associated with sleep, assessing how sleep and
               affective variables function within a dynamic system may help
               further uncover possible causes and consequences of sleep
               disturbances, as well as find candidate targets for
               intervention. To this end, we examined dynamic relationships
               between self-reported stress, depressed mood, and
               clinically-relevant sleep parameters via temporal network
               analysis. METHODS: Participants were 401 nurses (92\% female,
               78\% White, Mage = 39.47 years) who completed 14 days of sleep
               diaries incorporating self-reported stress and depression, as
               well as total sleep time, sleep efficiency, sleep onset latency,
               and wake after sleep onset. RESULTS AND CONCLUSIONS: Overall,
               total sleep time emerged as a highly influential variable in the
               context of ``outstrength centrality,'' meaning total sleep time
               had numerous outward connections with other variables (e.g.,
               stress and sleep efficiency). The high outstrength centrality of
               total sleep time suggests this variable is a source of
               activation within this dynamic system. Conversely, stress showed
               high ``instrength centrality,'' suggesting this variable was
               highly impacted by other variables in the system, such as
               depressed mood and sleep efficiency. These findings emphasize
               the importance of assessing unfolding sleep processes within a
               naturalistic setting, and implicate the role of total sleep time
               in fueling depressed mood and stress. Discussion emphasizes
               implications of these results for understanding the connections
               between sleep, stress, and depression as well as clinical
               relevance of these findings.",
  journal   = "Sleep Med.",
  publisher = "Elsevier BV",
  volume    =  103,
  pages     = "1--11",
  month     =  mar,
  year      =  2023,
  keywords  = "Complexity; Depression; Network analysis; Sleep diaries; Stress",
  language  = "en"
}

@article{Anand_2023, 
title={Causal Effect Identification in Cluster DAGs}, 
volume={37}, 
DOI={10.1609/aaai.v37i10.26435}, 
number={10}, 
journal={Proceedings of the AAAI Conference on Artificial Intelligence}, 
author={Anand, Tara V. and Ribeiro, Adele H. and Tian, Jin and Bareinboim, Elias}, 
year={2023}, 
month={Jun.}, 
pages={12172-12179}
}

@article{Beckers_2019, title={Abstracting Causal Models}, volume={33}, DOI={10.1609/aaai.v33i01.33012678}, number={01}, journal={Proceedings of the AAAI Conference on Artificial Intelligence}, author={Beckers, Sander and Halpern, Joseph Y.}, year={2019}, month={Jul.}, pages={2678-2685} }

@article {Runge_2019,
	author = {Runge, Jakob and Nowack, Peer and Kretschmer, Marlene and Flaxman, Seth and Sejdinovic, Dino},
	title = {Detecting and quantifying causal associations in large nonlinear time series datasets},
	volume = {5},
	number = {11},
	elocation-id = {eaau4996},
	year = {2019},
	doi = {10.1126/sciadv.aau4996},
	publisher = {American Association for the Advancement of Science},
	journal = {Science Advances}
}

@book{Hernan_2023,
	title={Causal Inference: What If},
	author={Hernan, M.A. and Robins, J.M.},
	isbn={9781420076165},
	lccn={2022050839},
	series={Chapman \& Hall/CRC Monographs on Statistics \& Applied Probab},
	year={2023},
	publisher={CRC Press}
}

@article{Glemain_2025,
  TITLE = {{Revisiting the link between COVID-19 incidence and infection fatality rate during the first pandemic wave}},
  AUTHOR = {Glemain, Benjamin and Assaad, Charles and Ghosn, Walid and Moulaire, Paul and de Lamballerie, Xavier and Zins, Marie and Severi, Gianluca and Touvier, Mathilde and Deleuze, Jean-Fran{\c c}ois and Lapidus, Nathana{\"e}l and Carrat, Fabrice},
  JOURNAL = {{Scientific Reports}},
  PUBLISHER = {{Nature Publishing Group}},
  VOLUME = {15},
  NUMBER = {1},
  PAGES = {15638},
  YEAR = {2025},
  MONTH = May,
  DOI = {10.1038/s41598-025-99078-6},
}

@article{Hunermund_2023,
    author = {Hünermund, Paul and Bareinboim, Elias},
    title = {Causal inference and data fusion in econometrics},
    journal = {The Econometrics Journal},
    volume = {28},
    number = {1},
    pages = {41-82},
    year = {2023},
    month = {03},
    issn = {1368-4221},
    doi = {10.1093/ectj/utad008},
}

\newpage
\appendix
\section{Appendix}

\theoremIdentifiabilityTotalEffect*

\begin{proof}
    This theorem is a reformulation of Theorem 2 - Version 2 of \cite{Assaad_2024} which gives necessary and sufficient conditions for identification of a micro causal effect from a SCG.
    Firstly, they show that if $X\notin\ancestors{Y}{\SCG}$ then it is identifiable since $\probac{y_t}{\interv{x_{t-\gamma}}}=\proba{y_t}$.
    Secondly, suppose $X\in\ancestors{Y}{\SCG}$. If Condition 1 of Theorem 2 - Version 2 of \cite{Assaad_2024} is satisfied, then the absence of $\sigma$-active backdoor path in which every intermediate node is a descendant of $X$ together with the fact that $X\in\ancestors{Y}{\SCG}$ forbids the existence of a directed path from $Y$ to $X$ and thus guarantees $\cycles{X}{\SCG}^>=\emptyset$ or in other words $\scc{X}{\SCG}=\{X\}$. This corresponds to Condition~\ref{cond1} in Theorem~\ref{theorem:identifiability_total_effect_SCG}. Condition~\ref{cond1} in Theorem~\ref{theorem:identifiability_total_effect_SCG} also imply Condition 1 of Theorem 2 - Version 2 of \cite{Assaad_2024} as the absence of cycles other than self-loops on $X$ guarantees the absence of cycles containing $X$ and $Y$ and also guarantees the absence of a backdoor path in which the parent of $X$ is a descendant of $X$.
    Furthermore, if Condition 2 of Theorem 2 - Version 2 of \cite{Assaad_2024} is satisfied, then clearly Condition~\ref{cond2} of Theorem~\ref{theorem:identifiability_total_effect_SCG} is satisfied as having $V\in\ancestors{Y}{\SCG\backslash\{Y\}}\cap\scc{X}{\SCG}$ would force the existence of the $\sigma$-active backdoor path $\langle X \leftarrow \cdots \leftarrow V \rightarrow \cdots \rightarrow Y\rangle$. Condition~\ref{cond2} in Theorem~\ref{theorem:identifiability_total_effect_SCG} also implies Condition 2 of Theorem 2 - Version 2 of \cite{Assaad_2024} as the existence of a $\sigma$-active backdoor path $\pi=\langle X=V^0 \leftarrow V^1 \cdots V^n = Y\rangle$ in which every intermediate node is a descendant of $X$ together with the fact that $X\in\ancestors{Y}{\SCG}$ implies that either $Y\in\scc{X}{\SCG}$ or $\exists 1 \leq i \leq n-1$ such that $\langle V^{i-1} \leftarrow V^i \rightarrow V^{i+1}\rangle \subseteq \pi$ and $V^i \in \ancestors{Y}{\SCG} \cap \scc{X}{\SCG}$ which both contradict Condition~\ref{cond2} of Theorem~\ref{theorem:identifiability_total_effect_SCG}.
    Lastly, Condition 3 of Theorem 2 - Version 2 of \cite{Assaad_2024} is clearly equivalent to Condition~\ref{cond3} of Theorem~\ref{theorem:identifiability_total_effect_SCG}.

    In conclusion Theorem~\ref{theorem:identifiability_total_effect_SCG} and Theorem 2 of \cite{Assaad_2024} are equivalent.
\end{proof}

\theoremSCGbackdoor*

\begin{proof}
    Firstly, notice that $X\notin \ancestors{Y}{\SCG}$ and $\posdescendants{X_{t-\gamma}}{\SCG}\cap\mathbb{Z}=\emptyset$ then in every compatible FT-DAG $\FTCG$ one has $X_{t-\gamma}\notin\ancestors{Y_t}{\FTCG}$ and thus for every $\mathbb{Z}$ such that $\mathbb{Z}\cap\descendants{X_{t-\gamma}}{\FTCG}=\emptyset$, one has $\probac{y_t}{\interv{x_{t-\gamma}}} = \proba{y_t} = \sum_{\mathbb{z}}\probac{y_t}{\mathbb{z}}\proba{\mathbb{z}}$ using rule 2 of do-calculus and the law of total probabilities.
    
    Then assume that $X\in \ancestors{Y}{\SCG}$ and $\posdescendants{X_{t-\gamma}}{\SCG}\cap\mathbb{Z}=\emptyset$.
    \begin{itemize}
        \item Suppose Condition~\ref{cond1} is verified
        \begin{itemize}
            \item If Condition~\ref{adjcond1a} is verified then in every compatible FT-DAG $\FTCG$, one has $\parents{X_{t-\gamma}}{\FTCG} \subseteq \mathbb{Z}$ and $\descendants{X_{t-\gamma}}{\FTCG} \cap \mathbb{Z} = \emptyset$ so $\mathbb{Z}$ is a valid backdoor set and thus Equation~\ref{eq:adjustment_formula} is correct.
            \item If Condition~\ref{adjcond1b} is verified then in every compatible FT-DAG $\FTCG$, and for every backdoor path $\pi$ from $X_{t-\gamma}$ to $Y_t$ one can consider the following:
            \begin{itemize}
                \item Either the last edge of $\pi$ goes out of $Y_t$, then because $\scc{X}{\SCG} = \emptyset$ and $X\in \ancestors{Y}{\SCG}$, one knows that there exists a collider in $\pi$ which is in $\descendants{Y_t}{\SCG}\subseteq \posdescendants{X_{t-\gamma}}{\SCG}$. In this case, $\pi$ is blocked because of this collider.
                \item Either there exists $V_{t_V}$ in $\pi$ such that $t_V > t$, then there exists a collider $C_{t_C}$ with $t_C > t$. In this case, $\pi$ is blocked because of this collider.
                \item Either the last element of $\pi$ which is not in $\temp{\ecn{X}{Y}{\SCG}}{t}{t-\gamma}$ is a children of $\temp{\ecn{X}{Y}{\SCG}}{t}{t-\gamma}$ then because $\scc{X}{\SCG}=\{X\}$ one knows that $\ecn{X}{Y}{\SCG}\cap\ancestors{X}{\SCG} = \emptyset$ so there exists a collider in $\pi$ which is in $\descendants{\temp{\ecn{X}{Y}{\SCG}}{t}{t-\gamma}}{\FTCG} \subseteq \posdescendants{X_{t-\gamma}}{\SCG}$. In this case, $\pi$ is blocked because of this collider.
                \item Either the last element of $\pi$ which is not in $\temp{\ecn{X}{Y}{\SCG}}{t}{t-\gamma}$ is a parent of $\temp{\ecn{X}{Y}{\SCG}}{t}{t-\gamma}$ then one knows that this element is the middle node of either a fork or a chain in $\pi$ and that it is in $\parents{\temp{\ecn{X}{Y}{\SCG}}{t}{t-\gamma}}{\FTCG} \subseteq \mathbb{P} = \temp{\parents{\ecn{X}{Y}{\SCG}\backslash\{X\}}{\SCG}}{t}{t-\gamma-\gamma_{max}}\backslash\posdescendants{X_{t-\gamma}}{\SCG}$. In this case, $\pi$ is blocked because of this fork or chain.
                \item Note that the condition "$\cycles{X}{\SCG} = \emptyset$" guarantees the existence of an element of $\pi$ which is not in $\temp{\ecn{X}{Y}{\SCG}}{t}{t-\gamma}$.
            \end{itemize}
            \item If Condition~\ref{adjcond1c} is verified then in every compatible FT-DAG $\FTCG$, and for every backdoor path $\pi$ from $X_{t-\gamma}$ to $Y_t$ one can consider the following:
            \begin{itemize}
                \item Either the last edge of $\pi$ goes out of $Y_t$, then because $\scc{X}{\SCG} = \emptyset$ and $X\in \ancestors{Y}{\SCG}$, one knows that there exists a collider in $\pi$ which is in $\descendants{Y_t}{\SCG}\subseteq \posdescendants{X_{t-\gamma}}{\SCG}$. In this case, $\pi$ is blocked because of this collider.
                \item Either there exists $V_{t_V}$ in $\pi$ such that $t_V > t$, then there exists a collider $C_{t_C}$ with $t_C > t$. In this case, $\pi$ is blocked because of this collider.
                \item Either the last element of $\pi$ which is not in $\temp{\cn{X}{Y}{\SCG}}{t}{t-\gamma}$ is a children of $\temp{\cn{X}{Y}{\SCG}}{t}{t-\gamma}$ then because $\scc{X}{\SCG}=\{X\}$ one knows that $\cn{X}{Y}{\SCG}\cap\ancestors{X}{\SCG} = \emptyset$ so there exists a collider in $\pi$ which is in $\descendants{\temp{\cn{X}{Y}{\SCG}}{t}{t-\gamma}}{\FTCG} \subseteq \posdescendants{X_{t-\gamma}}{\SCG}$. In this case, $\pi$ is blocked because of this collider.
                \item Either the last element of $\pi$ which is not in $\temp{\cn{X}{Y}{\SCG}}{t}{t-\gamma}$ is a parent of $\temp{\cn{X}{Y}{\SCG}}{t}{t-\gamma}$ then one knows that this element is the middle node of either a fork or a chain in $\pi$ and that it is in $\parents{\temp{\cn{X}{Y}{\SCG}}{t}{t-\gamma}}{\FTCG}$. Either this node is not in $\posdescendants{X_{t-\gamma}}{\SCG}$ and $\pi$ is blocked because of this fork or chain. Either this node is in $\posdescendants{X_{t-\gamma}}{\SCG}$ in which case one should consider the last element of $\pi$ which is not in $\temp{\ecn{X}{Y}{\SCG}}{t}{t-\gamma}$: if this element is a children of $\temp{\ecn{X}{Y}{\SCG}}{t}{t-\gamma}$ then $\pi$ is blocked by a collider in $\posdescendants{X_{t-\gamma}}{\SCG}$ and if this element is a parent of $\temp{\ecn{X}{Y}{\SCG}}{t}{t-\gamma}$ then either $\pi$ is blocked by $\mathbb{Z}_2$ or we adjust on this element thus blocking $\pi$.
                \item Note that the condition "$\gamma=0$" guarantees the existence of an element of $\pi$ which is not in $\temp{\cn{X}{Y}{\SCG}}{t}{t-\gamma}$.
            \end{itemize} 
            \item If Condition~\ref{adjcond1d} is verified then a similar reasoning as for Condition~\ref{adjcond1b} can be applied. The only difference lies in the fact that if the last element of $\pi$ which is not in $\temp{\ecn{X}{Y}{\SCG}}{t}{t-\gamma}$ and is a parent of $\ecn{X}{Y}{\SCG}$ is $X_{t'}$ for $t-\gamma < t'$ then $X_{t'}\in \posdescendants{X_{t-\gamma}}{\SCG}$ and $\pi$ will not be blocked, which is why one needs to adjust on $\temp{\parents{X}{\SCG}}{t-\gamma-\gamma_{max}+1}{t}$ as well in order to block the path.
            \end{itemize}
        \item  Suppose Condition~\ref{cond2} is verified
            \begin{itemize}
            \item If Condition~\ref{adjcond2a} is verified then in every compatible FT-DAG $\FTCG$ and for every backdoor path, the first variable which is not in  $\posdescendants{X_{t-\gamma}}{\SCG}$ is in $\mathbb{Z}$ thus Equation~\ref{eq:adjustment_formula} is correct.
            \item If Condition~\ref{adjcond2b} is verified then in every compatible FT-DAG $\FTCG$, and for every backdoor path $\pi$ from $X_{t-\gamma}$ to $Y_t$ one can consider the following:
            \begin{itemize}
                \item Either the last edge of $\pi$ goes out of $Y_t$, then because $\scc{X}{\SCG} = \emptyset$ and $X\in \ancestors{Y}{\SCG}$, one knows that there exists a collider in $\pi$ which is in $\descendants{Y_t}{\SCG}\subseteq \posdescendants{X_{t-\gamma}}{\SCG}$. In this case, $\pi$ is blocked because of this collider.
                \item Either there exists $V_{t_V}$ in $\pi$ such that $t_V > t$, then there exists a collider $C_{t_C}$ with $t_C > t$. In this case, $\pi$ is blocked because of this collider.
                \item Either the last element of $\pi$ which is not in $\temp{\cn{X}{Y}{\SCG}}{t}{t-\gamma}$ is a children of $\temp{\cn{X}{Y}{\SCG}}{t}{t-\gamma}$ then because $\scc{X}{\SCG}=\{X\}$ one knows that $\cn{X}{Y}{\SCG}\cap\ancestors{X}{\SCG} = \emptyset$ so there exists a collider in $\pi$ which is in $\descendants{\temp{\cn{X}{Y}{\SCG}}{t}{t-\gamma}}{\FTCG} \subseteq \posdescendants{X_{t-\gamma}}{\SCG}$. In this case, $\pi$ is blocked because of this collider.
                \item Either the last element of $\pi$ which is not in $\temp{\cn{X}{Y}{\SCG}}{t}{t-\gamma}$ is a parent of $\temp{\cn{X}{Y}{\SCG}}{t}{t-\gamma}$ then one knows that this element is the middle node of either a fork or a chain in $\pi$ and that it is in $\parents{\temp{\cn{X}{Y}{\SCG}}{t}{t-\gamma}}{\FTCG}$. Either this node is not in $\posdescendants{X_{t-\gamma}}{\SCG}$ and $\pi$ is blocked because of this fork or chain. Either this node is in $\posdescendants{X_{t-\gamma}}{\SCG}$ in which case one should consider the last element of $\pi$ which is not in $\temp{\ecn{X}{Y}{\SCG}}{t}{t-\gamma}$: if this element is a children of $\temp{\ecn{X}{Y}{\SCG}}{t}{t-\gamma}$ then $\pi$ is blocked by a collider in $\posdescendants{X_{t-\gamma}}{\SCG}$ and if this element is a parent of $\temp{\ecn{X}{Y}{\SCG}}{t}{t-\gamma}$ then either $\pi$ is blocked by $\mathbb{Z}_2$ or we adjust on this element thus blocking $\pi$.
                \item Note that the condition "$\gamma=0$" guarantees the existence of an element of $\pi$ which is not in $\temp{\cn{X}{Y}{\SCG}}{t}{t-\gamma}$.
            \end{itemize} 
            \end{itemize}
        \item  Suppose Condition~\ref{cond3} is verified, let $\pi = \langle V^0_{t_0} \cdots V^n_{t_n}\rangle$ be a backdoor path in a compatible FT-DAG. If $\langle V^{n-1}_{t_{n-1}} \leftarrow V^n_{t_n}\rangle \subseteq \pi$ then $\pi$ contains a collider which is a descendant of $Y_t$ and so $\pi$ is blocked by this collider.
        If $\langle V^{n-1}_{t_{n-1}} \rightarrow V^n_{t_n}\rangle \subseteq \pi$ then either $V^{n-1}_{t_{n-1}} \in \mathbb{Z}$ and $\pi$ is blocked, or $V^{n-1}_{t_{n-1}} \in \mathbb{D}$ (\ie, $V^{n-1}_{t_{n-1}} = X_t$).
        In that case, one can consider the edge between $V^{n-2}_{t_{n-2}}$: either $\langle V^{n-2}_{t_{n-2}} \leftarrow V^{n-1}_{t_{n-1}} \rangle \subseteq \pi$ and thus $\pi$ is blocked by a collider which is a descendant of $X_t$, either $\langle V^{n-2}_{t_{n-2}} \rightarrow V^{n-1}_{t_{n-1}}\rangle \subseteq \pi$ then either $V^{n-2}_{t_{n-2}} \in \mathbb{Z}$ and $\pi$ is blocked, or $V^{n-2}_{t_{n-2}} \in \mathbb{D}$ (\ie, $V^{n-2}_{t_{n-2}} = Y_{t-\gamma}$).
        Notice that the path $\pi=\langle X_{t-\gamma} \leftarrow Y_{t-\gamma} \rightarrow X_t \rightarrow Y_t \rangle$ cannot exist because of causal stationarity, the edges $X_{t-\gamma} \leftarrow Y_{t-\gamma}$ and $X_t \rightarrow Y_t$ cannot both exist in the same FT-DAG.
        If $\langle V^{n-3}_{t_{n-3}} \leftarrow V^{n-2}_{t_{n-2}} \rangle \subseteq \pi$ then $\pi$ is blocked by a collider which is a descendant of $Y_{t-\gamma}$. Otherwise, if$\langle V^{n-3}_{t_{n-3}} \rightarrow V^{n-2}_{t_{n-2}} \rangle \subseteq \pi$ and $\mathbb{P}^{all}_Y \subseteq \mathbb{Z}$ then $V^{n-3}_{t_{n-3}} \in \mathbb{P}_Y \cup \mathbb{P}^{all}_Y \subseteq \mathbb{Z}$ blocks $\pi$.
        Lastly, if $\mathbb{P}^{all}_X \subseteq \mathbb{Z}$ then $V^1_{t_1}\in\mathbb{P}_X \cup \mathbb{P}^{all}_X \subseteq \mathbb{Z}$  blocks $\pi$ except if $V^1_{t_1} \in \mathbb{D}$ which is only possible if $V^1_{t_1} = Y_{t-\gamma}$ which we have shown before to be impossible due to causal stationarity.
    \end{itemize}
\end{proof}

\propositionOptimalIsGivenBySCGBackDoor*
\begin{proof}
    Let $\mathbb{D}=\posdescendants{X_{t-\gamma}}{\SCG}$.
    \begin{itemize}
        \item If Condition~\ref{cond1} is verified and $\cycles{X}{\SCG} = \emptyset$ and $\gamma>0$, then let $\mathbb{P} = \temp{\parents{\ecn{X}{Y}{\SCG}}{\SCG}}{t}{t-\gamma-\gamma_{max}}$ and $\qopt{X_{t-\gamma}}{Y_t}{\SCG}=\mathbb{P}\backslash \mathbb{D}$. $\qopt{X_{t-\gamma}}{Y_t}{\SCG}$ verifies item~\ref{adjcond1b} of Definition~\ref{def:backdoor}.
        \item If Condition~\ref{cond1} is verified and $\gamma=0$, then let $\mathbb{P} = \temp{\parents{\cn{X}{Y}{\SCG}}{\SCG}}{t}{t-\gamma-\gamma_{max}} \cup \temp{\parents{\ecnbd{X}{Y}{\SCG}{\emptyset}}{\SCG}}{t}{t-\gamma-\gamma_{max}}$ and $\qopt{X_{t-\gamma}}{Y_t}{\SCG}=\mathbb{P}\backslash \mathbb{D}$. $\qopt{X_{t-\gamma}}{Y_t}{\SCG}$ verifies item~\ref{adjcond1c} of Definition~\ref{def:backdoor}.
        \item If Condition~\ref{cond1} is verified and $\cycles{X}{\SCG} \neq \emptyset$ and $\gamma>0$, then let $\mathbb{P} = \temp{\parents{X}{\SCG}}{t}{t-\gamma-\gamma_{max}+1} \cup \temp{\parents{\ecn{X}{Y}{\SCG}}{\SCG}}{t}{t-\gamma-\gamma_{max}}$ and $\qopt{X_{t-\gamma}}{Y_t}{\SCG}=\mathbb{P}\backslash \mathbb{D}$. $\qopt{X_{t-\gamma}}{Y_t}{\SCG}$ verifies item~\ref{adjcond1d} of Definition~\ref{def:backdoor}.
        \item If Condition~\ref{cond2} is verified, then let $\mathbb{P} = \temp{\parents{\cn{X}{Y}{\SCG}}{\SCG}}{t}{t-\gamma-\gamma_{max}} \cup \temp{\parents{\ecnbd{X}{Y}{\SCG}{\emptyset}}{\SCG}}{t}{t-\gamma-\gamma_{max}}$ and $\qopt{X_{t-\gamma}}{Y_t}{\SCG}=\mathbb{P}\backslash \mathbb{D}$. $\qopt{X_{t-\gamma}}{Y_t}{\SCG}$ verifies item~\ref{adjcond2b} of Definition~\ref{def:backdoor}.
        \item If Condition~\ref{cond3} is verified, then let $\mathbb{P} = \temp{\parents{Y}{\SCG}}{t}{t-\gamma-\gamma_{max}} \cup \temp{\parents{X}{\SCG}}{t}{t-\gamma_{max}}$ and $\qopt{X_{t-\gamma}}{Y_t}{\SCG}=\mathbb{P}\backslash \mathbb{D}$. $\qopt{X_{t-\gamma}}{Y_t}{\SCG}$ verifies Definition~\ref{def:backdoor}.
    \end{itemize}
\end{proof}

\propositionOptimalInOneFTDAG*
\begin{proof}
    Build $\FTCG$ the following way:
    \begin{itemize}
        \item $\forall \langle U\rightarrow V \rangle \in \SCG$ and $\forall t-\gamma_{max} \leq t_U < t$ add the edge $\langle U_{t_U} \rightarrow V_t \rangle$ to $\FTCG$.
        \item For all directed path from $X$ to $Y$ $\pi=\langle V^0 \rightarrow \cdots \rightarrow V^n\rangle$ and $\forall 0<i<n$ iteratively add the edges $\forall t,~\langle V^i_t \rightarrow V^{i+1}_t\rangle$ to $\FTCG$ if it does not create a cycle.
        \item $\forall \langle U_t \rightarrow V_t \rangle \in \FTCG$ and $\forall P \in \parents{V}{\SCG}$ add the edges $\forall t,~\langle P_t \rightarrow V_t\rangle$ to $\FTCG$ if it does not create a cycle.
        \item Repeat this last item until no additional edge can be added.
    \end{itemize}
    With this construction it is clear that $\qopt{X_{t-\gamma}}{Y_t}{\SCG} = \opt{X_{t-\gamma}}{Y_t}{\FTCG}$ as every potential parent of every extended causal node (or causal node if $\gamma=0$) in the SCG is a parent of a causal node in the FT-DAG. Thus the only elements of $\qopt{X_{t-\gamma}}{Y_t}{\SCG}$ which are not in $\opt{X_{t-\gamma}}{Y_t}{\FTCG}$ are the variables in $\temp{\parents{\ecnbd{X}{Y}{\SCG}{\emptyset}}{\SCG}}{t}{t-\gamma-\gamma_{max}}$. However, $\ecnbd{X}{Y}{\SCG}{\emptyset}$ is always empty except in cases such as the one of Figure~\ref{fig:ex_qopt_diff_opt} in which there does not exist any compatible FT-DAG $\FTCG$ such that $\opt{X_{t-\gamma}}{Y_t}{\FTCG}$ is a valid adjustment set in every compatible FT-DAG.
\end{proof}

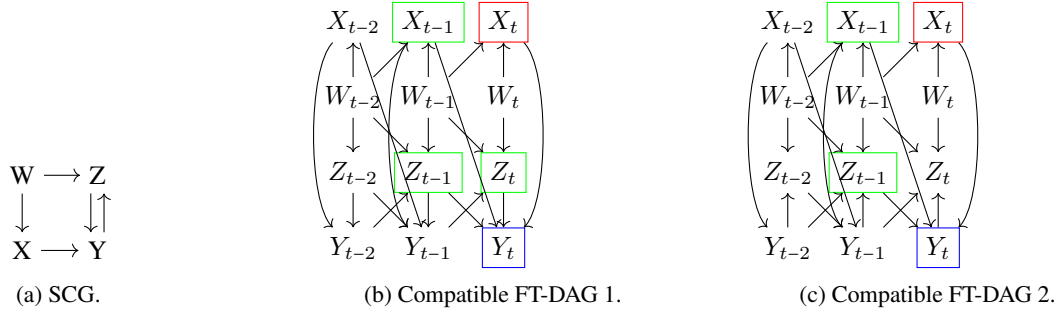
\begin{figure}[t]
\centering
\begin{subfigure}[t]{.3\linewidth}
\centering
\begin{tikzpicture}
    \node (X) {X};
    \node (Y) [right of=X] {Y};
    \node (W) [above of=X] {W};
    \node (Z) [above of=Y] {Z};
    
    \draw [->] (X) -- (Y);
    \draw [->] (W) -- (X);
    \draw [->] (W) -- (Z);
    \begin{scope}[transform canvas={xshift=-.25em}]
    \draw [->] (Z) -- (Y);
    \end{scope}
    \begin{scope}[transform canvas={xshift=.25em}]
    \draw [<-] (Z) -- (Y);
    \end{scope}
\end{tikzpicture}
\caption{SCG.}
\label{fig:ex_qopt_diff_opt:SCG}
\end{subfigure}
\hfill
\begin{subfigure}[t]{.3\linewidth}
\begin{tikzpicture}
  \node (X2) {$X_{t-2}$};
  \node [draw = green] (X1) [right of= X2] {$X_{t-1}$};
  \node [draw=red] (X0) [right of= X1] {$X_{t}$};

  \node (W2) [below of= X2] {$W_{t-2}$};
  \node (W1) [right of= W2] {$W_{t-1}$};
  \node (W0) [right of= W1] {$W_{t}$};

  \node (Z2) [below of= W2] {$Z_{t-2}$};
  \node [draw = green] (Z1) [right of= Z2] {$Z_{t-1}$};
  \node [draw = green] (Z0) [right of= Z1] {$Z_{t}$};
  
  \node (Y2) [below of= Z2] {$Y_{t-2}$};
  \node (Y1) [right of= Y2] {$Y_{t-1}$};
  \node [draw = blue] (Y0) [right of= Y1] {$Y_{t}$};
    
    \draw [->] (W2) -- (X2);
    \draw [->] (W2) -- (X1);
    \draw [->] (W1) -- (X1);
    \draw [->] (W1) -- (X0);
    \draw [->] (W0) -- (X0);
    
    \draw [->] (W2) -- (Z2);
    \draw [->] (W2) -- (Z1);
    \draw [->] (W1) -- (Z1);
    \draw [->] (W1) -- (Z0);    
    \draw [->] (W0) -- (Z0);

    \draw [->, looseness=.5, out=-90-45, in=90+45] (X2) to (Y2);
    \draw [->] (X2) -- (Y1);
    \draw [->, looseness=.5, out=-90-45, in=90+45] (X1) to (Y1);
    \draw [->] (X1) -- (Y0);    
    \draw [->, looseness=.5, out=-90+45, in=90-45] (X0) to (Y0);

    \draw [->] (Z2) -- (Y1);
    \draw [->] (Z1) -- (Y0);
    \draw [->] (Y2) -- (Z1);
    \draw [->] (Y1) -- (Z0);

    \draw [->] (Z2) -- (Y2);
    \draw [->] (Z1) -- (Y1);
    \draw [->] (Z0) -- (Y0);
\end{tikzpicture}
\caption{Compatible FT-DAG 1.}
\label{fig:ex_qopt_diff_opt:DAG1}
\end{subfigure}
\hfill
\begin{subfigure}[t]{.3\linewidth}
\begin{tikzpicture}
  \node (X2) {$X_{t-2}$};
  \node [draw = green] (X1) [right of= X2] {$X_{t-1}$};
  \node [draw=red] (X0) [right of= X1] {$X_{t}$};

  \node (W2) [below of= X2] {$W_{t-2}$};
  \node (W1) [right of= W2] {$W_{t-1}$};
  \node (W0) [right of= W1] {$W_{t}$};

  \node (Z2) [below of= W2] {$Z_{t-2}$};
  \node [draw = green] (Z1) [right of= Z2] {$Z_{t-1}$};
  \node (Z0) [right of= Z1] {$Z_{t}$};
  
  \node (Y2) [below of= Z2] {$Y_{t-2}$};
  \node (Y1) [right of= Y2] {$Y_{t-1}$};
  \node [draw = blue] (Y0) [right of= Y1] {$Y_{t}$};
    
    \draw [->] (W2) -- (X2);
    \draw [->] (W2) -- (X1);
    \draw [->] (W1) -- (X1);
    \draw [->] (W1) -- (X0);
    \draw [->] (W0) -- (X0);
    
    \draw [->] (W2) -- (Z2);
    \draw [->] (W2) -- (Z1);
    \draw [->] (W1) -- (Z1);
    \draw [->] (W1) -- (Z0);    
    \draw [->] (W0) -- (Z0);

    \draw [->, looseness=.5, out=-90-45, in=90+45] (X2) to (Y2);
    \draw [->] (X2) -- (Y1);
    \draw [->, looseness=.5, out=-90-45, in=90+45] (X1) to (Y1);
    \draw [->] (X1) -- (Y0);    
    \draw [->, looseness=.5, out=-90+45, in=90-45] (X0) to (Y0);

    \draw [->] (Z2) -- (Y1);
    \draw [->] (Z1) -- (Y0);
    \draw [->] (Y2) -- (Z1);
    \draw [->] (Y1) -- (Z0);

    \draw [->] (Y2) -- (Z2);
    \draw [->] (Y1) -- (Z1);
    \draw [->] (Y0) -- (Z0);
\end{tikzpicture}
\caption{Compatible FT-DAG 2.}
\label{fig:ex_qopt_diff_opt:DAG2}
\end{subfigure}
\caption{Example of a SCG such that for every FT-DAG $\FTCG$ there exists another FT-DAG $\FTCG'$ in which $\opt{X_{t-\gamma}}{Y_t}{\FTCG}$ is not a valid adjustment set. In red is the treatment variable, in blue the outcome variable and in green is the optimal set. Notice that in the first FT-DAG the backdoor path $\langle X_t \leftarrow W_t \rightarrow Z_t \rightarrow Y_t\rangle$ would not be blocked if one did not adjust on $Z_t$, and that in the second FT-DAG $Z_t$ is a descendant of $Y_t$ so one cannot adjust on it without introducing selection bias. Thus, each optimal adjustment set is invalid in the other compatible FT-DAG.}
\label{fig:ex_qopt_diff_opt}
\end{figure}

\propositionIncludeAllOptimals*
\begin{proof}
    Let $\FTCG$ be a FT-DAG.  If $\gamma= 0$ then $\cn{X_{t-\gamma}}{Y_t}{\FTCG} = \temp{\cn{X_{t-\gamma}}{Y_t}{\SCG}}{t}{t-\gamma}$ and if $\gamma> 0$ then $\cn{X_{t-\gamma}}{Y_t}{\FTCG} = \temp{\ecn{X_{t-\gamma}}{Y_t}{\SCG}}{t}{t-\gamma}$. Therefore, $\parents{\cn{X_{t-\gamma}}{Y_t}{\FTCG}}{\FTCG} \subseteq \temp{\parents{\cn{X_{t-\gamma}}{Y_t}{\SCG}}{\SCG}}{t}{t-\gamma-\gamma_{max}}$ when $\gamma =0$ and $\parents{\cn{X_{t-\gamma}}{Y_t}{\FTCG}}{\FTCG} \subseteq \temp{\parents{\ecn{X_{t-\gamma}}{Y_t}{\SCG}}{\SCG}}{t}{t-\gamma-\gamma_{max}}$ when $\gamma>0$. If Conditions~\ref{cond1} or~\ref{cond2} are satisfied, then $\temp{\parents{\ecn{X_{t-\gamma}}{Y_t}{\SCG}}{\SCG}}{t}{t-\gamma-\gamma_{max}} \subseteq \mathbb{P}$ where $\mathbb{P}$ is defined in Definition~\ref{def:q-opt-scg}. In these cases, we have $\opt{X_{t-\gamma}}{Y_t}{\FTCG} \backslash \posdescendants{X_{t-\gamma}}{\SCG}\subseteq\qopt{X_{t-\gamma}}{Y_{t}}{\SCG}$.
    Lastly, if Condition~\ref{cond3} is satisfied then $\cn{X_{t-\gamma}}{Y_t}{\FTCG} = \{Y_{t-1},X_t\}$ so it is clear that $\opt{X_{t-\gamma}}{Y_t}{\FTCG} \backslash \posdescendants{X_{t-\gamma}}{\SCG}\subseteq\qopt{X_{t-\gamma}}{Y_{t}}{\SCG}$.
\end{proof}

\corollaryUnionAllOptimals*
\begin{proof}
    It follows from Propositions~\ref{prop:OptimalInOneFTDAG} and \ref{prop:IncludeAllOptimals}.
\end{proof}

\corollaryQoptVsAs*
\begin{proof}
The proof follows directly from \cite[Theorem 1]{Henckel_2022}.

Suppose that $T=\mathbb{A}^2 \backslash \qopt{X_{t-\gamma}}{Y_t}{\mathcal{G}^s}$ and  $S = \qopt{X_{t-\gamma}}{Y_t}{\mathcal{G}^s} \backslash \mathbb{A}^2.$
By definition, $T$ consists of elements in $\mathbb{A}^2$ that are not in $\qopt{X_{t-\gamma}}{Y_t}{\mathcal{G}^s}$. Since $\qopt{X_{t-\gamma}}{Y_t}{\mathcal{G}^s}$ includes all parents of $Y_t$ (not descendants of $X_{t-\gamma}$) and all parents of causal nodes from $X_{t-\gamma}$ to $Y_t$ and does not include any descendant of $Y_t$, conditioning on $\qopt{X_{t-\gamma}}{Y_t}{\mathcal{G}^s}$ and $X_{t-\gamma}$ d-separates $Y_t$ from any variable except causal nodes from $X_{t-\gamma}$ to $Y_t$ and descendants of $Y_t$. Fortunately, $\mathbb{A}^2$ is a valid backdoor set so it does not contain any such set and $T$ does not either.
Therefore $Y_t$ is d-separated of $T$ given $\qopt{X_{t-\gamma}}{Y_t}{\mathcal{G}^s}$ and $X_{t-\gamma}$. 
In addition, $X_{t-\gamma}$ is trivially d-separated of $S$ given $\mathbb{A}^2$ as $\mathbb{A}^2$ does not contain any descendant of $X_{t-\gamma}$. 
Thus by \cite[Theorem 1]{Henckel_2022}, $\var{\Pr}{\estimator{X_{t-\gamma}}{Y_t}{\qopt{X_{t-\gamma}}{Y_t}{\mathcal{G}^s}}} \le \var{\Pr}{\estimator{X_{t-\gamma}}{Y_t}{\mathbb{A}^2}}$.
Same reasoning can be made to conclude by \cite[Theorem 1]{Henckel_2022} that $\var{\Pr}{\estimator{X_{t-\gamma}}{Y_t}{\qopt{X_{t-\gamma}}{Y_t}{\mathcal{G}^s}}} \le \var{\Pr}{\estimator{X_{t-\gamma}}{Y_t}{\mathbb{A}^1}}$.
\end{proof}

\end{document}